\newtheorem{theorem}{Theorem}
\newtheorem{case}{Case}
\newtheorem{claim}{Claim}
\newtheorem{lemma}{Lemma}
\newtheorem{proposition}{Proposition}
\newtheorem{observation}{Observation}
\newtheorem{fact}{Fact}
\newcommand{\eps}{\varepsilon}
\DeclareMathOperator*{\argmin}{arg\,min}
\DeclareMathOperator*{\argmax}{arg\,max}
\def\polylog{\operatorname{polylog}}
\title{Online $k$-Median with Consistent Clusters\footnote{Moseley and Newman are supported in part by  a Google Research Award, an Inform Research Award, a Carnegie Bosch Junior Faculty Chair, and NSF grants CCF-2121744 and  CCF-1845146. Pruhs is supported in part by NSF grants  CCF-1907673,  CCF-2036077, CCF-2209654 and an IBM Faculty Award.}}
\author{Benjamin Moseley\thanks{Tepper School of Business. Carnegie Mellon University.} \and Heather Newman\thanks{Department of Mathematical Sciences. Carnegie Mellon University.} \and Kirk Pruhs\thanks{Computer Science Department. University of Pittsburgh.}}
\begin{document}
\maketitle

\abstract{We consider the online $k$-median clustering problem in which $n$ points arrive online and must be irrevocably assigned to a cluster on arrival.  
As there are lower bound instances that  show that an online algorithm cannot achieve a competitive ratio that is a function of $n$ and $k$, we consider a beyond worst-case analysis model in which the algorithm is provided a priori with a predicted budget $B$ that upper bounds the optimal objective value. We give an algorithm that achieves a competitive ratio that is exponential in the the number $k$ of clusters, and show that the competitive ratio of every algorithm must be linear in $k$.
To the best of our knowledge this is the first
investigation in the literature that  
considers cluster consistency using competitive
analysis.
}

\section{Introduction}

Clustering problems, such as $k$-means clustering
and $k$-median clustering, are a 
classic genre of learning / data mining problems~\cite{bishop}. 
Typically the input consists of a collection $X=\{ x_1, \ldots, x_n\}$ 
of points in some metric space $\mathcal M$ (typically $\Re^d$ with the 1-norm or 2-norm)
and a positive integer $k$. 
The output for a \textbf{center-based} clustering problem is a collection
$c_1, \ldots, c_k$ of $k$
points from $X$, called centers, that succinctly summarize the data points.
The implicit cluster $C_i$
corresponding to the center $c_i$ is the collection of points in
$X$ whose closest center is $c_i$, that is $C_i =\{ x_j \mid \argmin_{h \in [k]} d(x_j, c_h) = i  \}$, where $d(\cdot, \cdot)$ is the distance function for the metric space.
The output for a \textbf{cluster-based} clustering problem is 
a partition $C_1, \ldots C_k$ of
$X$ into $k$ parts, called clusters.
The implicit center
of each cluster $C_i$ is then  $c_i = \argmin_{x_h \in C_i} \sum_{x_j \in C_i} d(x_h, x_j)$.
For both center-based clustering and cluster-based clustering, the objective is to minimize the cost
of the clustering.  This paper considers the $k$-median objective which is 
the aggregate  distance from each point to the center of its cluster, that is 
$\sum_{i=1}^k \sum_{x_j \in C_i} d(x_j, c_i)$.

Here we consider applications where the data points in $X$
arrive online over time. In a center-based clustering problem, the online algorithm
maintains a collection of centers.   In a cluster-based problem, 
the online algorithm needs to assign the data points  to a cluster
when they arrive, that is each point $x_j$ needs to be assigned 
a label $\ell_j \in [k]$ when $x_j$ arrives. 

An application of online clustering given by \cite{LSSTalk} is the task of
clustering news articles that arrive online.  For example at 
Yahoo news or Google news. We will refer to these outlets as the news providers. 
The  news provider selects some (approximately) fixed number $k$ of articles to feature
on the news homepage, and has a ``view complete coverage'' link 
next to each article to see all the news stories
on this topic. 
The problem of selecting  the best $k$ articles that summarize all current news articles is better modeled as a center-based clustering.
The  problem of partitioning all news  articles into clusters of similar articles is better modeled as a cluster-based clustering. 
Other applications can be
found in \cite{liberty2016algorithm,LattanziV17}, but we will use the
news story clustering application as our running canonical example.

A line of research~\cite{pmlr-v130-guo21a,FLN21,LattanziV17,liberty2016algorithm} within online clustering goes by moniker of consistent clustering.  Research on consistent clustering studies the trade-offs and relationship between the following objectives:
\begin{itemize}
    \item {\bf Maximizing the Quality of the Clustering:} One seeks
    the cost of the clustering to be small. 
    The most common metric to measure the quality of a
    solution is the ratio of the cost of this   solution to cost of the optimal solution.  
The most common metric to measure the quality of an online algorithm
is the competitive ratio, which is the maximum (over all inputs) of the ratio of the cost of
the online algorithm's solution to the optimal cost. 
    \item {\bf Maximizing Consistency:} Ideally one would like the centers in
    a center-based problem, or
   the clusters in a cluster-based problem, to be consistent over time.
 That is, they should change as little as possible. So for example,
the news provider doesn't want the clusters to completely change
every time a new news article is written.
\end{itemize}

\subsection{Prior Work on Consistent Clustering}

$k$-median clustering is 
NP-hard, but constant factor approximation polynomial-time algorithms
are known~\cite{CGST99,JV99,AGKMMP01,LS13,BPRST17}. 

All prior algorithmic research on consistent clustering that we are aware of~\cite{pmlr-v130-guo21a,FLN21,LattanziV17,liberty2016algorithm} is
center-based. That is, the  online algorithm  explicitly maintains a collection of
centers, and the clustering is implicit. That is, each point is assumed to be associated with the closest center, but there are no restrictions on how often points' associated centers can change.

 The first paper~\cite{liberty2016algorithm} in this line of research
 gave a lower bound that  
 showed that one can not simultaneously
 have both high quality and maximum consistency.
 That is, they showed that if a center cannot be changed once it is established, 
then there is no algorithm whose competitive ratio 
can be bounded by any function of $n$ and $k$.

Thus various ``beyond worst-case analysis'' (see~\cite{roughgarden_2021}) approaches have been used in the literature to 
attempt to circumvent
the obstacle presented by this lower bound. 
One approach is to use bi-criteria 
analysis or \emph{resource augmentation} analysis.  This analysis   allows the online algorithm to use more than $k$ centers,
and then compares the cost of the algorithm's clustering 
to the optimal one using $k$ 
centers~\cite{liberty2016algorithm}.
A second approach is to allow the algorithm \emph{recourse}, which 
in this setting means allowing the algorithm to  
change the centers (or clusters) a small number of times~\cite{LattanziV17,FLN21,pmlr-v130-guo21a}. 
Another approach is to consider learning augmented algorithms,
which assumes that the algorithm receives some advice/information a priori about the input.
For example, in the  news application, the news provider  presumably has prior data
that it could use to predict with some reasonable accuracy some properties of the input.

\cite{liberty2016algorithm}  gives a randomized algorithm for $k$-means clustering  and
analyzes this algorithm using \emph{resource augmentation} analysis. 
 \cite{liberty2016algorithm} shows that 
the expected number of clusters/centers used by their algorithm is 
$O( k \log n \log n \Delta )$ and at all times 
 the expected cost of the clustering using these centers
 is at most $O(\log n)$ times
the optimal cost using $k$ 
clusters. Here $\Delta$ is the aspect ratio of the data points,
which is the ratio between the distance between the furthest pair
of points and the distance between the closest pair of points.
The algorithm 
leverages 
a randomized online algorithm for facility location from \cite{Meyerson2001} 
to decide whether to create a new center at a newly arriving data point. 
Once a center is established, it is maintained throughout the course of the algorithm. 
Finally, \cite{liberty2016algorithm} gives a randomized algorithm
that requires a priori knowledge of
$n$ and a lower bound of the optimal with $k$ centers, 
and that maintains a  collection of $O(k \log n \log \alpha)$ centers in expectation
that has expected cost 
$O(1)$ times
the optimal cost with $k$
centers. Here $\alpha$ is the ratio between actual optimal cost with $k$
centers and the lower bound provided a priori to the algorithm.

\cite{LattanziV17} give  a randomized algorithm for $k$-means or $k$-median
clustering that uses \emph{recourse}.
The algorithm maintains the invariant that the cost of the current centers is
always $O(1)$-competitive with the optimal clustering of the data points seen \textit{to date}.
To maintain this invariant, the expected number of cluster center changes used by
the algorithm is 
 $O(k^2 \log^4 n \Delta)$. 
\cite{LattanziV17} show a similar lower bound, that is they show that every
algorithm requires 
 $\Omega(k \log_c  \frac{ \Delta}{k})$ center changes  to maintain $O(c)$-competitiveness.
Further, \cite{LattanziV17} show that it possible to maintain
$O(1)$-competitiveness with $O(k \log^2 n \Delta)$ center changes,
but this is just an existential result, and no algorithm to achieve
this was given. In followup paper \cite{FLN21} gave
a randomized algorithm that maintains $O(1)$-competitiveness with
$O(k \polylog( n \Delta)$ cluster changes. 
The results in \cite{LattanziV17} were extended to $k$-median
clustering with outliers (so one could opt to not
cluster a pre-specified number of points) in \cite{pmlr-v130-guo21a}.
\cite{LattanziV17}  also observes
that for $k$-center clustering an algorithm from \cite{CharikarCFM04} 
can be used to maintain
an $O(1)$-competitive clustering with $O(k \log  n \Delta)$ center changes. 

While not directly germane for the work in this paper, there is also research on online clustering in the streaming setting, where the emphasis
is more on the algorithm using a small amount of memory, or quickly responding to
the arrival of a new data point~(e.g. \cite{chanKDD,Cohen-AddadHPSS19}).

\subsection{Our Contribution}

Our initial research goal was to investigate consistent clustering for
cluster-based problems (recall that all the past algorithmic consistent
clustering publications that we are aware of focus on center-based clustering). 
We are interested in applications
where the focus is on explicitly maintaining consistent clusters (and not necessarily on maintaining consistent centers). 
The application where Google or Yahoo news is trying to maintain collections
of similar news articles is an example of such an application. 
Note that even the algorithms from \cite{liberty2016algorithm} that are
 perfectly consistent from a center perspective, in that once 
a center is established it persists until the end of the algorithm, 
are not necessarily consistent from a cluster perspective in that
a data point could  change clusters every time a  new center
is established. All one can say (at least naively) about 
the cluster consistency of the algorithms from \cite{liberty2016algorithm}
is that no data point changes clusters more than $O( k \log n \log n \Delta )$ times.

Specifically, our research goal is to determine whether some reasonable
competitiveness with respect to cost can be achieved if each data point has to be irrevocably 
assigned a cluster, or equivalently a label, when it arrives. 

The lower bound from \cite{liberty2016algorithm} implies  that
we will need to take some beyond worst-case analysis approach as otherwise there is no algorithm with bounded approximation ratio.
For reasons we explain in Section \ref{sect:overview},
we will take a learning augmented algorithms approach, and
assume that the algorithm is provided a priori with 
an estimated upper bound $B$ of the cost of the final optimal clustering.
Our algorithms will then try maintain
a clustering of low cost relative to $B$, \textit{not} the current optimal cost for
the points that have arrived to date. Thus we will say an algorithm
is $c$-competitive with respect to the budget if the algorithms cost
is at most $c \cdot B$ on instances where the optimal cost is at most $B$ (after all points arrive).

We  develop a lower bound in Section \ref{appendix: lower_bd}  that the competitiveness of any algorithm relative to the budget
must be $\Omega(k)$. In light of this observation, a natural question is whether
one can find  an algorithm whose competitiveness with respect to the budget is
$O(f(k))$ for some function $f$, that depends only on the number of centers $k$, and not on the number of points $n$.  Such an algorithm is constant competitive when $k$ is fixed.  We answer this in the affirmative by
giving a (polynomial-time deterministic) online  algorithm that is $O(k^5 3^k)$-competitive with respect to the budget.  Thus, we know the competitive ratio must depend on $k$ by the lower bound and we give an upper-bound that only depends on $k$.
Note that in most clustering applications while the number of points $n$ to be 
clustered may be large, the number $k$ of clusters is typically a modestly
small constant, say 10 for example~\cite{bishop}.

\section{Technical Overview}

\label{sect:overview}

To understand the motivation for learning-augmented approach, 
let us consider the lower bound instance from~\cite{liberty2016algorithm}.
It is sufficient to assume $k=2$. The first point $x_1$ arrives and is assigned some irrevocable label. Then assume the second data point $x_2$
arrives a unit distance from $x_1$. If the online algorithm assigns $x_2 $ the same label as  
$x_1$ then the cost of the algorithm's clustering is 1, and the optimal
cost is 0 (which is the cost if each of these data points were given a different labels). 
This results in the algorithm having in unbounded competitiveness. In contrast, if the algorithm gave
$x_2$ the a different label as $x_1$ then the third data point $x_3$ could arrive very far
away. In which case, the algorithm's clustering would necessarily have very high
cost (as $x_3$'s label would have to be either the same as $x_1$'s or the same as $x_2$'s). 
However, the optimal clustering would have cost 1 (by giving $x_1$ and $x_2$ the same label
and giving $x_3$ the remaining label).  Again, this results in competitiveness that can only be bounded by the aspect ratio of metric space (much larger than $n$ or $k$).

Intuitively, the dilemma faced by the online algorithm when $x_2$ arrives is
that it does not know whether the distance between $x_1$ and $x_2$ is small or large. 
Equipped with an estimate of the optimal cost $B$, the algorithm could 
 resolve this dilemma by giving $x_2$  a different label than $x_1$ if their distance is larger than $B$ and the same label otherwise. Thus we will assume
 that the algorithm is provided a priori 
 with a budge $B$ that is an upper bound on
 the optimal cost (and ideally is not too much larger than the optimal cost).

To better understand 
the design of our algorithm it is useful to understand some instances that illustrate some properties
that a competitive algorithm must have.

The first observation is that any reasonably competitive algorithm can 
never use $t+1$ labels if it is the case that the points to date could be
clustered with cost at most $B$ using at most $t$ labels.
If the algorithm ever allowed this to happen,  it could be that the
next $k-t$ data points could arrive very far from the previous data points,
and very far from each other. Thus after these data points arrive,
the algorithm's cost would
increase to something like the diameter of the metric space, while there
would still be a clustering of cost at most $B$, since the clustering that
used $t$ labels could be extended with no additional cost by giving each
new $k-t$ data points a new label.

In light of this last observation, a natural greedy algorithm would maintain the
invariant that the number of labels it uses is always equal to the minimum
number of labels necessary for a clustering of cost at most $B$,
and then give each new data point the label that minimizes the increase in cost.
To see why such an algorithm (and other similar algorithms) can have unbounded cost even when $k=2$, 
and the metric space is the real line,
consider the following instance (see Figure \ref{fig: greedy_tricky_ex}). 
Let $\alpha$ be an arbitrarily large positive integer. We construct an example in which the budget $B=2$.  
The first point arrives  at location $-2$. Say the algorithm gives this point 
the label blue. The next point arrives at location $1$. Now, we know that any offline clustering with cost at most $2$ must use at least 2 clusters. So the greedy algorithm would give
this new point a second label, say green, as that would minimize the increase in the objective. Then a total of $\alpha$ additional points arrive at location $1$. The algorithm labels these points green. Then $\alpha$ points arrive at the origin 0. It is still the case that only 2 clusters are required in order  to have cost at most $2$, since we may place the points at location $-2$ and the origin in one cluster, and
the  points at location 1 in the other cluster.  However the algorithm would assign each point arriving at the origin the label green, since this increases the objective by at most $1$ while assigning such a point the label blue increases the objective by $2$.   Yet, this results in a solution for the algorithm in which the contribution of green points towards the objective is $\alpha $. 

\begin{figure}[H]
\centering
\includegraphics[width=0.8\textwidth]{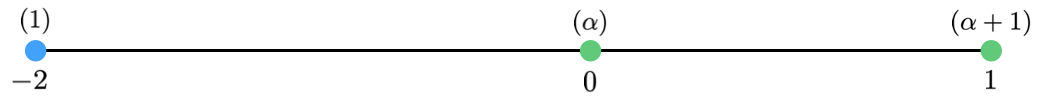}
\caption{An example in which the natural greedy algorithm incurs arbitrarily large cost.}
\label{fig: greedy_tricky_ex}
\end{figure}

Upon reflection of this lower bound instance for the natural greedy algorithm,
there appear to us to be two natural hypotheses as to the ``mistake'' that this algorithm is making,
and corresponding two natural paths towards a remedy:
\begin{itemize}
    \item One hypothesis is that greedy assignment is a mistake,
    and then the natural remedy would be to use some label assignment rule that is more sophisticated than greedy.
    \item
    Another hypothesis is that the algorithm was too hasty in using a new label.
    Thus the natural remedy  would then be to delay using a new label  until it is more clear as to
a region where arriving data points should be given this new label. 
Note in the example in Figure \ref{fig: greedy_tricky_ex} that if the algorithm had waited until some reasonable
number of data points had arrived at the origin before using the second label, then the algorithm might reasonably have been able to 
see that the right choice was to give the  remaining points arriving at the origin the second label of green.
\end{itemize}

Here we primarily adopt the second remedy/approach (although
we also consider an alternate greedy assignment policy). 
To apply this remedy we must address the following two questions:\begin{itemize}
    \item 
 Under what conditions can the algorithm
justify the use of an additional label, say going from $t-1$ labels to $t$ labels?
\item
And when this can be justified, how should we modify our prior partition of space into $t-1$ parts into a partition into $t$ parts?
\end{itemize}
At a high level our answer to the first question is that we do not use $t$ labels until there  exist $t$ well-separated points $x_{\alpha(1)}, \ldots, x_{\alpha(t)}$. 
We will say that a collection of points $x_{\alpha(1)}, \ldots, x_{\alpha(t)}$ from a collection $S$ of points is $\beta$-\textbf{well-separated} with respect to $w_S$ if
\begin{equation} \label{eq: well-sep-def}
\min\{w_S(x_{\alpha(i)}),w_S(x_{\alpha(j)})\}\cdot d(x_{\alpha(i)},x_{\alpha(j)}) \geq \beta \cdot B \hspace{0.3cm} \text{for all } i,j \in [t], i \neq j \tag{$\star$}
\end{equation}
Here $w_S(x_h)$ is what we call the \textbf{natural weight} of point $x_h$ in $S$, which
is the maximum number of points in $S$ whose distances to $x_h$ sum to at most $2B$. (If $S$ is clear, we may drop it from the notation).
Intuitively, if we have $t$ well-separated points then 
we then know that not only must any near optimal solution use $t$ labels, but such a solution cannot combine
the points near $x_{\alpha(i)}$ and the points near $x_{\alpha(j)}$ into a single cluster (assuming $i \ne j$).

To drill down a bit further, we maintain a collection of points $p_1, \ldots, p_t$ from the online stream $X$ which we call \textit{pivots}.
When a new point arrives, it is assigned the label $i$ of the pivot $p_i$ nearest to it (so we still maintain a form of greedy label assignment). 
While one might reasonably think that the pivots are intuitively centers for the clusters, 
this intuition is only partially correct. In fact there are scenarios where some pivots are
in fact poor centers for the corresponding cluster. What is critical is that the pivots are located
so as to guarantee that using greedy assignment in the future results in a relatively low cost assignment. 
In order for our cost analysis to be tractable  we want the algorithm to maintain the following invariants:\begin{itemize}
    \item Each pivot $p_i$  is reasonably located in a region where it would not be too costly to assign points arriving in this region the label $i$.   
    \item The pivots are well-separated (for some appropriate choice of $\beta$).\footnote{$\beta$ will have to both be initialized sufficiently large and also decrease as the number of pivots increases. We show this is necessary in Appendix \ref{appendix: decreasing_well_sep}.}
    \item There is no other point that is well-separated from the pivots.
    \item The locations of the pivots should not move very often.
\end{itemize}
Note that some of these invariants can intuitively be in opposition to each other, which seemingly requires
that the design of the algorithm
is a bit complicated, as there are several different scenarios where
maintaining this invariant requires different actions. 
But we will now try to give a brief, definitely over-simplified, overview of how the algorithm maintains these invariants.
As the pivots are not necessarily good centers (for example pivot $p_1$ at location $-2$ as the points arrive at location 1 in Figure \ref{fig: greedy_tricky_ex}), the algorithm also maintains a collection
$c_1, \ldots, c_t$ of estimated centers for the $t$ labels that have been used to date. 
The pivots and the estimated centers are updated in two scenarios. The first scenario is when there is an applicable Add Operation, and the second
is when there is an applicable Exchange Operation. 
In both scenarios the estimated centers are recomputed from the previously computed estimated centers and the newly computed (near) optimal centers. 
Then a sequence of Add and Exchange Operations are executed. 

An Add Operation is applicable when there is a point $x_\alpha$ that is well-separated from
the current pivots. Intuitively, this means a new label can be justified, but the implementation requires the consideration of several possible scenarios. 
 In the simplest scenario $x_\alpha$ is near a cluster of new points that are all far from
previous points, and the pivot $p_{t+1}$ for the new label ($t+1$) is set to $x_\alpha$. In some scenarios  an old pivot $p_i$ ($i \le t$) is set to $x_\alpha$ and $p_{t+1}$ is set to $p_i$ (so the new pivot 
inherits an old label and an old pivot location gets the new label). Intuitively, this occurs when the estimated center $c_i$ for cluster $i$ is near the location of $x_{\alpha}$.
Also there are scenarios where an old pivot $p_i$ ($i \le t$) is set to the estimated center $c_i$ of cluster $i$ 
and $p_{t+1}$ is set to $p_i$ (so again the new pivot inherits an old label and an old pivot location gets the new label). 
Finally, there are scenarios where two new pivots are created.\footnote{This is needed to avoid label conflicts. See Appendix \ref{appendix: label_conflicts}.}

An Exchange Operation is applicable when there are two  points $x_\alpha$ and $x_\gamma$ 
 near a pivot $p_j$ that are well-separated from each other and 
the other pivots (besides $p_j$). So intuitively the cluster of points labeled $j$ appear to be splitting
into two clusters. In the simplest scenario the location of pivot $p_j$ is set to the location of one of
$x_\alpha$ or $x_\gamma$, and the location of the new pivot $p_{t+1}$ is set to the location of the other one. 
This scenario occurs in the instance depicted in Figure \ref{fig: greedy_tricky_ex}.
The first pivot $p_1$ is initially set to location $-2$. The points arriving at
location $1$ would all be assigned the label 1 (blue) as there is no  point well-separated
from $p_1$ (the points located at $1$ are not separated from $p_1$ because the  points at
$p_1$ can be cheaply moved to location $1$). When enough points have arrived at the origin,
then the points $x_\alpha = 0$  and  at $x_\gamma = 1$ are near $p_1$ (because the point at 
$p_1$ can be cheaply moved to either $x_\alpha$ or $x_\gamma$), and are well-separated from each other
and the pivots other than $p_1$. Thus our algorithm would locate $p_1$ at $1$ and $p_2$ at the origin.
While this scenario gives some intuition,
there are several other more complicated scenarios.

The analysis is broken into two parts. The first part is to show that certain stuctural invariants are maintained through carefully engineered algorithmic choices.  The well-separated invariant is used to show we do not use more than $k$ labels and for bounding the cost. The second part of the analysis is to bound the cost (Section \ref{sec:cost}). We inductively assume that the cost of points that were labelled using the first $t-1$ labels is bounded. Then, we show that the points that were labelled when exactly $t$ labels were in use are clustered near optimally under our greedy procedure. The key challenge is showing the points given the same label combined have bounded cost.

\section{Preliminaries}
First we establish a lower bound depending on $k$ for the competitive ratio of our algorithm. 

\begin{theorem} \label{thm: lower_bd}
The competitiveness versus the budget of any deterministic algorithm
for cluster-centric consistent clustering is $\Omega(k)$. 
\end{theorem}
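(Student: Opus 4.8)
The plan is to construct, for an arbitrary deterministic online algorithm $\mathcal{A}$, an adaptive adversarial stream together with a promised budget $B$ on which the offline optimum is at most $B$ while $\mathcal{A}$ is forced to pay $\Omega(k)\cdot B$. Since we only need a linear-in-$k$ gap (the algorithmic upper bound in this paper is exponential in $k$), there is ample slack.

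I would first reduce to algorithms that are never wasteful with labels. Suppose at some moment the points seen so far admit a clustering of cost at most $B$ using $t$ labels, but $\mathcal{A}$ has already committed $t'>t$ labels. The adversary then appends $k-t'+1$ points that are mutually very far apart and far from everything revealed so far. The optimum extends the $t$-label solution by putting each new point in its own cluster, using $t+(k-t'+1)\le k$ labels and adding no cost, so it remains at most $B$; but $\mathcal{A}$ has only $k$ labels and must host $t'+(k-t'+1)=k+1$ mutually distant groups, so two of the far points land in one cluster and $\mathcal{A}$ pays essentially the diameter of the instance, which can be taken arbitrarily large. Hence we may assume $\mathcal{A}$ always uses at most the minimum number of labels needed for a cost-$\le B$ clustering of the points so far (using strictly fewer already forces its cost above $B$, which only helps). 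The consequences we exploit are that $\mathcal{A}$ cannot open a new cluster until the stream forces the minimum-label count up, and that when it is forced to open one, the only points it may place in it are those that have just arrived --- so the adversary dictates both \emph{when} each cluster of $\mathcal{A}$ is created and, largely, \emph{where}.

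Using this, I would build the instance on the line (a star metric works equally well) in $\Theta(k)$ phases located at pairwise exponentially separated points, so phases do not interact. Phase $j$ presents a small gadget whose opening points raise the minimum-label count from $j-1$ to $j$, forcing $\mathcal{A}$ to create its $j$-th cluster at a location the adversary picked; the adversary then reveals one further point near the gadget which, because the minimum-label count does not rise again, $\mathcal{A}$ is compelled to fold into an already-open cluster whose center is $\Omega(B)$ away, locking in $\Omega(B)$ of irrevocable cost in phase $j$. Summing over a constant fraction of the $\Theta(k)$ phases gives $\mathcal{A}$ cost $\Omega(kB)$, while the optimum --- which sees all gadgets at once and can devote its $j$-th cluster to the ``correct'' spot for gadget $j$, a spot $\mathcal{A}$ could not yet identify when it was forced to commit --- covers every gadget cheaply and stays within $B$.

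The main obstacle is precisely to make these two requirements hold at once. On one side, the optimum has only $k$ clusters, so each gadget must be coverable by a single cluster and the gadget scales must be apportioned so that the optimal per-phase costs sum to $O(B)$. On the other side, any gadget that is genuinely cheap for the optimum is in danger of being just as cheap for $\mathcal{A}$ (this is what sinks the naive nested-interval and ``pile'' constructions), so the construction must hinge on a real informational asymmetry: the optimum may position its $j$-th center using data from phase $j$ that $\mathcal{A}$ provably has not seen at the instant the minimum-label count forces $\mathcal{A}$'s $j$-th cluster into existence. Designing each gadget so that (i) it is cheap for the optimum, (ii) it forces $\mathcal{A}$'s $j$-th cluster to be born in the wrong place, and (iii) a single later point then costs $\mathcal{A}$ $\Omega(B)$ there, all without the minimum-label count overshooting, is the technical heart of the argument; the $\Omega(k)$ bound then follows by comparing the $\Omega(kB)$ total charge with the $O(B)$ optimum.
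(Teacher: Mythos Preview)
Your reduction to ``label-thrifty'' algorithms is correct and is exactly what the paper does (inline rather than up front). The phase-based skeleton is also the right shape. The gap is in the gadget itself, and it is not just that you have left it unspecified: the framework you commit to --- phases that are \emph{exponentially far apart} and each raise the min-label count by \emph{exactly one}, with the charge coming from ``one further point'' revealed \emph{after} the count has risen --- cannot produce a gap. If phase $j$'s region is isolated and gets exactly one cluster in the optimum, then whatever your further point pays in $\mathcal{A}$'s clustering it pays the same in the optimum's: both are forced to cover phase $j$ with a single cluster over the same set of locations, so the optimum's center for that region is no closer to the further point than $\mathcal{A}$'s is. Your ``informational asymmetry'' never materializes, because with one cluster per isolated phase the optimum has no extra degree of freedom that $\mathcal{A}$ lacks.

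The paper's construction inverts your order of events and drops the exponential separation, and that is the whole trick. All $k$ heavy locations sit at pairwise distance $\Theta(B)$ (the vertices of a simplex in $\mathbb{R}^k$; equally, $0,B,2B,\ldots,(k-1)B$ on the line works). In phase $j$ a single probe $q_j$ arrives at the $j$-th location \emph{before} any mass lands there. Because $q_j$ is one point at distance $\Theta(B)$ from an existing heavy cluster, the min-label count is still $j$ (absorb $q_j$ at cost $\le B$), so by your own reduction $\mathcal{A}$ must fold $q_j$ into an old cluster and irrevocably pay $\Theta(B)$. Only then do many points arrive at $q_j$'s location, raising the count to $j{+}1$; in the optimum $q_j$ sits with them for free, so the optimum's total cost is $0$ while $\mathcal{A}$ pays $\Theta(B)$ per phase. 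The point you are missing is that phase locations must be \emph{close enough} (distance $\Theta(B)$) that a lone probe does not yet justify a new label --- with exponentially separated phases the first point of phase $j$ already bumps the count, your reduction lets $\mathcal{A}$ open a fresh cluster there, and the charge evaporates.
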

The proof of Theorem \ref{thm: lower_bd} is in Appendix \ref{appendix: lower_bd}. 

\textbf{Assumptions.} We only assume that the online stream $X$ lies in a metric space. We allow multiple points to arrive at the same location; each duplicated point still has to pay the cost to its nearest center. Whenever we refer to centers, we enforce these come from $X$ itself. (This will be used to ensure our algorithm is well-defined.)

We state our results assuming $B = \textsf{OPT}$, but all results still hold by replacing $\textsf{OPT}$ with $B$, as long as $\textsf{OPT} \leq B$. 
Our algorithm will need to compute offline $k$-median clusterings; to do this in poly-time, use any $c$-approximation algorithm. Thus, replace $B$ with $c \cdot B$ to run our online algorithm in poly-time. 

\textbf{Terminology.} Recall the term \textit{natural weight} from Section \ref{sect:overview}. We will always take $S$ to be some prefix of $X$. As such, we may refer to the natural weights at a particular point in time to mean the natural weights in $S$, where $S$ is the prefix at that time.

Note that for $p \in S$, $w_S(p) \geq 1$, since $d(p,p) = 0$, and that $w_S(p)$ can only increase over time as $S$ enlarges. 

Recall the term \textit{$\beta$-well-separated} from Section \ref{sect:overview}. Related terms we will use are the following: We will say $p$ is $\beta$-\textbf{well-separated from} a set of points $\{x_{\alpha(1)}, \dots, x_{\alpha(m)} \}$ w.r.t. $w$ if $\min\{w(p), w(x_{\alpha(i)})\}\cdot d(p,x_{\alpha(i)}) \geq \beta \cdot \textsf{OPT}$ for all $i \in [m]$. If $m=2$ and the well-separated condition (\ref{eq: well-sep-def}) is not satisfied, we say the pair of points is $\beta$-\textbf{attached}, or that one is $\beta$-attached to the other, w.r.t. $w$.

\section{Algorithm Description} \label{sec: algo}
The algorithm sees an online sequence $X= \{x_1, x_2, \ldots x_n\}$ of points.
Let $X_i =  \{x_1, x_2, \ldots x_i\}$ and $w_i$ be shorthand for $w_{X_i}$. 
The algorithm maintains the following:
\begin{itemize}
    \item a collection of previously arriving points  $p_1, \dots, p_t$ that  have been designated as \textit{pivots},
where $t$ is the number of labels used by the algorithm to date and pivot $p_j$ is associated with label $j$, 
\item
a separation parameter $\beta_t = 8 \cdot 3^{k-t+2}$, and
\item
a collection of previously arriving points $c_1, \ldots, c_T$ that have been designated as estimated centers, where $T \le t$.
\end{itemize}
Initially the first point $x_1$ is given the label 1, the first pivot $p_1$ is set to $x_1$,
and the collection of estimated centers is empty.
The algorithm handles the arrival of each subsequent point $x_i$ in the following manner. The subroutines are described in Sections \ref{sec: est_center_subroutine}, \ref{sec: add_op}, \ref{sec: exchange_op}. 
\begin{enumerate}[label={(\arabic*)}]
    \item \textbf{If} there is an applicable Add or Exchange Operation \textbf{then} compute new Estimated Centers
\begin{enumerate} [label=(\alph*)]
        \item \textbf{Repeat} \text{ while there is an applicable Add Operation or Exchange Operation} 
        \begin{enumerate}
            \item \textbf{If} \text{ there is an applicable Add Operation}
            \textbf{ then} \text{ apply an arbitrary applicable one} 
            \item \textbf{Else} \text{ apply an arbitrary applicable Exchange Operation } 
        \end{enumerate} 
    \end{enumerate}
    \item Give $x_i$ the label $j$, where $p_j$ is the nearest pivot to $x_i$
\end{enumerate}

Let $T$ be the number of pivots during an execution of the outer loop (1). The Estimated Centers subroutine computes $T$ new estimated centers
$c_1, \ldots, c_T$ from the current pivots $p_1, \ldots p_T$, 
the points $X_{i-1}$, and the current estimated centers $c_1, \ldots c_s$ ($s<T$). 

Let $t \geq T$ be the number of pivots during an execution of the inner loop (a). The Add Operation subroutine is applicable if there is a point $x_\alpha \in X_i$ such that $x_\alpha$ is $\beta_{t+1}$-well-separated from 
the current pivots $p_1, \dots, p_t$ with respect to the weight function $w_i$.
The execution of the Add Operation subroutine depends on $x_\alpha$, $X_i$, the current pivots $p_1, \ldots, p_t$,
and the current estimated centers $c_1, \ldots, c_T$. The Add Operation subroutine adds one or two new pivots, and possibly changes the location of up to two previous pivots.

The Exchange Operation subroutine is applicable  if there
there exists two points $x_\alpha$ and $x_\gamma$ in $X_i$, and a pivot $p_j$ such that:
\begin{itemize}
    \item $x_\alpha$ and $x_\gamma$ are each $\beta_{t+1}$-attached to $p_{j}$ w.r.t. $w_{i}$,
    \item $w_i(p_{j}) \leq w_i(x_\alpha)$,
        \item $w_i(p_{j}) \leq w_i(x_\gamma)$, and
\item The collection of the $t+1$ points, consisting of $x_\alpha$, $x_\gamma$, and the pivots other than $p_j$, are
 $\beta_{t+1}$-well-separated w.r.t. $w_i$.
\end{itemize}
The execution of the Exchange Operation subroutine depends on $x_\alpha$, $x_\gamma$, $X_i$, the current pivots $p_1, \ldots, p_t$,
and the current estimated centers $c_1, \ldots, c_T$. 
The Exchange Operation adds one or two new pivots, and possibly changes the location of one previous pivot.

\subsection{The Estimated Center  Subroutine} \label{sec: est_center_subroutine}

Choose $y_1, \ldots, y_k \in X_{i-1}$ to be an optimal collection of $k$ centers for the points in $X_{i-1}$. (By Fact \ref{fact: Meyerson}, there exists a collection that gives a clustering of cost at most $2\textsf{OPT}$.) Define $p(y_h)$ to be the pivot with the minimum weighted distance to $y_h$, that is, 
\begin{equation} \label{eq: p_arc}
p(y_h) = \argmin_{p_j}  \left(\min\{w_{i-1}(p_j), w_{i-1}(y_h)\} \cdot d(p_j, y_h)\right) \tag{$\dagger$}
\end{equation}
For a pivot $p_j$, let $\delta(p_j)$ be a collection of optimal centers and estimated centers defined 
as follows: the current estimated center $c_j$ (when it exists) is in $\delta(p_j)$ if $w_{i-1}(c_j) > w_{i-1}(p_j)$ and $c_j$ is $\beta_{t+1}$-attached to $p_j$ w.r.t. $w_{i-1}$, and $\delta(p_j)$ contains an optimal center $y_h$ if $w_{i-1}(y_h) > w_{i-1}(p(y_h))$ for $p(y_h) = p_j$.
For each $j \in [T]$  we define the  new \textbf{estimated center} $c_j$ as follows:
If $w_{i-1}(p_j) \ge  \max_{p \in \delta(p_j) } w_{i-1}(p)$ then $c_j = p_j$, else
\begin{equation} \label{eq: good_center}
\centering
c_j = \argmax_{p \in \delta(p_j) } w_{i-1}(p) \tag{$\ddagger$}
\end{equation}

\begin{figure}[ht]
\centering
\includegraphics[width=.9\textwidth]{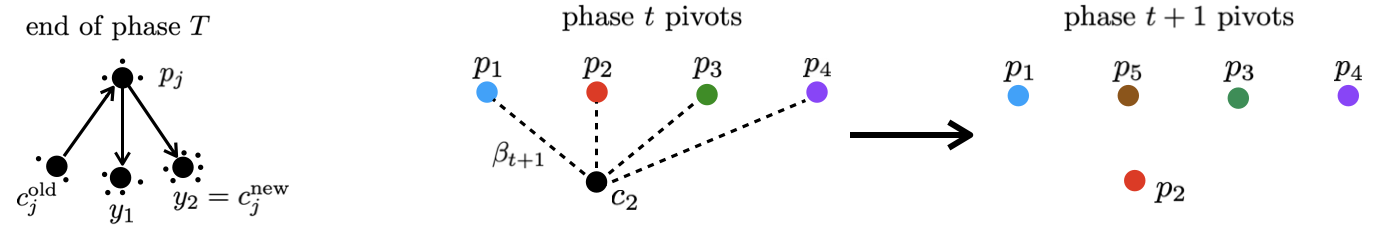}
\captionsetup{width=.9\linewidth}
\caption{\textbf{Left:} The Estimated Center subroutine. Arrows point from smaller to larger natural weights, and the small points around the larger points are the the set attaining the weight of the larger point. \textbf{Right:} Case (1) of the Add Operation. Dashed lines represent well-separation. Colors except black represent labels. The pivot for label 2 (red) moves to location $c_2$, while the old location for red's pivot becomes the pivot of the new label (5, brown).}
\label{fig: estimates_add1}
\end{figure}

\subsection{The Add Operation Subroutine} \label{sec: add_op}

For shorthand, let $w_t = w_{i-1}$ when $t=T$ and $w_t = w_i$ when $t > T$.\footnote{We are overloading subscripts here for ease. We could instead write $v_t$, but we retain $w$ to recall weights.}
 The description of the Add Operation subroutine is then: 
 
\begin{enumerate} [label={(\arabic*)}]
        \item \textbf{If}  there is an estimated center  $c_{j}$ that is  $\beta_{t+1}$-well-separated from $p_1, \dots, p_t$ w.r.t. $w_i$ then set $p_{t+1} = p_{j}$ and  set $p_{j} = c_{j}$.
        \item \textbf{Else if} it is the case that for every estimated center $c_j$ that is $\beta_{t+2}$-attached to $x_\alpha$ w.r.t. $w_i$ it is also the case that $w_t(c_j) < w_t(p_j)$, then set $p_{t+1} = x_\alpha$.
        \item \textbf{Else if} there exists a unique estimated center $c_{j}$ that both is $\beta_{t+2}$-attached to $x_\alpha$ w.r.t. $w_i$ and satisfies $w_t(c_{j}) \geq w_t(p_{j})$, then set $p_{t+1} = p_{j}$ and $p_{j} = x_\alpha$.
        \item \textbf{Else}  Let $c_{f}$  and  $c_{g}$ be estimated centers such that each is $\beta_{t+2}$-attached to $x_\alpha$ w.r.t. $w_i$,  $w_t(c_{f}) \geq w_t(p_{f})$, and $w_t(c_{g}) \geq w_t(p_{g})$. Then set $p_{t+1} = p_{f}$, $p_{t+2} = p_{g}$, $p_{f} = c_{f}$, and $p_{g} = c_{g}$. 
\end{enumerate}

\begin{figure}[H]
\centering
\includegraphics[width=.9\textwidth]{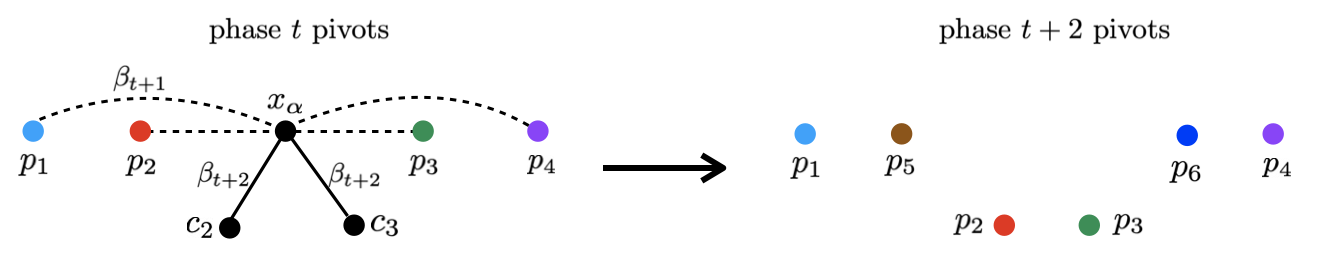}
\captionsetup{width=.9\linewidth}
\caption{Case (4) of the Add Operation. Dashed lines represent well-separation while solid lines represent attachment, both w.r.t. $w_i$ (labelled with the appropriate $\beta$). The pivots for labels 2 (red) and 3 (green) move to $c_2, c_3$, resp., while the old locations for these pivots are where the new pivots 5 and 6 (brown and blue) are located.}
\label{fig: add4}
\end{figure}

\subsection{The Exchange Operation Subroutine} \label{sec: exchange_op}

 The description of the Exchange Operation subroutine is: 
    \begin{enumerate} [label={(\arabic*)}]
        \item \textbf{If} $j > T$ then set $p_{j} = x_\alpha$ and $p_{t+1} = x_\gamma$.
            \item \textbf{Else if} $w_i(c_{j}) < w_i(p_{j})$ then set $p_{j} = x_\alpha$ and $p_{t+1} = x_\gamma$. 
            \item \textbf{Else if} $c_{j}$ is $\beta_{t+2}$-attached to $x_\alpha$ w.r.t. $w_i$ then  set $p_{j} = x_\alpha$ and $p_{t+1} = x_\gamma$.
            \item \textbf{Else if} $c_{j}$ is $\beta_{t+2}$-attached to $x_\gamma$ w.r.t. $w_i$ then set $p_{j} = x_\gamma$ and $p_{t+1} = x_\alpha$.
            \item \textbf{Else} set $p_{t+1} = x_\alpha$, $p_{t+2} = x_\gamma$, and $p_{j} = c_{j}$.
    \end{enumerate}

\begin{figure}[ht]
\centering
\includegraphics[width=.9\textwidth]{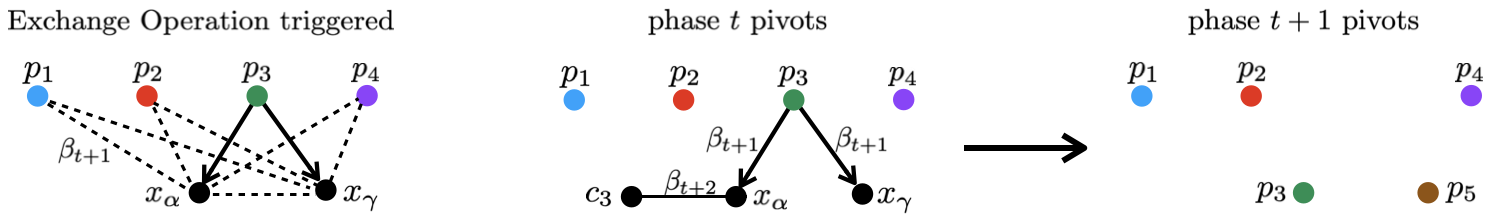}
\captionsetup{width=.9\linewidth}
\caption{Dashed and solid lines and arrows are as in Figures \ref{fig: estimates_add1} and \ref{fig: add4}. Colors except black represent labels. \textbf{Left:} The configuration triggering the Exchange Operation; dashed lines between $p_1,\dots,p_4$ suppressed. \textbf{Right:} Case (2) of the Exchange Operation. The old location for label 3's (green's) pivot is no longer a pivot location, and label 3's pivot is now $x_{\alpha}$. The new label, label 5 (brown), has its pivot at $x_{\gamma}$.}
\label{fig: exchange}
\end{figure}

\section{Algorithm Guarantees and Invariants}

We will establish two main guarantees for our algorithm and a set of invariants the algorithm will maintain to establish these guarantees. The most challenging and interesting algorithmic property is that the algorithm has bounded cost.  The second is that the algorithm is feasible. The theorem below gives the cost guarantees; its proof is in Section~\ref{sec:cost}. We state our results assuming $B = \textsf{OPT}$, but all results still hold by replacing $\textsf{OPT}$ with $B$, as long as $\textsf{OPT} \leq B$. 

\begin{theorem} \label{thm: main_thm}
The cost of the algorithm is $O(k^5\cdot 3^k \cdot \textsf{OPT})$. 
\end{theorem}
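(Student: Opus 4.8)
The plan is to charge the algorithm's cost against $\textsf{OPT}$ one label at a time, leveraging the structural invariants established elsewhere: at most $k$ labels are ever used; the pivots are pairwise $\beta_t$-well-separated while $t$ of them are in use; no point of $X$ is ever $\beta_{t+1}$-well-separated from the current pivots (w.r.t.\ the current weights); the estimated centers relate to the pivots as prescribed by the Estimated Center subroutine; and, since every Add or Exchange operation creates at least one new pivot and pivots are never destroyed, there are at most $k$ operations over the whole run, hence at most $k$ \emph{stable phases}, i.e.\ maximal time intervals during which no operation occurs and the pivots sit at fixed locations. As a first step I would reduce the quantity to be bounded: the final $k$-median cost is, up to a factor $2$, at most $\sum_j \sum_{x:\ell(x)=j} d(x, r_j)$ for any choice of reference points $r_j \in X$, so it suffices to pick, for each label $j$, a single reference (the final pivot $p_j$, the final estimated center $c_j$, or an $\textsf{OPT}$ center close to $p_j$; recall $d(p_j, o(p_j)) \le \textsf{OPT}$ always) and bound the aggregate distance of all points ever given label $j$ to that reference.

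The heart of the argument is an induction on the number $t$ of labels. When label $t$ is created, the $t$ current pivots are $\beta_t$-well-separated, and the natural-weight mechanism (a $\beta$-well-separated pair with $\beta$ large forces, by a Markov-inequality argument on the witnesses of the $2B$-weights, that no clustering of cost $\le B$ can merge the two surrounding point masses) shows that every near-optimal clustering must dedicate a distinct center to each pivot's neighborhood; this both caps the label count at $k$ and lets me peel off label $t$. Points arriving while exactly $t$ labels are in use and receiving the newest label $t$ are handled using the invariant that no point is $\beta_{t+1}$-well-separated from the pivots: each such point is $\beta_{t+1}$-attached to $p_t$, and converting the weighted-distance bound $\min\{w(x),w(p_t)\}\cdot d(x,p_t) < \beta_{t+1}B$ into a genuine distance bound and comparing against the $\textsf{OPT}$ center that the separation argument pins to $p_t$ shows their greedy assignment is within an $O(\beta_{t+1}) = O(3^k)$ factor of what $\textsf{OPT}$ pays on them. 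Points that go instead to an older label $j<t$ are charged via the inductive hypothesis on the first $t-1$ labels, plus an accounting of pivot relocations: each Add/Exchange moves $p_j$ only to an estimated center $c_j$ or to a witnessed well-separated point, and the Estimated Center subroutine is engineered so that the displacement $d(p_j^{\mathrm{old}}, p_j^{\mathrm{new}})$ is controlled by the natural weights, so re-charging old label-$j$ points to the new pivot location over the $\le k$ relocations stays within budget. Propagating the $O(3^k)$ separation factor through the $\le k$ phases, the $\le k$ labels, and a $\mathrm{poly}(k)$ loss in the weight-to-distance conversions gives $O(k^5 \cdot 3^k \cdot \textsf{OPT})$.

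The main obstacle is precisely the one flagged in the overview: bounding the \emph{combined} cost of all points that share a single label $j$. A point can be $\beta_{t+1}$-attached to $p_j$ and yet, if its natural weight is small, lie as far as $\Theta(\beta_{t+1}B)$ from $p_j$, and a priori many such points could accumulate. The route I would pursue is to further partition the label-$j$ points by which $\textsf{OPT}$ cluster they fall in (at most $k$ classes): within a class the points surround a single $\textsf{OPT}$ center $o_m$; the class has size at most $w(o_m)$, since all of $\textsf{OPT}$'s cluster-$m$ points lie within total distance $2B$ of $o_m$; $o_m$ is $\beta_{t+1}$-attached to some pivot by the no-outside-well-separated-point invariant; and a triangle inequality through $o_m$ turns the per-point bound into a per-class bound of $O(\mathrm{poly}(k)\cdot 3^k \cdot B)$, after which summing over the $\le k$ classes, $\le k$ labels, and $\le k$ phases closes the estimate. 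Pushing the weight comparisons in this last step through every case of the Add and Exchange subroutines — especially when $w(o_m) > w(p_j)$, where the naive bound degrades — is the delicate, case-heavy part of the proof, and is presumably the reason the separation parameters are both initialized large and forced to decay geometrically in $t$.
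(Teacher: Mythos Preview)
Your high-level skeleton matches the paper's: induct over phases, partition the points with a given label by their offline-optimal center, and control cross-phase drift via the estimated centers. But two of your load-bearing steps are wrong or missing, and they are exactly the steps the paper works hardest on.

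First, the assertion ``each such point is $\beta_{t+1}$-attached to $p_t$'' is false. The invariant is only that no point is $\beta_{t+1}$-well-separated from \emph{all} pivots; a point $x$ can be labeled $j$ (because $p_j$ is its nearest pivot in raw distance) while being $\beta_{t+1}$-attached only to some other pivot $p_{j'}$ (take $w(p_{j'})$ tiny so the weighted distance to $p_{j'}$ is small even though the raw distance is larger). So the bound $\min\{w(x),w(p_j)\}\cdot d(x,p_j)<\beta_{t+1}B$ is simply unavailable, and your triangle inequality through $o_m$ collapses when $o_m$'s attached pivot is not $p_j$. The paper handles this with a dedicated ``far points'' lemma (Lemma~\ref{lem: far_points}): it splits $C_j$ into $S_{near,j}$ (points whose offline center $y_i$ satisfies $p(y_i)=p_j^T$) and $S_{far,j}$ (the rest), and for the far points uses the \emph{greedy rule} --- $cost(S_{ji};p_j^T)\le cost(S_{ji};p(y_i))$ --- together with the pivot well-separation to bound both $cost(S_{far,j};p_j^T)$ and, crucially, $|S_{far,j}|\le k\cdot w^T(p_j^T)$. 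You never invoke the greedy rule, and without it there is no way to bound the far points.

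Second, your cross-phase step (``re-charging old label-$j$ points to the new pivot location \dots\ stays within budget'') needs a bound on $|C_j^{T^-}|$, not just on the displacement $d(c_j^{T^-},p_j^T)$. The attachment in Lemma~\ref{lem: well_attached_centers} gives $w^{T^-}(c_j^{T^-})\cdot d(c_j^{T^-},p_j^T)=O(\beta_1)\cdot\textsf{OPT}$, but to turn that into a cost you must know $|C_j^{T^-}|\le O(\mathrm{poly}(k))\cdot w^{T^-}(c_j^{T^-})$. That requires a \emph{separate} induction (Lemma~\ref{lem: sufficiently_weighted_centers}), which in turn relies on the $|S_{far,j}|$ bound above and on the fact that the estimated center is chosen to maximize weight over $\delta^+(p_j^T)$, so that $w^T(y_i)\le w^T(c_j^T)$ for the near centers and $w^{T^-}(c_j^{T^-})\le w^T(c_j^T)$ across phases. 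Your proposal has no mechanism to control cluster sizes relative to weights, and without it the cross-phase charge can blow up.
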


The next theorem states that the algorithm never uses more than $k$ labels and therefore produces a feasible solution. The proof of this theorem is in Section~\ref{sec:invariant}.

\begin{theorem} \label{thm: under_k_labels}
The algorithm uses at most $k$ labels. 
\end{theorem}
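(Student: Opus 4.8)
The plan is to reduce the statement to two ingredients: (i) a purely metric \emph{counting lemma} asserting that if $S$ is a point set whose optimal $k$-median cost is at most $\textsf{OPT}$, and $x_{\alpha(1)},\dots,x_{\alpha(m)}\in S$ are $\beta$-well-separated with respect to $w_S$ for some $\beta$ larger than an absolute constant (e.g. $\beta>16$), then $m\le k$; and (ii) the structural invariant, maintained throughout the run and established in Section~\ref{sec:invariant}, that the current pivots $p_1,\dots,p_t$ are always $\beta_t$-well-separated with respect to the current weight function $w_i$. Granting (i) and (ii), the theorem is immediate: the label count $t$ only ever increases inside an Add or Exchange operation, and each such operation is triggered by a configuration which, \emph{together with the maintained well-separation of the surviving pivots}, yields a set of $t'\in\{t{+}1,t{+}2\}$ points that is $\beta_{t'}$-well-separated with respect to $w_i$ (for the Exchange operation the $(t{+}1)$-point witness is literally in the applicability hypothesis; for the one- and two-pivot Add cases it is the triggering point(s) $x_\alpha$, resp. the relocated centers, glued to the other pivots). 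Since $\beta_{t'}=8\cdot 3^{\,k-t'+2}\ge 24>16$ for every $t'\le k+1$, and since every prefix $X_i$ inherits optimal $k$-median cost at most $\textsf{OPT}$, the counting lemma forbids $t'=k+1$. Hence $t\le k$ at all times.

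\textbf{Proving the counting lemma.}
Fix a $\beta$-well-separated set $x_{\alpha(1)},\dots,x_{\alpha(m)}$ in $S$, and let $c^{*}(\cdot)$ be an optimal assignment of the points of $S$ to at most $k$ centers of total cost at most $\textsf{OPT}$. For each $\ell$ write $w^{(\ell)}=w_S(x_{\alpha(\ell)})$ and let $W_\ell$ be the witnessing set, so $|W_\ell|=w^{(\ell)}$ and $\sum_{q\in W_\ell} d(q,x_{\alpha(\ell)})\le 2\,\textsf{OPT}$. A two-step averaging (Markov) argument---first on the distances $d(q,x_{\alpha(\ell)})$ over $W_\ell$, then on the optimal costs $d(q,c^{*}(q))$ over the surviving points, using $\sum_{q\in W_\ell}d(q,c^{*}(q))\le\textsf{OPT}$---produces $W_\ell'\subseteq W_\ell$ with $|W_\ell'|\ge w^{(\ell)}/4$ such that every $q\in W_\ell'$ satisfies $d\bigl(x_{\alpha(\ell)},c^{*}(q)\bigr)\le 8\,\textsf{OPT}/w^{(\ell)}$. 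If some index pair $\ell\ne\ell'$ shared an optimal center serving a point of $W_\ell'$ and a point of $W_{\ell'}'$, the triangle inequality would give $d(x_{\alpha(\ell)},x_{\alpha(\ell')})\le 8\,\textsf{OPT}/w^{(\ell)}+8\,\textsf{OPT}/w^{(\ell')}\le 16\,\textsf{OPT}/\min\{w^{(\ell)},w^{(\ell')}\}$, contradicting $\beta$-well-separation once $\beta>16$. Thus the optimal centers touched by the sets $W_1',\dots,W_m'$ are pairwise disjoint and each family is nonempty, so $c^{*}$ uses at least $m$ distinct centers, forcing $m\le k$.

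\textbf{Main obstacle.}
The heart of the argument, and the step I expect to be the real work, is establishing invariant (ii): that after \emph{every} Add and Exchange operation the updated pivot set is again well-separated, with the appropriately smaller parameter. This is an induction over operations with a case analysis mirroring the subcases of Sections~\ref{sec: add_op} and~\ref{sec: exchange_op}. The easy subcases---a fresh pivot placed at $x_\alpha$ or $x_\gamma$---follow directly from the applicability conditions. The delicate subcases are those relocating an old pivot $p_j$ to its estimated center $c_j$ (or to $x_\alpha$) while $p_{t+1}$ inherits the old location: one must verify that the move does not destroy well-separation among the surviving pivots, using the defining properties of the Estimated Center subroutine (that $c_j$ has strictly larger natural weight than $p_j$, is $\beta_{t+1}$-attached to $p_j$, and is selected via $(\dagger)$--$(\ddagger)$) and, crucially, that $\beta_t$ strictly decreases as $t$ grows, so the slack between $\beta_{t-1}$ (which held before the operation) and $\beta_t$ absorbs the perturbation $d(p_j,c_j)$. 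The two-new-pivot subcases additionally require ruling out label conflicts, i.e. that the candidate centers $c_f,c_g$ are distinct and both genuinely well-separated from the remaining pivots. I would package each operation type as a short lemma reducing to a single triangle-inequality estimate relating $d(p_j,c_j)$, the natural weights, and the parameter gap $\beta_t/\beta_{t+1}=3$, and then conclude Theorem~\ref{thm: under_k_labels} exactly as in the first paragraph.
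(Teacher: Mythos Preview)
Your proposal follows essentially the same route as the paper: the paper's Proposition~\ref{prop: less_than_k_centers} is exactly your counting lemma (with the sharper threshold $\beta>8$, obtained via a single Markov step followed by an ``every ball $B(x_i,4\,\textsf{OPT}/w_i)$ must contain an optimal center'' argument, rather than your two-step Markov giving $\beta>16$), and the paper's Lemma~\ref{lem: well-sep-invariant} is precisely your invariant~(ii). The bulk of the work is indeed the case analysis establishing that invariant, which the paper carries out via a nested induction (Proposition~\ref{prop: intermediate_attachment}) coupling well-separation of the pivots with attachment of each estimated center to its corresponding pivot---exactly the ``delicate subcases'' you anticipate.

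One small gap to flag: your wrap-up rules out $t'=k+1$ because $\beta_{k+1}=24>16$, but a two-pivot operation (Case~4 of Add or Case~5 of Exchange) executed at $t=k$ would jump directly to $t'=k+2$, and $\beta_{k+2}=8$ lies below your threshold of~$16$. The clean fix is to argue via the \emph{applicability} condition rather than via the resulting pivot set: any Add or Exchange that is applicable when $t=k$ already exhibits $k+1$ points that are $\beta_{k+1}=24$-well-separated with respect to $w_i$ (namely $\{p_1,\dots,p_k,x_\alpha\}$ for Add, using invariant~(ii) on the existing pivots together with the applicability hypothesis on $x_\alpha$; or $\{p_1,\dots,p_k\}\setminus\{p_j\}\cup\{x_\alpha,x_\gamma\}$ for Exchange, directly from its hypothesis). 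Your counting lemma then forbids the operation from ever being triggered once $t=k$, so $t$ never exceeds $k$.
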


\subsection{Notation and Definitions} \label{sec: analysis_defs}
We now define some notation and definitions used in the analysis. 
\begin{itemize}
    \item \textbf{Phase $t$} refers to the set of time steps during which there are exactly $t$ pivots. 
    \item $p_1^t, \dots, p_t^t$ denote the pivots for labels 1 through $t$, respectively, during phase $t$.
    \item $w^t$ denotes the natural weights at the end of phase $t$. 
    \item $X(t)$ is the set of points assigned a label before or during phase $t$.  
    \item For $j \in [t]$, let $C_j^t$ denote denote the points labelled $j$ in phases 1 through $t$. 
    \item An \textbf{intermediate phase} $t$ is a phase during which no points are given labels. This is a phase solely used to reset pivots. 
    \item A \textbf{non-intermediate phase} $T$ is a phase in which at least one point is given a label. 
\end{itemize}

For a non-intermediate phase $T$, 
    \begin{itemize}
        \item Let $T^-$ to denote the most recent non-intermediate phase before $T$, and $T^+$ to denote the first non-intermediate phase after $T$ (when these exist).\footnote{Using the notation in Section \ref{sec: algo}, $x_{i-1}$ is the last point labelled during phase $T$, and $x_i$ is the first point labelled during phase $T^+$.}
        \item For $j \in [T]$, $c_j^T$ is the estimated center (\ref{eq: good_center}) computed at the end of phase $T$. 
        \item Let $y_1, \dots, y_k$ be the optimal collection of $k$ centers computed at the end of phase $T$ in The Estimated Center Subroutine. Let $P_T = \{p_1^T, \dots, p_T^T, y_1, \dots, y_k\}$ and call this set the \textbf{offline centers} for phase $T$.\footnote{Note that $c_j^T$ and $P_T$ are only defined for non-intermediate phases $T$.}
        \item The \textbf{attachment digraph} $D(T)$ is a bipartite  digraph with vertex set $P_T$, plus $c_j^{T^-}$ if $T>1$, partitioned as $(\{p_1^T, \dots, p_T^T\}, \{y_1, \dots, y_k, c_j^{T^-}\})$. There is a directed arc $(y_i, p(y_i))$ if $w^T(y_i) \leq w^T(p(y_i))$ and a directed arc $(y_i, p(y_i))$ otherwise. If $c_j^{T^-}$ and $p_j^T$ are $\beta_{T+1}$-attached w.r.t. $w^T$, add the arc $(c_j^{T^-}, p_j^T)$ if $w^T(c_j^{T^-}) \leq w(p_j^T)$ and the arc $(p_j^T, c_j^{T^-})$ otherwise. $\delta^+(p_j^T)$ and $\delta^-(p_j^T)$ denote the out-degree and in-degree of $p_j^T$, respectively.
\end{itemize}

\subsection{Invariants} \label{sec:invariants}

The first property of our algorithm is that it maintains pivots that are sufficiently far apart with respect to their natural weights. This is at the heart of our analysis for both controlling the number of labels used and the algorithm's cost.

\begin{lemma} \label{lem: well-sep-invariant}
Let $t \in [k]$. The algorithm maintains the invariant that $p_1^t, \dots, p_t^t$ are $\beta_t$-well-separated w.r.t. the natural weights at the start of phase $t$ (and thereafter).\footnote{Two points well-separated at one time step will also be well-separated at a later time step, since their natural weights can only increase and the well-separation parameter $\beta_t$ can only decrease.}
\end{lemma}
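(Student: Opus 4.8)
The plan is to prove Lemma~\ref{lem: well-sep-invariant} by induction on the phase number $t$, showing that each operation that creates phase $t$ (from phase $t-1$ or an intermediate predecessor) produces a pivot configuration satisfying the $\beta_t$-well-separated condition with respect to the natural weights in force at that time. Since natural weights only increase over time and $\beta_t$ is non-increasing in $t$ (indeed $\beta_t = 8\cdot 3^{k-t+2}$ decreases as $t$ grows), once the invariant holds at the start of phase $t$ it persists for the remainder of phase $t$; this handles the ``and thereafter'' clause and also means that the pivots from an earlier phase that survive unchanged into phase $t$ automatically remain pairwise well-separated at the weaker threshold $\beta_t$. So the real content is: whenever a new phase begins, the \emph{newly created or relocated} pivots are well-separated from each other and from the untouched old pivots, all at threshold $\beta_t$.

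The base case $t=1$ is trivial (a single pivot), so assume the invariant for phase $t-1$ (hence for all earlier phases) and consider the transition into phase $t$. There are two mechanisms: an Add Operation or an Exchange Operation, each with several cases. I would organize the argument around the triggering condition. For an Add Operation triggered by a point $x_\alpha$ that is $\beta_{t+1}$-well-separated (wait — here I must be careful about the index shift: the algorithm uses $\beta_{t+1}$ as the trigger threshold when going from $t$ pivots to $t+1$, so re-indexing, entering phase $t$ means the trigger was $\beta_t$). The triggering point $x_\alpha$ is by hypothesis $\beta_t$-well-separated from all current pivots $p_1,\dots,p_{t-1}$. I then go case by case:
\begin{itemize}
\item In the case where $p_t = x_\alpha$ and all old pivots are untouched, well-separation is immediate from the trigger condition plus the inductive invariant (downgraded to $\beta_t$).
\item In the cases where an old pivot $p_j$ is relocated to an estimated center $c_j$ (or to $x_\alpha$) and the new pivot inherits the old location, I must show the moved pivot is still well-separated from everyone. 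Here the key levers are: the estimated center $c_j$ has natural weight at least that of $p_j$ (by the selection rule (\ref{eq: good_center}) and the $\delta(p_j)$ membership conditions), and $c_j$ is $\beta_t$-\emph{attached} to $p_j$, so $c_j$ is ``close'' to $p_j$ in weighted distance, which lets me transfer well-separation from $p_j$ to $c_j$ at a somewhat smaller threshold. This is exactly where the geometric-gap factor of $3$ between consecutive $\beta$'s is needed: moving a pivot a bounded weighted distance degrades its separation from others by a controlled multiplicative factor, and starting from threshold $\beta_{t+1}=3\beta_t$ (or $\beta_{t+2}=9\beta_t$) at the trigger gives enough slack to land at $\beta_t$.
\item In the two-new-pivots case (Add case (4), or Exchange case (5)), I additionally need the two newly introduced locations to be well-separated from each other; this uses that they are the old locations $p_f,p_g$ (already mutually $\beta_{t-1}$-well-separated by induction, hence $\beta_t$-well-separated) or that $x_\alpha,x_\gamma$ satisfy the Exchange trigger's explicit $\beta_t$-well-separated clause.
\item For the Exchange Operation, the trigger already guarantees that $\{x_\alpha, x_\gamma\} \cup \{\text{pivots other than } p_j\}$ are $\beta_t$-well-separated; the only thing to check in cases (1)--(4) is that relocating $p_j$ to $c_j$ (case (5)) keeps $c_j$ well-separated, again via the attachment-to-$p_j$ estimate and the weight-comparison condition.
\end{itemize}

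I expect the main obstacle to be the pivot-relocation cases, specifically the quantitative ``transfer lemma'': if $p$ is $\beta'$-well-separated from a set $Q$ and $c$ is $\beta''$-attached to $p$ with $w(c)\ge w(p)$, then $c$ is $\beta$-well-separated from $Q$ for a suitable $\beta$ depending on $\beta',\beta''$. Proving this cleanly requires a triangle-inequality argument on the products $\min\{w(\cdot),w(\cdot)\}\cdot d(\cdot,\cdot)$, and the subtlety is that the ``$\min$'' of weights is not well-behaved under triangle inequality, so I will need to case-split on which of $w(c), w(p), w(q)$ is smallest and use monotonicity of natural weights plus the fact that $d(c,q) \ge d(p,q) - d(p,c)$ together with the attachment bound $\min\{w(c),w(p)\}\cdot d(p,c) < \beta'' B$. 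Getting the constants to close — i.e., verifying that $8\cdot 3^{k-t+2}$ is exactly the right schedule so that the trigger thresholds $\beta_{t+1},\beta_{t+2}$ degrade down to $\beta_t$ through at most two relocations — is the bookkeeping heart of the proof, and is presumably why the paper remarks (in the footnote to the overview) that $\beta$ must both start large and decrease with $t$.
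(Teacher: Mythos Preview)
Your overall plan---induction on the phase, case-splitting on the Add/Exchange subcases, and a ``transfer lemma'' moving well-separation along an attachment edge---matches the paper's architecture and is the right shape. But there is a genuine gap in the inductive setup that the paper's proof is specifically engineered to close.

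The inner loop (step (1)(a) of the algorithm) can fire \emph{several} Add/Exchange operations in succession before the next point is labelled. Estimated centers $c_j^T$ are computed once, at the start of that loop, relative to the pivots $p_j^T$ from the last non-intermediate phase $T$. During the subsequent intermediate phases $t=T{+}1,T{+}2,\dots$ the pivot $p_j^t$ for a fixed label $j$ may be relocated (by Add Cases~1,3,4 or Exchange Cases~2--5). Your argument for the relocation cases assumes ``$c_j$ is $\beta_t$-attached to $p_j$'' at the moment of the operation; but Proposition~\ref{prop: attached_estimated_center} only gives this at $t=T$, w.r.t.\ $w_{i-1}$. At a later intermediate step you need the attachment of the \emph{fixed} $c_j^T$ to the \emph{current} $p_j^t$, w.r.t.\ $w_i$, and that does not follow from anything you have written. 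Concretely, in Add Case~4 you must show $c_f^T$ is well-separated from every $p_l^t$ with $l\neq f$; your transfer lemma needs $c_f^T$ attached to $p_f^t$, which is exactly the missing piece when $t>T$.

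The paper closes this gap by \emph{coupling} the well-separation invariant with an auxiliary invariant on the estimated centers (Proposition~\ref{prop: intermediate_attachment}): for every $j\in[T]$ and every intermediate $t$, one of three alternatives holds---(a) $c_j^T$ is well-separated from all current pivots (so another Add fires), (b) $c_j^T$ is $\beta_{t+1}$-attached to $p_j^t$, or (c) $c_j^T$ is attached to $p_j^t$ at a degraded level $\beta_T(t-T)$ with $w_t(c_j^T)<w_t(p_j^t)$. The inner induction carries this three-way statement forward through each operation, and it is precisely alternative (b) that supplies the attachment your transfer step requires. Without this coupled invariant, your induction does not go through across multiple iterations of the inner loop. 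Two smaller points: your ratio is inverted (the schedule gives $\beta_{t+1}=\beta_t/3$, not $3\beta_t$), and in Add Case~4 the locations that must be shown well-separated from the rest are $c_f^T,c_g^T$, not the old $p_f,p_g$ (the latter keep their positions as the \emph{new-label} pivots).
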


The next lemma is a key technical lemma. It states that the estimated center (\ref{eq: good_center}) for the points given label $j$ \textit{before} phase $T$ is close, in a weighted sense, to the pivot for label $j$ in phase $T$. This is key to showing that points in cluster $j$ that are labelled \textit{before} phase $T$ can be combined with those that are labelled \textit{during} phase $T$ at bounded cost.  This lemma is in tension with the prior lemma because a pivot must be placed in a location where it is both well-separated from other pivots and is close to center of mass of the points of a given label.  

\begin{lemma} \label{lem: well_attached_centers}
Let $T$ be a non-intermediate phase and let $j \in [T]$. Let $w^t$ denote the natural weights at the end of phase $t$. If $T>1$, then at least one of the following holds:
\begin{enumerate}[label=(\alph*)]
    \item  $w^{T^-}(c_j^{T^-}) \leq w^T(p_j^T)$ and $w^{T^-}(c_j^{T^-}) \cdot d(c_j^{T^-}, p_j^T) \leq \beta_{T^-}(T-T^-) \cdot \textsf{OPT}$, \underline{or}
    \item $c_j^{T^-}$ is $\beta_{T+1}$-attached to $p_j^T$ w.r.t. $w^T$. 
\end{enumerate}
\end{lemma}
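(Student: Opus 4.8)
\emph{Proof idea.} The hypothesis $T>1$ only ensures that the previous non-intermediate phase $T^-$ exists. All pivot changes between phases $T^-$ and $T$ happen inside a single execution of the outer loop (1), triggered by the point that ends phase $T^-$: the Estimated Center subroutine runs once, producing the centers $c_1^{T^-},\dots,c_{T^-}^{T^-}$, and then a run of Add and Exchange operations raises the pivot count from $T^-$ to $T$ (so the intervening phases $T^-+1,\dots,T-1$ are intermediate). Fix $j$. If $j>T^-$, no point received label $j$ before phase $T$ and there is nothing to prove, so take $j\in[T^-]$ and follow the pivot carrying label $j$ through this burst.

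Step 1 (where $c_j^{T^-}$ sits relative to $p_j^{T^-}$). By rule (\ref{eq: good_center}), either $c_j^{T^-}=p_j^{T^-}$, or $c_j^{T^-}$ belongs to $\delta(p_j^{T^-})$ with $w^{T^-}(c_j^{T^-})\ge w^{T^-}(p_j^{T^-})$. In the second case $c_j^{T^-}$ is either the previous estimated center $c_j^{(T^-)^-}$, which the defining condition of $\delta$ makes $\beta_{T^-+1}$-attached to $p_j^{T^-}$ w.r.t. $w^{T^-}$, or an optimal center $y_h$ with $p(y_h)=p_j^{T^-}$; since phase $T^-$ ended with no point $\beta_{T^-+1}$-well-separated from the pivots, $y_h$ is $\beta_{T^-+1}$-attached to some pivot, hence --- being its weighted-nearest pivot --- to $p_j^{T^-}$. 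So $c_j^{T^-}$ is $\beta_{T^-+1}$-attached to $p_j^{T^-}$ w.r.t. $w^{T^-}$ and no lighter than $p_j^{T^-}$.

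Step 2 (tracking the pivot through the burst). A case analysis of the Add and Exchange subroutines shows that the pivot of an old label $j$ is relocated only to (i) $c_j^{T^-}$ (Add cases (1) and (4), Exchange case (5)), or (ii) a trigger point $x_\alpha$ or $x_\gamma$ (Add case (3), Exchange cases (2)--(4)), and in case (ii) the applicable case certifies that $c_j^{T^-}$ is $\beta_{t+2}$-attached to that trigger point w.r.t. the weights in force while the trigger point is $\beta_{t+1}$-well-separated from the other pivots ($t\le T-1$ being the pivot count when the operation fires). There are at most $T-T^-$ pivot-count increments in the burst, hence at most $T-T^-$ relocations of label $j$'s pivot; throughout, natural weights only increase and the thresholds $\beta_{t+1}$ only shrink, running from about $\beta_{T^-}$ toward about $\beta_{T+1}$. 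Combining with Step 1 gives the dichotomy. If the pivot ends the burst at $c_j^{T^-}$ (in particular if it never moves and $c_j^{T^-}=p_j^{T^-}$), then $d(c_j^{T^-},p_j^T)=0$ and monotonicity of natural weights gives $w^{T^-}(c_j^{T^-})=w^{T^-}(p_j^T)\le w^T(p_j^T)$, which is alternative (a). Otherwise $p_j^T$ is the trigger point of the last relocating operation, so $c_j^{T^-}$ is attached to $p_j^T$ under the burst-time weights; if this attachment persists under the end-of-phase weights $w^T$ it yields $\beta_{T+1}$-attachment and alternative (b); if it does not persist, then $w^{T^-}(c_j^{T^-})$ must be the weight realizing the burst-time minimum, so $w^{T^-}(c_j^{T^-})\le w^{T^-}(p_j^T)\le w^T(p_j^T)$, and chaining the at most $T-T^-$ hops by the triangle inequality under the coarse burst-time parameter $\beta_{T^-}$ gives $w^{T^-}(c_j^{T^-})\cdot d(c_j^{T^-},p_j^T)\le\beta_{T^-}(T-T^-)\cdot\textsf{OPT}$ --- alternative (a). The factor $(T-T^-)$ is exactly the number of hops in this chain.

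The main obstacle is the case analysis supporting the structural claim of Step 2: one must verify, over an entire burst of consecutive Add and Exchange operations, that an old pivot is only relocated to its estimated center or to a certified-attached trigger point, and that the applicability conditions of the various cases preclude the configurations in which the weight comparison in alternative (a) would fail. This is where the tension flagged before the lemma --- a pivot must stay well-separated from the other pivots (Lemma \ref{lem: well-sep-invariant}) yet remain weight-close to the mass of its own label --- has to be resolved, and it is what forces the branching of the Add and Exchange subroutines.
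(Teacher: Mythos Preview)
Your overall plan --- track the label-$j$ pivot through the burst of Add/Exchange operations between $T^-$ and $T$ --- is the same plan the paper follows (there it is packaged as Proposition~\ref{prop: intermediate_attachment}, proved by induction on $t\in[T^-,T]$). So the architecture is right. But there is a real gap in how you close out the argument, and it is exactly the part the paper spends its second case on.

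The burst analysis only gets you a relation between $c_j^{T^-}$ and $p_j^T$ with respect to the weights at the \emph{start} of phase $T$ (your ``burst-time weights'' are $w_i$). Alternative~(b) requires attachment with respect to $w^T$, the weights at the \emph{end} of phase $T$; natural weights grow during the phase, so the minimum in the attachment inequality can increase and the burst-time attachment can break. Your fallback, ``if it does not persist, then $w^{T^-}(c_j^{T^-})$ must be the weight realizing the burst-time minimum, so $w^{T^-}(c_j^{T^-})\le w^{T^-}(p_j^T)$'', is not valid: the minimum at time $w_i$ could just as well be $w_i(p_j^T)$, and in any case an inequality under $w_i$ does not transfer backward to $w^{T^-}=w_{i-1}$. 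So neither alternative is established in this branch.

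The missing idea is the one the paper uses in its Case~2: throughout phase $T$ no Add Operation is available, hence $c_j^{T^-}$ must remain $\beta_{T+1}$-attached to \emph{some} pivot w.r.t.\ the current weights at every step of the phase (otherwise $c_j^{T^-}$ itself would trigger an Add). One then argues, using the start-of-phase condition $w_{T}(c_j^{T^-})\ge w_{T}(p_j^{T})$ together with the $\beta_T$-well-separation of the pivots (Lemma~\ref{lem: well-sep-invariant}) and Proposition~\ref{prop: meta-prop}, that the pivot it is attached to can only be $p_j^T$; this pins down alternative~(b) at the end-of-phase weights $w^T$. Without this step your dichotomy does not close. (A smaller point: Exchange Case~(2) does not directly certify attachment of $c_j^{T^-}$ to the trigger point; the paper derives the needed bound there via the weight comparison $w_i(c_j)<w_i(p_j)$ and a triangle-inequality hop through $p_j^t$, landing in alternative~(c) of Proposition~\ref{prop: intermediate_attachment}. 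Your Step~2 glosses over this.)
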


\subsection{Proof of the Algorithm's Feasibility and the Invariants}
\label{sec:invariant}

We begin by showing a bound on the number of well-separated points in the entire point set.    Lemma \ref{lem: well-sep-invariant} along with Proposition~\ref{prop: less_than_k_centers} below will immediately imply Theorem \ref{thm: under_k_labels}, which states that the algorithm uses at most $k$ labels. 

\begin{proposition}\label{prop: less_than_k_centers}
Let $X$ be a set of points whose optimal $k$-median cost using $k$ centers is $\textsf{OPT}$. Let $\{x_1, \dots, x_l\}$ be a set of points in $X$, and let $w_X$ denote their natural weights in $X$. Let $\beta > 8$. If $\{x_1, \dots, x_l\}$ is $\beta$-well-separated w.r.t. $w_X$, then $l \leq k$.
\end{proposition}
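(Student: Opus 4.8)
The plan is to argue by contradiction using an optimal set of $k$ centers and the pigeonhole principle: if the well-separated set had $l \ge k+1$ points, then two of them would be forced to lie close to a common optimal center, contradicting well-separation. We may assume $\textsf{OPT} > 0$, since if $\textsf{OPT}=0$ an optimal set of $k$ centers already covers all of $X$ and the statement is either immediate (distinct points give $|X|\le k$) or vacuous.

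The heart of the argument is the following quantitative claim: for every $x \in X$, if $y^\star$ denotes a nearest optimal center to $x$, then $w_X(x)\cdot d(x, y^\star) \le 3\,\textsf{OPT}$. To prove it, fix a witness set $S_x \subseteq X$ realizing the natural weight, so $|S_x| = w_X(x)$ and $\sum_{p \in S_x} d(p,x) \le 2\,\textsf{OPT}$ (note $x \in S_x$ since $d(x,x)=0$). Write $D = d(x,y^\star)$. For any $p \in S_x$ and any optimal center $z$, the triangle inequality gives $d(p,z) \ge d(x,z) - d(p,x) \ge D - d(p,x)$, because $y^\star$ is a closest optimal center to $x$; hence the optimal solution pays at least $(D - d(p,x))^+ \ge D - d(p,x)$ for $p$. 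Summing over $p \in S_x$, which is a subset of $X$,
\[
\textsf{OPT} \;\ge\; \sum_{p \in S_x}\bigl(D - d(p,x)\bigr) \;=\; w_X(x)\,D - \sum_{p \in S_x} d(p,x) \;\ge\; w_X(x)\,D - 2\,\textsf{OPT},
\]
and rearranging yields $w_X(x)\,D \le 3\,\textsf{OPT}$.

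Now assume toward a contradiction that $l \ge k+1$. Let $y_1,\dots,y_k \in X$ be an optimal collection of $k$ centers achieving cost $\textsf{OPT}$ (these lie in $X$ by the paper's convention that centers come from $X$), and assign each $x_i$ to a nearest one. Since $l > k$, some two indices $a \ne b$ are assigned the same center $y$. Applying the claim to $x_a$ and to $x_b$ and using the triangle inequality,
\[
d(x_a, x_b) \;\le\; d(x_a, y) + d(y, x_b) \;\le\; \frac{3\,\textsf{OPT}}{w_X(x_a)} + \frac{3\,\textsf{OPT}}{w_X(x_b)} \;\le\; \frac{6\,\textsf{OPT}}{\min\{w_X(x_a), w_X(x_b)\}},
\]
so $\min\{w_X(x_a), w_X(x_b)\}\cdot d(x_a,x_b) \le 6\,\textsf{OPT}$. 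But $\beta$-well-separation of $\{x_1,\dots,x_l\}$ forces this product to be at least $\beta\,\textsf{OPT} > 8\,\textsf{OPT}$, a contradiction. Hence $l \le k$.

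The main obstacle is calibrating the constant in the key claim so that $\beta > 8$ (indeed $\beta > 6$) is enough: lower-bounding each witness point's optimal cost directly by $D - d(p,x)$ — rather than first discarding the witness points that are ``too far'' from $x$ — is what keeps the constant at $3$ and makes the final numerology go through. The remaining ingredients (pigeonhole over the optimal centers and the triangle inequality) are routine.
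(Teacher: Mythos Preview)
Your proof is correct and follows the same high-level skeleton as the paper's: show that each $x_i$ has an optimal center within weighted distance $O(\textsf{OPT})$, then use pigeonhole to force two of the well-separated points to share a center and derive a contradiction. The difference lies in how the closeness bound is obtained. The paper applies Markov's inequality to the witness set to place at least $w_i/2$ points inside $B(x_i, 2\textsf{OPT}/w_i)$, then argues that the enlarged ball $B(x_i, 4\textsf{OPT}/w_i)$ must contain an optimal center (otherwise those $w_i/2$ points alone would pay more than $\textsf{OPT}$); pairwise disjointness of these balls under $\beta>8$ gives $l\le k$. You instead lower-bound the optimal cost of \emph{every} witness point directly by $D - d(p,x)$ and sum, which sidesteps the Markov step altogether and yields the sharper constant $w_X(x)\,d(x,y^\star)\le 3\,\textsf{OPT}$, so that already $\beta>6$ suffices. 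Your route is a bit more elementary and quantitatively tighter; the paper's is more geometric (disjoint balls rather than a shared center). Both are short, and structurally they are the same argument.
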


\begin{proof}[Proof of Proposition \ref{prop: less_than_k_centers}]
For shorthand, let $w_i = w_X(x_i)$. By Markov's inequality, for each $i \in [l]$ there must be at least $w_i/2$ points from $X$ inside $B(x_i, 2\textsf{OPT}/w_i)$. Consider a clustering on $X$ with cost $\textsf{OPT}$ using $k$ centers. Then each ball $B(x_i, 4\textsf{OPT}/w_i)$ must contain at least one of these $k$ centers; for, if not, then at least $w_i/2$ points inside $B(x_i, 2\textsf{OPT}/w_i)$ must each pay strictly more than $2\textsf{OPT}/w_i$ to reach a center. This contradicts that the optimal $k$-median cost is $\textsf{OPT}$. 

Now it remains to show, using the well-separation assumption, that these balls are disjoint; this will imply that $l \leq k$, since each ball must contain a center. Suppose to the contrary that there exist $i,j \in [l]$, $i \neq j$, such that $p \in B(x_i, 4\textsf{OPT}/w_i) \cap B(x_j, 4\textsf{OPT}/w_j)$. Applying the triangle inequality gives 
\begin{align*}
d(x_i, x_j) \leq d(p,x_i) + d(p,x_j) &\leq \frac{4\textsf{OPT}}{w_i} + \frac{4\textsf{OPT}}{w_j} \leq \frac{8\textsf{OPT}}{\min\{w_i, w_j\}}
\end{align*}
which contradicts that $\{(x_i, w_i), (x_j, w_j)\}$ is $\beta$-well-separated, i.e., that $\min\{w_i, w_j\} \cdot d(x_i,x_j) \geq \beta\textsf{OPT}$, since $\beta > 8$. 
\end{proof}
 
\medskip

The next two propositions will be used to aid the proofs of Lemmas \ref{lem: well-sep-invariant} and \ref{lem: well_attached_centers}. Recall that for each non-intermediate phase $T$, we defined a set of offline centers $P_T$ that has cost at most $2\textsf{OPT}$ on $X(T)$ (Section \ref{sec: analysis_defs}). In order to compare the (low-cost) offline clustering induced by $P_T$ to our online algorithm's clustering, we relate the offline set of centers $P_T$ (which we \textit{know} have bounded cost on $X(T)$) to the pivots in phase $T$ (which are used to make the greedy online choices) in the next proposition. 

\medskip

\begin{proposition} \label{prop: digraph_attachment}
Let $P_T = \{p_1^T, \dots, p_T^T, y_1, \dots, y_k\}$ be as in Section \ref{sec: analysis_defs}. Then $y_i$ and $p(y_i)$ are $\beta_{T+1}$-attached w.r.t. the natural weights $w^T$ at the end of phase $T$. 
\end{proposition}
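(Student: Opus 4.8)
The plan is to prove this by contradiction, leveraging two facts: (i) by its definition (\ref{eq: p_arc}), $p(y_i)$ is the pivot of \emph{minimum} weighted distance to $y_i$, so if $y_i$ were $\beta_{T+1}$-well-separated from $p(y_i)$ it would be $\beta_{T+1}$-well-separated from \emph{every} pivot; and (ii) the algorithm never labels a point while an Add Operation is still applicable, so at the close of phase $T$ no point of $X(T)$ can be $\beta_{T+1}$-well-separated from the full pivot set $p_1^T,\dots,p_T^T$.

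First I would fix notation: let $x_{i-1}$ be the last point labelled during phase $T$, so $X(T)=X_{i-1}$ and $w^T = w_{i-1}$; recall that $y_1,\dots,y_k\in X_{i-1}$ are the optimal $k$-median centers chosen in the Estimated Center Subroutine (run while processing $x_i$, before any Add/Exchange Operation), and that (\ref{eq: p_arc}) takes its $\argmin$ over exactly the pivots $p_1^T,\dots,p_T^T$ using the weights $w_{i-1}=w^T$. Now suppose, toward a contradiction, that $y_i$ and $p(y_i)$ are \emph{not} $\beta_{T+1}$-attached w.r.t. $w^T$, i.e.
\[
\min\{w^T(y_i),\,w^T(p(y_i))\}\cdot d(y_i,p(y_i))\;\geq\;\beta_{T+1}\cdot\textsf{OPT}.
\]
By the minimality in (\ref{eq: p_arc}), for every $j\in[T]$,
\[
\min\{w^T(p_j^T),\,w^T(y_i)\}\cdot d(p_j^T,y_i)\;\geq\;\min\{w^T(p(y_i)),\,w^T(y_i)\}\cdot d(p(y_i),y_i)\;\geq\;\beta_{T+1}\cdot\textsf{OPT},
\]
so $y_i$ is $\beta_{T+1}$-well-separated from $\{p_1^T,\dots,p_T^T\}$ w.r.t. $w^T=w_{i-1}$.

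Since $y_i\in X_{i-1}$ and the configuration reached immediately after $x_{i-1}$ is labelled has pivots $p_1^T,\dots,p_T^T$ (so $t=T$) and weight function $w_{i-1}$, the point $y_i$ at that moment satisfies precisely the applicability condition of the Add Operation with $x_\alpha=y_i$. But the inner loop (a) is iterated until no Add (or Exchange) Operation is applicable \emph{before} $x_{i-1}$ receives its label, so no such point can exist — a contradiction, which proves the proposition.

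I do not anticipate a serious obstacle; the only real care needed is the bookkeeping that the natural weights $w^T$ are literally the weights $w_{i-1}$ that appear both in (\ref{eq: p_arc}) and in the Add-Operation applicability test at the close of phase $T$, and that the relevant pivot set there is exactly $\{p_1^T,\dots,p_T^T\}$ with count $t=T$. (One could instead attempt to derive a contradiction from Proposition \ref{prop: less_than_k_centers} by combining the displayed well-separation of $y_i$ from the pivots with the $\beta_T$-well-separation of $p_1^T,\dots,p_T^T$ supplied by Lemma \ref{lem: well-sep-invariant}, but that route only yields $T+1\le k$ rather than an outright contradiction, so the direct appeal to the algorithm's stopping rule is the one to use.)
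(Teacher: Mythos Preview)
Your proof is correct and follows essentially the same approach as the paper: both argue that if $y_i$ were $\beta_{T+1}$-well-separated from every pivot w.r.t.\ $w^T=w_{i-1}$, then an Add Operation would have been applicable while processing $x_{i-1}$, contradicting that $x_{i-1}$ is labelled in phase $T$. The only cosmetic difference is order of presentation---the paper first shows $y_i$ is attached to \emph{some} pivot and then invokes minimality of (\ref{eq: p_arc}), whereas you start from $p(y_i)$ and use minimality to propagate well-separation to all pivots---but the content is identical.
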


\begin{proof}[Proof of Proposition \ref{prop: digraph_attachment}]
First we show that $y_i$ is $\beta_{T+1}$-attached to at least one of $p_1^T, \cdots, p_T^T$ w.r.t. $w^T$. Suppose not. Then an Add Operation would have been executed instead, and phase $T$ would have terminated in the previous time step, which is a contradiction. That $y_i$ is $\beta_{T+1}$-attached to $p(y_i)$ in particular follows directly from the definition (\ref{eq: p_arc}) of $p(y_i)$ as the pivot with the minimum weighted distance to $y_i$. 
\end{proof}

\medskip

 Observe that the proof of Proposition \ref{prop: digraph_attachment} is a direct consequence of the fact that we always execute an Add Operation or an Exchange Operation when one is available, so during a phase none are available. Each attached pair in Proposition \ref{prop: digraph_attachment} is encoded in the digraph $D(T)$ by a directed arc. Due to Proposition \ref{prop: digraph_attachment}, we can now think of this directed arc as representing the direction in which we could move a certain number of points sitting near one endpoint to the other endpoint at bounded cost.

 Algorithmically, we used $P_T$ to define the estimated centers $c_j^T$ (\ref{eq: good_center}). Next we show that the estimated center for a cluster at the end of a phase is attached to the pivot for that cluster in that phase. Thus, while the pivot itself may not be a good center for the cluster, the pivot is close to the estimated center (in at least one direction, in a weighted sense). 

 \medskip

\begin{proposition} \label{prop: attached_estimated_center}
    The estimated center $c_j^T$ is $\beta_{T+1}$-attached to $p_j^T$ w.r.t. the natural weights $w^T$ at the end of phase $T$. Further, $w^T(c_j^T) \geq w^T(p_j^T)$, with equality if and only if $c_j^T = p_j^T$. 
\end{proposition}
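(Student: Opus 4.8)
The plan is to prove this by cases on how $c_j^T$ is defined via (\ref{eq: good_center}), using the structure of $\delta(p_j^T)$ together with Proposition \ref{prop: digraph_attachment}. Recall that at the end of phase $T$ the Estimated Center subroutine either sets $c_j^T = p_j^T$ (when $w^T(p_j^T) \geq \max_{p \in \delta(p_j^T)} w^T(p)$) or sets $c_j^T = \argmax_{p \in \delta(p_j^T)} w^T(p)$ otherwise. In the first case all three claims are immediate: $c_j^T = p_j^T$ is trivially $\beta_{T+1}$-attached to itself (indeed $d(c_j^T,p_j^T)=0$), the weights are equal, and the equality case holds. So the substance is the second case, where $c_j^T$ is some element $p \in \delta(p_j^T)$ strictly maximizing $w^T$ over that set, and in particular $w^T(c_j^T) > w^T(p_j^T)$, which gives the strict inequality and shows $c_j^T \neq p_j^T$ there; this establishes the ``$\geq$ with equality iff $c_j^T=p_j^T$'' part once we know the attachment claim.

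For the attachment claim in the second case, I would unpack the definition of membership in $\delta(p_j^T)$. There are two ways an element lands in $\delta(p_j^T)$: (i) it is the previous estimated center $c_j^{T^-}$ (what the excerpt writes $c_j$ for the current round), included precisely when $w^{T^-}(c_j^{T^-}) > w^{T^-}(p_j^T)$ \emph{and} $c_j^{T^-}$ is $\beta_{T+1}$-attached to $p_j^T$ w.r.t. the relevant weights; or (ii) it is an optimal center $y_h$ with $p(y_h) = p_j^T$ and $w(y_h) > w(p(y_h))$. In case (i), attachment to $p_j^T$ is part of the definition, so we are done. In case (ii), attachment of $y_h$ to $p(y_h)=p_j^T$ is exactly the content of Proposition \ref{prop: digraph_attachment}. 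Either way $c_j^T$ is $\beta_{T+1}$-attached to $p_j^T$ w.r.t. $w^T$, possibly after the standard monotonicity remark that attachment is preserved as weights grow and $\beta$ shrinks (the weights in the subroutine are taken at time $i-1$, i.e.\ essentially the end of phase $T$, so one should be careful to note they agree with or are dominated by $w^T$; this is the kind of bookkeeping detail I would spell out).

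The main obstacle I anticipate is precisely this bookkeeping about \emph{which} weight function and \emph{which} time step each attachment condition refers to: the subroutine uses $w_{i-1}$ (and the overloaded $w_t$), while the statement of the proposition is phrased in terms of $w^T$, the natural weights at the end of phase $T$. I would need to argue that attachment w.r.t.\ $w_{i-1}$ implies attachment w.r.t.\ $w^T$ — or that these coincide — using the footnoted monotonicity fact (natural weights only increase, $\beta_t$ only decreases) and the observation that $x_{i-1}$ is the last point labelled in phase $T$. Once that alignment is pinned down, the case analysis is short. A secondary point worth stating carefully is why $\delta(p_j^T)$ is nonempty exactly in the second branch — it is, by construction, since the branch is only taken when $\max_{p\in\delta(p_j^T)} w^T(p) > w^T(p_j^T)$, which forces $\delta(p_j^T)\neq\emptyset$ — so the $\argmax$ is well-defined.
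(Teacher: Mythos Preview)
Your approach is correct and essentially identical to the paper's: both unpack the definition (\ref{eq: good_center}) to see $c_j^T \in \delta(p_j^T) \cup \{p_j^T\}$, invoke Proposition~\ref{prop: digraph_attachment} for the $y_h$ case and the membership criterion for the $c_j^{T^-}$ case, and read off the weight inequality (with the tiebreak rule giving the ``iff''). On your bookkeeping worry: by the footnote, $x_{i-1}$ is the last point labelled in phase $T$, so $w_{i-1}=w^T$ exactly and no further argument is needed; note that the monotonicity route you float would not work, since attachment (unlike well-separation) is \emph{not} preserved as natural weights increase.
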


The proof of Proposition \ref{prop: attached_estimated_center} is a straightforward consequence of Proposition \ref{prop: digraph_attachment} and the definition (\ref{eq: good_center}) of estimated center; for completeness, it can be found in Appendix \ref{appendix: omitted_proofs_invariant}.

Using these propositions, we establish Lemma~\ref{lem: well-sep-invariant}, used heavily in our analysis. The full proof is involved, so we defer it to Appendix \ref{appendix: omitted_proofs_invariant} and provide a sketch of the key ideas here. 

\medskip

\begin{proof}[Proof sketch of Lemma \ref{lem: well-sep-invariant}.]
The proof is by induction. However, we need to couple the induction with a statement about the relative position of the estimated center for a cluster (which stays fixed between intermediate phases) to that cluster's pivot, which may change often as we consecutively reset the pivots between intermediate phases. Roughly, we prove below that if the estimated center for cluster $j$ has not separated entirely from the present set of pivots, then it must be close (in a weighted sense) to the present pivot for label $j$. 
\begin{restatable}{proposition}{innerinduct}\label{prop: intermediate_attachment}
Let $w_{i-1}$, $w_i$, and $w_t$ be as Section \ref{sec: add_op}. For each $j \in [T]$ and $t \in [T, T^+]$ such that $p_1^t, \cdots, p_t^t$ are defined,\footnote{Recall in Case 4 of the Add Operation and Case 5 of the Exchange Operation, we go directly from $t$ to $t+2$ pivots, skipping phase $t+1$.} 
\begin{equation} \label{eq: well-sep-inner-induction}
p_1^t, \dots, p_t^t \text{ are } \beta_{t}\text{-well-separated w.r.t. } w_t. \tag{$\Diamond$}
\end{equation}
Moreover, at least one of the following properties holds:
\begin{enumerate}[label=(\alph*)]
\item $c_j^T$ is $\beta_{t+1}$-well-separated from $p_1^t, \dots, p_t^t$ w.r.t. $w_i$.
\item $c_j^T$ is $\beta_{t+1}$-attached to $p_j^t$ w.r.t. $w_t$. 
\item $c_j^T$ is $f(t,T)$-attached to $p_j^t$ w.r.t. $w_t$ and $w_t(c_j^T) < w_t(p_j^t)$, where $f(t,T) = {\beta_T \cdot (t-T)}$.
\end{enumerate}
\end{restatable}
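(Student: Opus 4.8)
The plan is to prove the statement by a nested induction: an outer induction on the phase index $T$ (this is Lemma \ref{lem: well-sep-invariant}), and an inner induction on $t$ ranging over $[T, T^+]$, carrying along the disjunctive invariant (a)--(c) about the position of each estimated center $c_j^T$ relative to the current pivot $p_j^t$. The base case of the inner induction, $t = T$, is essentially Proposition \ref{prop: attached_estimated_center}: the well-separation \eqref{eq: well-sep-inner-induction} holds by definition of the phase (no Add/Exchange applicable, cf.\ Proposition \ref{prop: digraph_attachment}), and $c_j^T$ is $\beta_{T+1}$-attached to $p_j^T$ w.r.t.\ $w^T$, which gives alternative (b). I would then carry out the inductive step by a case analysis over which operation is executed to pass from $t$ pivots to the next defined count ($t+1$, or $t+2$ in Case 4 of Add / Case 5 of Exchange), and within each operation, over which of its numbered sub-cases fires.

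First I would verify that \eqref{eq: well-sep-inner-induction} is preserved. The new pivots introduced by an Add or Exchange Operation are, by the applicability conditions, $\beta_{t+1}$-well-separated (resp.\ attached in the way the operation requires) from the surviving pivots w.r.t.\ $w_i$; the surviving pivots retain their mutual separation by the inductive hypothesis and monotonicity of natural weights. The one delicate point is the pivots whose \emph{locations} get moved (to an estimated center $c_j$ or to $x_\alpha$): here I must use that the moved pivot is placed at a point that the operation's guard certified to be $\beta_{t+1}$- or $\beta_{t+2}$-well-separated from everything else, and that $\beta_{t+1} \ge \beta_t$ — wait, in fact $\beta_{t+1} = \beta_t/3 < \beta_t$, so I would instead argue that going from $t$ to $t+1$ pivots we only need the weaker $\beta_{t+1}$-separation, which is exactly what the guards supply, and that a single move of one pivot can degrade pairwise separations among the others by at most a triangle-inequality term bounded using attachment of the old pivot to the estimated center — this is where the factor $(t-T)$ and the function $f(t,T) = \beta_T(t-T)$ enters, accumulating one $\beta_T$-sized slack per pivot-reset within the phase block.

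The heart of the argument, and the step I expect to be the main obstacle, is propagating the disjunction (a)--(c) for a fixed estimated center $c_j^T$ through a pivot reset. When label $j$'s pivot itself is the one being moved (to $c_j^T$ in Add Cases 1,3,4 and Exchange Case 5, or $p_j$ remains and a new label borrows an old location), I must track four interacting quantities: the natural weights $w_t(c_j^T)$ vs.\ $w_t(p_j^t)$, the attachment parameter between them, and the fact that moving $p_j$ onto $c_j^T$ makes alternative (b) hold trivially, whereas if instead $c_j^T$ has drifted off it should land in (a). The subtle sub-case is (c): when $w_t(c_j^T) < w_t(p_j^t)$ and $c_j^T$ is only weakly ($f(t,T)$-)attached, I need to show the weak attachment parameter grows by at most an additive $\beta_T$ per step, which is a triangle-inequality estimate $d(c_j^T, p_j^{t+1}) \le d(c_j^T, p_j^t) + d(p_j^t, p_j^{t+1})$ combined with the $\beta_{t+1}$-attachment (equivalently, weighted-distance bound $\le \beta_{t+1}\textsf{OPT} \le \beta_T \textsf{OPT}$) used to justify the move $p_j^t \to p_j^{t+1}$, and care that the relevant min-weight in the weighted distance is still $w_t(c_j^T)$ because it is the smaller weight. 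I would organize this as a short lemma: ``one pivot move changes the $c_j^T$-to-$p_j$ weighted distance by at most $\beta_T \cdot \textsf{OPT}$, and cannot flip the weight inequality against $c_j^T$,'' and then the count of moves within $[T,T^+]$ is at most $t - T$, yielding $f(t,T)$. Finally, once the inner induction reaches $t = T^+$, specializing (a)--(c) at that value and using $\beta_{T+1} \le \beta_{T^-}$-type monotonicity together with an additive merge of the per-phase slacks recovers the statement of Lemma \ref{lem: well_attached_centers} for the pair $(T^-, T^+)$, closing the outer induction.
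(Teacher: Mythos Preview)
Your high-level architecture (outer induction on $T$, inner induction on $t$, carrying the disjunction (a)--(c)) matches the paper's, and your base case is essentially right, though the well-separation at $t=T$ comes from the \emph{outer} induction hypothesis, not from ``no Add/Exchange applicable''. But two parts of the plan would not go through as written.

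First, you misplace the role of $f(t,T)$. The accumulating $(t-T)\beta_T$ slack has nothing to do with the well-separation invariant (\ref{eq: well-sep-inner-induction}); the pivots are always exactly $\beta_t$-well-separated, and $\beta_t$ decreases geometrically by design, not by accumulated triangle-inequality loss. The function $f(t,T)$ lives entirely inside alternative (c), tracking how far $c_j^T$ can drift from $p_j^t$ when the pivot for label $j$ is repeatedly relocated. For (\ref{eq: well-sep-inner-induction}) in the hard cases (Case~4 of Add, Case~5 of Exchange), the paper does not track degradation; it uses the \emph{coupling} with (a)--(c) at step $t$ (specifically, that (b) must hold for the relevant labels) together with Proposition~\ref{prop: meta-prop} to place the moved pivot correctly.

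Second, your ``short lemma'' is too coarse to propagate the disjunction. The quantity you propose to track---weighted distance from $c_j^T$ to $p_j^t$---governs only the (b)/(c) alternatives. It says nothing about alternative (a), which requires $c_j^T$ to be $\beta_{t+1}$-well-separated from \emph{every} pivot, including freshly created ones located at some other $c_{j^*}^T$ or at $x_\alpha$ or $x_\gamma$. To rule out $c_j^T$ being attached to a foreign new pivot $c_{j^*}^T$, the paper needs a separate fact (that distinct estimated centers $c_j^T, c_{j'}^T$ are $\beta_{T+1}$-well-separated w.r.t.\ $w_{i-1}$), and to rule out attachment to $x_\alpha$ or $x_\gamma$ it uses the specific guards of the operation that fired plus Proposition~\ref{prop: meta-prop}. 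The transitions (b)$\to$(a) and (b)$\to$(b) therefore require genuine casework over the nine sub-cases of the two operations; they are not consequences of a single triangle-inequality bound on $d(c_j^T, p_j^{t+1})$. Likewise, your clause ``cannot flip the weight inequality against $c_j^T$'' is not established: the pivot $p_j^t$ relocates, so $w_t(p_j^t)$ can jump, and whether the inequality is preserved depends on the operation (in Case~2 of Exchange it is the \emph{guard} $w_i(c_j^T) < w_i(p_j^t)$ that forces entry into (c), not an invariant you can prove once).
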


For the proof sketch we focus on Case 4 of the Add Operation, which will give a flavor of the arguments. This is a concerning case a priori; for, if we were to add $x_{\alpha}$ to the set of pivots as in Cases 2 and 3, it is ambiguous as to whether $x_{\alpha}$ should be associated with label $f$ or $g$, as both $c_f^T$ and $c_g^T$ are close to $x_{\alpha}$ (see Appendix \ref{appendix: label_conflicts} for a diagram). We maneuver around the issue by making $c_f^T$ and $c_g^T$ new pivots and excluding $x_{\alpha}$. However, it is not immediately clear that such a step will preserve the desired invariants. To give intuition, we suppress the separation parameters and the precise weights used, though emphasize both are brittle (e.g., the arguments rely heavily on $\beta_t$ decreasing with $t$, see also Appendix \ref{appendix: decreasing_well_sep}). The directions of attachment between points (arrows in Figure \ref{fig: pro3}) are also crucial.  We will also see why we need to couple the induction with (a)---(c). 

To prove the inductive step for (\ref{eq: well-sep-inner-induction}) when Case 4 of the Add Operation is performed,  we need to show (i) $c_f^T$ and $c_g^T$ are well-separated, (ii), WLOG, $c_f^T$ is well-separated from $p_f^t$, and (iii), WLOG, $c_f$ is well-separated from $p_l^t$, $l \neq f$. See Figure \ref{fig: pro3}. When we say ``close'' or ``far'' below, we always mean in a weighted sense. For (i), because $p_f^T$ is close to $c_f^T$ (Proposition \ref{prop: attached_estimated_center}) and likewise for $p_g^T$, $c_g^T$, then $c_f^T$ and $c_g^T$ cannot be close, since this would violate that $p_f^T$ and $p_g^T$ are (inductively) far. To prove (ii), note $x_{\alpha}$ is far from $p_f^t$ by assumption of the Add Operation, and $c_f^T$ is close to $x_{\alpha}$ by assumption of Case 4, so $p_f^t$ and $c_f^T$ must be far. Finally for (iii), one can (inductively) deduce that (b) must hold when $j=f$, so $c_f^T$ and $p_f^t$ are close; but, since $p_f^t$ and $p_l^t$ are (inductively) far, $c_f^t$ and $p_l^t$ must be far. 

Proving the inductive step for (a)---(c) involves detailed casework. The Add and Exchange Operations are engineered so that, loosely speaking, an estimated center is either attached to the corresponding present pivot, or else breaks off to form its own pivot. A main subtlety is the direction and strength of attachment, e.g., property (c). Another is the sequence of operations, specifically, the Add Operation taking precedence over the Exchange Operation. 
\end{proof}

\begin{figure}[ht]
\centering
\includegraphics[width=0.7\textwidth]{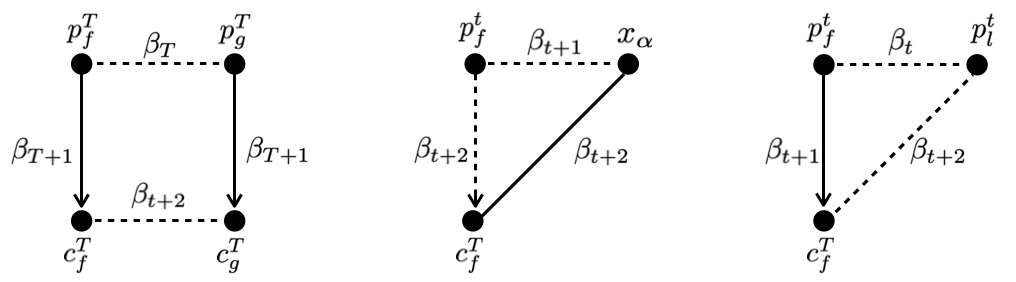}
\captionsetup{width=.9\linewidth}
\caption{Cases (i)---(iii) in the proof sketch of Lemma \ref{lem: well-sep-invariant}. Dashed lines indicate well-separation and solid lines indicate attachment, labelled with the appropriate parameters. Arrows go from smaller to larger natural weights.}
\label{fig: pro3}
\end{figure}

Theorem \ref{thm: under_k_labels} follows from Lemma \ref{lem: well-sep-invariant} and Proposition \ref{prop: less_than_k_centers} once we observe that we have set $\beta_1$ sufficiently large. For completeness, we include the proof below.

\begin{proof}[Proof of Theorem \ref{thm: under_k_labels}]
The number of labels used by the algorithm is the number of pivots in the last phase. By Lemma \ref{lem: well-sep-invariant}, we maintain the invariant that pivots $p_1^t, \dots, p_t^t$ are $\beta_t$-well-separated w.r.t. the natural weights at every time step in phase $t$. Suppose to the contrary that the final number of pivots is strictly more than $k$. Then at some point there are $t = k+1$ or $t = k+2$\footnote{The algorithm may skip a phase, hence we consider both cases.} pivots that are $\beta_t$-well-separated w.r.t. the natural weights throughout phase $t$. But $\beta_{k+2} = 8$, and it is impossible for $k+2$ points to be 8-well-separated, by Proposition \ref{prop: less_than_k_centers}. We conclude the final number of pivots is at most $k$, so the algorithm uses at most $k$ labels.
\end{proof}

\medskip

As the proof of Proposition \ref{prop: intermediate_attachment} shows,  the Add and Exchange operations are engineered so that the estimated center $c_j^{T^-}$ is close to $p_j^T$ at the beginning of phase $T$. Lemma \ref{lem: well_attached_centers} states that this property is maintained through the end of phase $T$. In essence, this is because no Add or Exchange operations are executed during phase $T$, so we can show that the attachment between $c_j^{T^-}$ and $p_j^T$ is static---even as natural weights increase. The proof is below. 

\medskip

\begin{proof}[Proof of Lemma \ref{lem: well_attached_centers}]
We need to show that: either $w^T(c_j^T) \leq w^{T^+}(p_j^{T^+})$ and $w^{T}(c_j^{T}) \cdot d(c_j^{T}, p_j^{T^+}) \leq \beta_T(T^+ - T) \cdot \textsf{OPT}$, \underline{or} $c_j^{T}$ is $\beta_{T^+ + 1}$-attached to $p_j^{T^+}$ w.r.t. $w^{T^+}$.

We need the second part of Proposition \ref{prop: intermediate_attachment}. When we reach the start of phase $T^+$, there are no Add Operations or Exchange Operations available, so (a) in the statement of Proposition \ref{prop: intermediate_attachment} cannot hold when $t = T^+$. Thus either (b) or (c) must hold.  

Let $w_{T^+}$ denote the natural weights at the \textit{start} of phase $T^+$. Note this notation is consistent with taking $t=T^+$ in $w_t$ in Proposition \ref{prop: intermediate_attachment}.  

\setcounter{case}{0}

\begin{case} \label{case: cs1-attachment-lemma}
In Proposition \ref{prop: intermediate_attachment}, (c) holds. 
\end{case}
If (c) holds, then 
\[ w^T(c_j^T) \leq w_{T^+}(c_j^T) < w_{T^+}(p_j^{T^+}) \leq w^{T^+}(p_j^{T^+}), \hspace{0.3cm} \text{and}\] 
\[w^T(c_j^T) \cdot d(c_j^T, p_j^{T^+}) \leq w_{T^+}(c_j^T) \cdot d(c_j^T, p_j^{T^+}) \leq \beta_T(T^+ - T) \cdot \textsf{OPT}\]
where the second inequality in the first line and the last inequality in the second line follow from (c) holding in Proposition \ref{prop: intermediate_attachment}. 

\begin{case} 
In Proposition \ref{prop: intermediate_attachment}, (c) does not hold. 
\end{case}
If (c) does not hold, then $w_{T^+}(c_j^T) \geq w_{T^+}(p_j^{T^+})$ and $c_j^T$ is $\beta_{{T^+}+1}$-attached to $p_j^{T^+}$ w.r.t. $w_{T^+}$, i.e., the weights at the \textit{beginning} of phase $T^+$ (since $\beta_{T^+ +1} < \beta_T(T^+ - T)$). We need to show that $c_j^T$ is $\beta_{{T^+}+1}$-attached to $p_j^{T^+}$ w.r.t. $w^{T^+}$, i.e., the weights at the \textit{end} of phase $T^+$. 
Recall that through the end of phase $T^+$, $c_j^T$ remains $\beta_{{T^+}+1}$-attached to at least one of the ${T^+}$ pivots (otherwise, the phase would terminate and an Add Operation would be executed). So it just remains to show that at the end of phase ${T^+}$, $c_j^T$ is still $\beta_{{T^+}+1}$-attached to $p_j^{T^+}$ in particular, w.r.t. $w^{T^+}$. Suppose to the contrary that $c_j^T$ is $\beta_{{T^+}+1}$-attached to $p_{j'}^{T^+}$ w.r.t. $w^{T+}$, where $j' \neq j$. Then it must be the case that $c_j^T$ is $\beta_{{T^+}+1}$-attached to $p_{j'}^{T^+}$ w.r.t. $w_{T+}$. But we also know that $w_{T^+}(c_j^T) \geq w_{T^+}(p_j^{T^+})$ and $c_j^T$ is $\beta_{{T^+}+1}$-attached to $p_j^{T^+}$ w.r.t. $w_{T^+}$. By Proposition \ref{prop: meta-prop}, this contradicts that $p_{j}^{T^+}$ and $p_{j'}^{T^+}$ are $\beta_{T^+}$-well-separated w.r.t. $w_{T^+}$.

\end{proof}

Having established Lemma \ref{lem: well_attached_centers}, we are ready to bound the cost of the algorithm, using the present pivot $p_j^T$ as the ``bridge'' between old and newly arriving points given label $j$. We will show inductively that, at bounded cost, we can move the old points to $c_j^{T^-}$, which is in some sense close to $p_j^T$ by Lemma \ref{lem: well_attached_centers}. In turn, $p_j^T$ dictates the greedy choices for the new points. So we will combine the cost of old and new points via $p_j^T$.

\section{Bounding the Algorithm's Cost} \label{sec:cost}

Throughout, let $cost(S; c) = \sum_{p \in S} d(p,c)$ for $S \subseteq X$ and $c \in X$.

As a first step, we need to bound the cost contribution of points that arrive during a single phase. The strategy is to compare the online greedy choices with the offline optimal choices, and to show these are sufficiently similar. More specifically, we know that in $D(T)$ each offline optimal center $y_i$ in $P_T$ is in the neighborhood of exactly one pivot, namely $p(y_i)$, and $y_i$ and $p(y_i)$ are close in a weighted sense, i.e., attached (Proposition \ref{prop: digraph_attachment}).  We further know that  since no Exchange Operations are executed during a phase, we can show that if $y_{i_1}$ and $y_{i_2}$ are in the neighborhood of the same pivot ($p(y_{i_1}) = p(y_{i_2})$), then $y_{i_1}$ and $y_{i_2}$ are also close in a weighted sense. 

Using these facts, we would be in good shape if we could show that for every point arriving during phase $T$, it is the case that if the point is assigned to $y_i$ in the offline optimal solution, then it receives the label of pivot $p(y_i)$ online. While this is not quite true, we can instead show that the number of points that do not satisfy this condition is small relative to the natural weights of their pivot, owing to the well-separated invariant (Lemma \ref{lem: well-sep-invariant}). Further, we show that these points can still be moved to their pivots at bounded cost, due to the greedy labelling rule. In effect, we will \textit{charge} the cost of these ``far'' points to their pivot. Lemma \ref{lem: far_points} summarizes this argument, and is used to prove the main theorem, Theorem \ref{thm: main_thm}.

\begin{lemma} \label{lem: far_points}
Let $T$ be a non-intermediate phase. For any $j \in [T]$, let $C_j$ be the points given label $j$ during phase $T$, i.e., $C_j = C_j^T \setminus C_j^{T^-}$. Define $S_{ji}$ to be be the set of elements in $C_j$ assigned to $y_i$ in the clustering of $X(T) \setminus X(T^-)$ induced by $P_T$. Define $S_{far, j} = \bigcup_{i : p(y_i) \neq p_j^T} S_{ji}$. Then 
\begin{enumerate}
\item $cost(S_{far,j}; p_j^T) \leq k \cdot (\beta_{T+1} + 2) \cdot \textsf{OPT}$, and 
\item $|S_{far,j}| \leq k \cdot w^T(p_j^T)$, where $w^T$ denotes the natural weights at the end of phase $T$.
\end{enumerate}
\end{lemma}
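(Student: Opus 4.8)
The plan is to bound the two quantities by exploiting two ingredients: the well-separated invariant (Lemma \ref{lem: well-sep-invariant}), which controls how many points can lie near a pivot other than $p_j^T$, and the greedy labelling rule, which guarantees that a point landing in $C_j$ rather than receiving the label of $p(y_i)$ must nonetheless be reasonably close to $p_j^T$. I would first fix a single summand $S_{ji}$ with $p(y_i) \neq p_j^T$ and argue that $|S_{ji}| \le w^T(p_j^T)$ and $cost(S_{ji}; p_j^T) \le (\beta_{T+1}+2)\cdot\textsf{OPT}$; summing over the at most $k$ values of $i$ (there are at most $k$ offline centers $y_i$) then yields both parts of the lemma.

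For the size bound on $S_{ji}$: every point $x \in S_{ji}$ is assigned to $y_i$ in the $P_T$-clustering, so $d(x, y_i) \le d(x, p_j^T)$ is false in general, but $d(x,y_i) \le d(x, p(y_i))$; meanwhile $x \in C_j$ means greedy chose $p_j^T$ over $p(y_i)$, so $d(x, p_j^T) \le d(x, p(y_i))$. Combining with the triangle inequality, $d(y_i, p(y_i)) \le d(y_i,x) + d(x,p(y_i))$, and the total cost $\sum_{x\in S_{ji}} d(x,y_i) \le 2\textsf{OPT}$ since $P_T$ has cost at most $2\textsf{OPT}$ on $X(T)\setminus X(T^-)$ (actually on $X(T)$, but the induced assignment cost of this subset is no larger). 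The trick is that $y_i$ and $p(y_i)$ are $\beta_{T+1}$-attached w.r.t. $w^T$ (Proposition \ref{prop: digraph_attachment}), i.e. $\min\{w^T(y_i), w^T(p(y_i))\}\cdot d(y_i, p(y_i)) < \beta_{T+1}\textsf{OPT}$, so $d(y_i,p(y_i))$ is small relative to the reciprocal of a natural weight; I want to leverage this together with greedy to show each $x\in S_{ji}$ is within roughly $2\textsf{OPT}/w^T(p_j^T)$ of $p_j^T$, and then the definition of natural weight ($w^T(p_j^T)$ being the max number of points summing to at most $2B$ in distance) caps $|S_{ji}|$ by $w^T(p_j^T)$. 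The quantitative cost bound $cost(S_{ji};p_j^T) \le (\beta_{T+1}+2)\textsf{OPT}$ should then follow by writing $d(x,p_j^T) \le d(x,p(y_i)) \le d(x,y_i) + d(y_i,p(y_i))$, summing the first term to at most $2\textsf{OPT}$ and bounding the sum of the second term over $S_{ji}$ by $|S_{ji}|\cdot d(y_i,p(y_i)) \le w^T(p(y_i))\cdot d(y_i,p(y_i)) \le \beta_{T+1}\textsf{OPT}$ using attachment in the direction $w^T(y_i)\le w^T(p(y_i))$ — here I must be careful about which endpoint has the larger weight, and handle the other orientation separately (when $w^T(y_i) > w^T(p(y_i))$, I instead bound $|S_{ji}| \le w^T(y_i)$ via the fact that $P_T$ concentrates mass near $y_i$, then use that orientation of attachment).

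The main obstacle I expect is exactly this case analysis on the direction of attachment between $y_i$ and $p(y_i)$, combined with making sure the bound is in terms of $w^T(p_j^T)$ and not $w^T(p(y_i))$ or $w^T(y_i)$. When $p(y_i)\neq p_j^T$, I get a handle on points near $p(y_i)$, but I need points near $p_j^T$; the bridge is the greedy inequality $d(x,p_j^T)\le d(x,p(y_i))$, which says the $S_{ji}$ points are at least as close to $p_j^T$ as to $p(y_i)$, hence form an even "cheaper" cluster around $p_j^T$, so their count is still bounded by $w^T(p_j^T)$ provided their total distance to $p_j^T$ is at most $2B$ — which is what part 1 (with the $(\beta_{T+1}+2)$ slack, possibly needing $\beta_{T+1}$ small enough, but here we only need the sum over a single $i$ to be $O(1)\cdot B$ and then invoke the weight definition, or more cleanly use Markov to say at least half are within twice the average). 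I would close by summing over $i$: $|S_{far,j}| = \sum_{i: p(y_i)\neq p_j^T} |S_{ji}| \le k\cdot w^T(p_j^T)$ and $cost(S_{far,j};p_j^T) = \sum_i cost(S_{ji};p_j^T) \le k(\beta_{T+1}+2)\textsf{OPT}$, using that the index $i$ ranges over at most $k$ values.
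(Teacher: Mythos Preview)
Your overall architecture --- prove $|S_{ji}| \le w^T(p_j^T)$ and $cost(S_{ji};p_j^T) \le (\beta_{T+1}+2)\textsf{OPT}$ for each $i$ with $p(y_i)\neq p_j^T$, then sum over at most $k$ indices --- is exactly what the paper does, and your treatment of the cost bound in the orientation $w^T(y_i)\le w^T(p(y_i))$ (using $|S_{ji}|\le m(y_i)\le w^T(y_i)$ and the attachment inequality for that endpoint) is correct and matches the paper's Claim~2.

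There are, however, two genuine gaps. First, in the other orientation $w^T(y_i) > w^T(p(y_i))$, you bound $|S_{ji}|\le w^T(y_i)$, but attachment only gives $w^T(p(y_i))\cdot d(y_i,p(y_i)) < \beta_{T+1}\textsf{OPT}$, so multiplying by $|S_{ji}|$ overshoots. The paper closes this by a contradiction argument: if $|S_{ji}|\ge w^T(p(y_i))$, take a subset $S'\subseteq S_{ji}$ of exactly that size, show $cost(S';p(y_i)) < (\beta_{T+1}+2)\textsf{OPT}$ while $cost(S';p_j^T) \ge (2\beta_{T+1}-2)\textsf{OPT}$ (the latter using that $w^T(p(y_i))\cdot d(y_i,p_j^T) \ge 2\beta_{T+1}\textsf{OPT}$, a consequence of the $\beta_T$-well-separation of $p_j^T$ and $p(y_i)$ via Proposition~\ref{prop: meta-prop}), contradicting greedy. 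Hence $|S_{ji}| < w^T(p(y_i))$, and then your computation goes through.

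Second, your route to the size bound $|S_{ji}|\le w^T(p_j^T)$ does not work: the natural-weight definition caps sets whose total distance to $p_j^T$ is at most $2B$, but you only have $cost(S_{ji};p_j^T)\le (\beta_{T+1}+2)\textsf{OPT}$, which is too large (and Markov does not help, since it cannot bound the total count). The paper's argument is different and is where the well-separated invariant is actually used: from $\beta_T$-well-separation of $p_j^T$ and $p(y_i)$ together with the attachment of $y_i$ to $p(y_i)$, Proposition~\ref{prop: meta-prop} yields the \emph{lower} bound $w^T(p_j^T)\cdot d(p_j^T,y_i) \ge 2\beta_{T+1}\textsf{OPT}$. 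On the other hand, triangle inequality plus the cost bound gives the \emph{upper} bound
\[
|S_{ji}|\cdot d(p_j^T,y_i) \le cost(S_{ji};p_j^T) + cost(S_{ji};y_i) \le (\beta_{T+1}+2)\textsf{OPT} + 2\textsf{OPT} \le 2\beta_{T+1}\textsf{OPT}.
\]
Dividing by $d(p_j^T,y_i)$ gives $|S_{ji}|\le w^T(p_j^T)$. (A short separate check handles the sub-case $w^T(p_j^T)\ge w^T(y_i)$, where $|S_{ji}|\le w^T(y_i)\le w^T(p_j^T)$ is immediate.) You named the well-separated invariant as an ingredient but never cashed it in; this division step is precisely where it enters.
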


The following lemma states that the number of points in a cluster by the end of any phase is a bounded factor away from the natural weight (at the end of the phase) of the estimated center for that cluster at the end of the phase. 

\begin{lemma} \label{lem: sufficiently_weighted_centers}
Let $T$ be a non-intermediate phase and $j \in [T]$. Let $w^T(c_j^T)$ denote the natural weight of $c_j^T$ at the end of phase $T$ and let $C_j^T$ denote the set of points in cluster $j$ by the end of phase $T$. Then 
$$|C_j^T| \leq (2k+1)\cdot T \cdot w^T(c_j^T).$$

\end{lemma}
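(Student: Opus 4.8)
The plan is to induct on the sequence of non-intermediate phases, using the decomposition $C_j^T = C_j^{T^-} \cup C_j$ where $C_j = C_j^T \setminus C_j^{T^-}$ is the set of points given label $j$ \emph{during} phase $T$. The inductive hypothesis gives $|C_j^{T^-}| \leq (2k+1)(T^-) \cdot w^{T^-}(c_j^{T^-})$ (for the label $j$, tracing back through which pivot label $j$'s cluster descends from; one must be slightly careful since labels get shuffled by Add/Exchange, but the estimated center / pivot bookkeeping of Section \ref{sec: analysis_defs} makes this precise). So it suffices to bound $|C_j|$ by roughly $O(k) \cdot w^T(c_j^T)$ and to argue that $w^{T^-}(c_j^{T^-})$ is, up to the right factor, at most $w^T(c_j^T)$.

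For bounding $|C_j|$: partition $C_j = S_{far,j} \cup \bigcup_{i : p(y_i) = p_j^T} S_{ji}$ as in Lemma \ref{lem: far_points}. By part (2) of that lemma, $|S_{far,j}| \leq k \cdot w^T(p_j^T)$. For each remaining $i$ with $p(y_i) = p_j^T$, the set $S_{ji}$ consists of points assigned to $y_i$ in the offline clustering induced by $P_T$ on $X(T) \setminus X(T^-)$; since that clustering has cost at most $2\textsf{OPT}$, the number of points within distance $2\textsf{OPT}/w^T(y_i)$ of $y_i$ is controlled, and more directly $|S_{ji}| \leq w^T(y_i)$ would follow if \emph{all} of $S_{ji}$ paid at least average distance — instead I would use that by Proposition \ref{prop: digraph_attachment}, $y_i$ and $p(y_i) = p_j^T$ are $\beta_{T+1}$-attached w.r.t. $w^T$, so $\min\{w^T(y_i), w^T(p_j^T)\} \cdot d(y_i, p_j^T) < \beta_{T+1}\textsf{OPT}$, which together with the natural-weight definition lets me bound $|S_{ji}|$ by a small multiple of $\max\{w^T(y_i), w^T(p_j^T)\}$; and the number of distinct such $y_i$ is at most $k$. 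Combining, $|C_j| \leq O(k) \cdot \max_i \{w^T(y_i), w^T(p_j^T)\}$ over the relevant $i$'s, which by Proposition \ref{prop: attached_estimated_center} (giving $w^T(p_j^T) \leq w^T(c_j^T)$) and the fact that each relevant $y_i$ is attached to $p_j^T$ and hence has natural weight comparable to it, is $O(k) \cdot w^T(c_j^T)$. Tracking constants, this should come out to at most $(2k+1) \cdot w^T(c_j^T)$ contributed by phase $T$ alone.

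For the cross-phase term: I need $w^{T^-}(c_j^{T^-}) \leq w^T(c_j^T)$ up to a harmless factor. Natural weights are monotone in time (stated in the Preliminaries), so $w^{T^-}(c_j^{T^-}) \leq w^T(c_j^{T^-})$. Then Lemma \ref{lem: well_attached_centers} says either $w^{T^-}(c_j^{T^-}) \leq w^T(p_j^T) \leq w^T(c_j^T)$ directly (case (a)), or $c_j^{T^-}$ is $\beta_{T+1}$-attached to $p_j^T$ w.r.t. $w^T$ (case (b)); in case (b), $c_j^{T^-}$ is within weighted distance $\beta_{T+1}\textsf{OPT}$ of $p_j^T$, so one can absorb the weight of $c_j^{T^-}$ into that of a nearby high-weight center, again at the cost of a constant factor, to conclude $w^{T^-}(c_j^{T^-}) = O(w^T(c_j^T))$. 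Then the induction closes: $|C_j^T| \leq |C_j^{T^-}| + |C_j| \leq (2k+1)(T-1) w^T(c_j^T) + (2k+1) w^T(c_j^T) = (2k+1) T \, w^T(c_j^T)$, using $w^{T^-}(c_j^{T^-}) \leq w^T(c_j^T)$ to push the inductive bound forward.

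The main obstacle I expect is the cross-phase weight comparison — precisely the tension flagged before Lemma \ref{lem: well_attached_centers} and Lemma \ref{lem: sufficiently_weighted_centers}: $c_j^{T^-}$ need not equal $c_j^T$ and need not even be the pivot's location, so controlling $w^{T^-}(c_j^{T^-})$ by $w^T(c_j^T)$ forces careful use of the attachment alternatives in Lemma \ref{lem: well_attached_centers}, and case (b) there only gives \emph{attachment}, not a weight inequality, so the absorption argument (replacing $c_j^{T^-}$ by a heavier nearby point without blowing up the final constant beyond what $(2k+1)T$ allows) is where the bookkeeping is delicate. A secondary subtlety is the label relabeling across Add/Exchange operations, which must be threaded through so that "$c_j^{T^-}$" in the induction refers to the estimated center that actually feeds into $p_j^T$.
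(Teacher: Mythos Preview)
Your inductive skeleton and the decomposition $C_j^T = C_j^{T^-}\cup S_{far,j}\cup S_{near,j}\cup S_j$ match the paper exactly, and invoking Lemma~\ref{lem: far_points}(2) for $|S_{far,j}|$ is right. But there is a genuine gap in two places, and both stem from the same missed observation.

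First, for the near pieces you write that $|S_{ji}|$ can be bounded by ``a small multiple of $\max\{w^T(y_i),w^T(p_j^T)\}$'' and then that ``each relevant $y_i$ is attached to $p_j^T$ and hence has natural weight comparable to it.'' Attachment gives no such comparability: it only says $\min\{w^T(y_i),w^T(p_j^T)\}\cdot d(y_i,p_j^T)$ is small, and places no bound whatsoever on the ratio of the weights. What actually works is simpler and sharper: $|S_{ji}|\le w^T(y_i)$ straight from the definition of natural weight (the offline clustering has cost $\le 2\textsf{OPT}$), and then the crucial point is that $c_j^T$ is \emph{defined} (equation~(\ref{eq: good_center})) as the $\argmax$ of $w^T$ over $\delta^+(p_j^T)\cup\{p_j^T\}$. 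So if $y_i\in\delta^+(p_j^T)$ then $w^T(y_i)\le w^T(c_j^T)$ by definition, and if $y_i\in\delta^-(p_j^T)$ then $w^T(y_i)\le w^T(p_j^T)\le w^T(c_j^T)$. No ``comparability'' argument is needed or valid.

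Second, and more damagingly, in case~(b) of the cross-phase step you only claim $w^{T^-}(c_j^{T^-})=O(w^T(c_j^T))$ ``at the cost of a constant factor,'' yet your final line needs the inequality with constant exactly~$1$ for the induction to give $(2k+1)T$ rather than something exponential in $T$. The paper resolves this with the same argmax observation: in case~(b), if $w^T(c_j^{T^-})>w^T(p_j^T)$ then by construction $c_j^{T^-}\in\delta^+(p_j^T)$ (the Estimated Center Subroutine explicitly puts the previous estimated center into $\delta(p_j)$ under exactly these conditions), and hence $w^T(c_j^{T^-})\le w^T(c_j^T)$ on the nose. That is the missing idea; once you use it, both gaps close and your outline becomes the paper's proof. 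Your worry about label relabeling is not an issue: Add/Exchange only relocate pivots and introduce fresh labels, so $c_j^{T^-}$ for $j\in[T^-]$ is unambiguous.
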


\begin{lemma}\label{lem: cross_phase_bounded_cost} 
Let $T$ be a non-intermediate phase and $j \in [T]$. Then  $cost(C_j^T)$ is bounded against center $c_j^T$, i.e., 
\[ \sum_{x \in C_j^T} d(x, c_j^T) \leq g(T,k) \cdot \textsf{OPT}, \hspace{0.2cm}
g(T,k) = T \cdot g(k),  \hspace{0.1cm} g(k) = \beta_1(2k^3 + 3k^2 +5k+1) + 2k + 4.\] 
\end{lemma}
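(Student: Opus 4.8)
The plan is to prove Lemma~\ref{lem: cross_phase_bounded_cost} by induction on the non-intermediate phases $T$, using the present pivot $p_j^T$ as the bridge between the ``old'' points (those in $C_j^{T^-}$) and the ``new'' points (those in $C_j := C_j^T \setminus C_j^{T^-}$), exactly as foreshadowed in the text. First I would split $cost(C_j^T; c_j^T) \le cost(C_j^{T^-}; c_j^T) + cost(C_j; c_j^T)$, and bound each piece via the triangle inequality routed through $p_j^T$. For the old points, the inductive hypothesis gives $cost(C_j^{T^-}; c_j^{T^-}) \le g(T^-,k)\cdot \textsf{OPT}$; I then move this cost from $c_j^{T^-}$ to $p_j^T$ at an additive cost of $|C_j^{T^-}|\cdot d(c_j^{T^-}, p_j^T)$, and from $p_j^T$ to $c_j^T$ at cost $|C_j^{T^-}|\cdot d(p_j^T, c_j^T)$. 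To control these, I would invoke Lemma~\ref{lem: well_attached_centers} (relating $c_j^{T^-}$ to $p_j^T$ in a weighted sense), Proposition~\ref{prop: attached_estimated_center} (relating $c_j^T$ to $p_j^T$), and Lemma~\ref{lem: sufficiently_weighted_centers} to bound $|C_j^{T^-}|$ by a multiple of the relevant natural weight. The point of ``attachment'' is precisely that $(\text{weight})\times(\text{distance}) \le (\text{const})\cdot\textsf{OPT}$, so the products $|C_j^{T^-}|\cdot d(\cdot,\cdot)$ telescope into $O(\text{poly}(k,T))\cdot\textsf{OPT}$ provided $|C_j^{T^-}|$ is comparable to the smaller of the two natural weights in each attached pair.

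The subtlety in the old-points bound is that Lemma~\ref{lem: well_attached_centers} gives two alternatives, only one of which directly bounds $w^{T^-}(c_j^{T^-})\cdot d(c_j^{T^-}, p_j^T)$ by $\beta_{T^-}(T - T^-)\cdot\textsf{OPT}$; in the other alternative ($c_j^{T^-}$ merely $\beta_{T+1}$-attached to $p_j^T$ w.r.t.\ $w^T$), I would instead pay $\min\{w^T(c_j^{T^-}), w^T(p_j^T)\}\cdot d(c_j^{T^-},p_j^T)\le \beta_{T+1}\cdot\textsf{OPT}$ and separately argue the number of old points is bounded by a multiple of that minimum weight, again via Lemma~\ref{lem: sufficiently_weighted_centers}. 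Either way the old-points contribution is $g(T^-,k)\cdot\textsf{OPT}$ plus a term of order $k\cdot T\cdot\beta_1\cdot\textsf{OPT}$.

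For the new points $C_j$, I would use Lemma~\ref{lem: far_points}: partition $C_j$ into $S_{near,j} = \bigcup_{i: p(y_i)=p_j^T} S_{ji}$ and $S_{far,j}$. The far points are handled directly by Lemma~\ref{lem: far_points}(1): $cost(S_{far,j}; p_j^T)\le k(\beta_{T+1}+2)\textsf{OPT}$, and then $cost(S_{far,j}; c_j^T)$ adds $|S_{far,j}|\cdot d(p_j^T,c_j^T) \le k\cdot w^T(p_j^T)\cdot d(p_j^T,c_j^T)$, which is $O(k\beta_{T+1})\textsf{OPT}$ by Proposition~\ref{prop: attached_estimated_center} (part 1 of Lemma~\ref{lem: far_points} bounds the count by $k\,w^T(p_j^T)$, part handled). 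For $S_{near,j}$, each such point $x$ is assigned offline to some $y_i$ with $p(y_i)=p_j^T$; since the online rule is greedy and $p_j^T$ is the nearest pivot to $x$, I get $d(x, p_j^T)\le d(x, y_i) + d(y_i, p_j^T)$ and then route to $c_j^T$. Summing $d(x,y_i)$ over all new points and all $i$ is at most the offline cost $2\textsf{OPT}$ of $P_T$ on $X(T)\setminus X(T^-)$; summing $d(y_i, p_j^T)$ needs that each relevant $y_i$ is $\beta_{T+1}$-attached to $p_j^T$ (Proposition~\ref{prop: digraph_attachment}), combined with a count bound on $|S_{ji}|$ in terms of $w^T(y_i)$ — here I'd use the Markov-type argument from Proposition~\ref{prop: less_than_k_centers} (at least $w^T(y_i)/2$ points lie within $2\textsf{OPT}/w^T(y_i)$ of $y_i$, and the number assigned to $y_i$ offline is $O(\textsf{OPT}/\text{its cost})$, i.e.\ bounded by a multiple of $w^T(y_i)$). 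Finally I would add up: $g(T^-,k) + O(k^2 T \beta_1) \le T\cdot g(k)$ with $g(k)=\beta_1(2k^3+3k^2+5k+1)+2k+4$, checking the recursion $g(T,k) = g(T^-,k) + (\text{per-phase increment}) \le g(T,k)$ closes with the stated constants — keeping in mind $T^- < T$ so the increment times the number of phases $\le k$ fits.

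The main obstacle I expect is the bookkeeping that makes every ``count $\times$ distance'' term actually collapse to $O(\textsf{OPT})$ times a polynomial in $k$: this requires, for each attached pair used in a triangle-inequality hop, that the number of points being transported is at most a $\text{poly}(k,T)$ multiple of the \emph{smaller} of the two natural weights in the attachment — and $c_j^{T^-}$ vs.\ $p_j^T$ can be attached in either direction (Lemma~\ref{lem: well_attached_centers} and Proposition~\ref{prop: intermediate_attachment}(c) explicitly track which endpoint has the larger weight and with what strength). Getting the direction of attachment matched to the correct count bound from Lemma~\ref{lem: sufficiently_weighted_centers}, and confirming that the weaker $f(t,T)=\beta_T(t-T)$ attachment strength (rather than $\beta_{t+1}$) still suffices, is where the argument is brittle and where I would spend the most care; the rest is triangle inequalities and summation.
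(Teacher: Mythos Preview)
Your overall plan---induction on non-intermediate phases, split into $C_j^{T^-}$ and $C_j$, invoke Lemmas~\ref{lem: well_attached_centers}, \ref{lem: far_points}, \ref{lem: sufficiently_weighted_centers} and Proposition~\ref{prop: attached_estimated_center}---is the paper's plan. But your transport argument has a real gap: routing every hop through $p_j^T$ does not close in case (b) of Lemma~\ref{lem: well_attached_centers}, nor for $S_{near,j}$ when $y_i\in\delta^+(p_j^T)$. You say you would bound $|C_j^{T^-}|$ by a multiple of $\min\{w^T(c_j^{T^-}),w^T(p_j^T)\}$ via Lemma~\ref{lem: sufficiently_weighted_centers}; but that lemma only gives $|C_j^{T^-}|\le (2k+1)T^-\cdot w^{T^-}(c_j^{T^-})\le (2k+1)T^-\cdot w^T(c_j^{T^-})$, which is the \emph{larger} side when $c_j^{T^-}\in\delta^+(p_j^T)$. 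In that sub-case neither hop $c_j^{T^-}\to p_j^T$ nor $p_j^T\to c_j^T$ is controlled: attachment gives $w^T(p_j^T)\cdot d(\cdot,p_j^T)<\beta_{T+1}\textsf{OPT}$, but the count you are transporting can be arbitrarily larger than $w^T(p_j^T)$. The identical obstruction hits $|S_{ji}|\cdot d(y_i,p_j^T)$ whenever $w^T(y_i)>w^T(p_j^T)$.

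The paper's missing ingredient is a structural claim you do not state: any two points in $\delta^+(p_j^T)\cup\delta^-(p_j^T)\cup\{p_j^T\}$ are mutually $2\beta_{T+1}$-attached w.r.t.\ $w^T$ (this uses crucially that no Exchange Operation is available during phase $T$---if two out-neighbors were well-separated, an exchange would have fired). With this in hand one routes \emph{directly} to $c_j^T$ rather than via $p_j^T$: since $c_j^T$ is by definition the weight-maximizer over $\delta^+(p_j^T)\cup\{p_j^T\}$, every $z$ in the neighborhood satisfies $w^T(z)\le w^T(c_j^T)$, hence $w^T(z)\cdot d(z,c_j^T)<2\beta_{T+1}\textsf{OPT}$, and now the count bounds (in terms of $w^T(c_j^{T^-})$ or $w^T(y_i)$) sit on the correct side of the attachment. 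A minor additional omission: your decomposition of $C_j$ drops the piece $S_j$ of points assigned to $p_j^T$ itself in the offline clustering induced by $P_T$, though that term is easy.
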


As a corollary to Lemma \ref{lem: cross_phase_bounded_cost}, we have Theorem~\ref{thm: main_thm}, the main theorem.

Broadly, here is how the Lemmas \ref{lem: well_attached_centers}, \ref{lem: far_points}, and \ref{lem: sufficiently_weighted_centers} will tie together to prove Lemma \ref{lem: cross_phase_bounded_cost}. The proof of Lemma \ref{lem: cross_phase_bounded_cost} is by induction. Inductively, the points in cluster $j$ that arrived before phase $T$, which we call $C_j^{T^-}$, can be moved to their estimated center $c_j^{T^-}$ at bounded cost. This estimated center is close to $p_j^T$ by Lemma \ref{lem: well_attached_centers}. For instance, (a) in Lemma \ref{lem: well_attached_centers} says that once the points in $C_j^{T^-}$ are moved to $c_j^{T^-}$, they can also be moved to $p_j^T$ at bounded cost, since $w^{T^-}(c_j^{T^-}) \cdot d(c_j^{T^-}, p_j^T)$ is bounded and $|C_j^{T^-}| \leq h(T^-, k) \cdot w^{T^-}(c_j^{T^-})$ (Lemma \ref{lem: sufficiently_weighted_centers}). Finally, since $w^{T^-}(c_j^{T^-}) \leq w^T(p_j^T)$, we will be able to \textit{charge} the points in $C_j^{T^-}$ to $p_j^T$ in order to move them to $c_j^T$, which is attached to $p_j^T$ (Proposition \ref{prop: attached_estimated_center}). By similar logic, the cost of the far points $S_{far, j}$ from phase $T$ can be charged to $p_j^T$ and then moved to $c_j^T$ (Lemma \ref{lem: far_points}). Finally, the remaining points given label $j$ in phase $T$, call them $S_{near,j}$, are close to offline centers in $P_T$ that are in turn close to $p_j^T$. Crucially, no Exchange Operation is executed during a phase, so these offline centers are also close to $c_j^T$ in a weighted sense.

\subsection{Proofs for bounding cost} \label{sec: proofs_bounded_cost}

We start with the proof of Lemmas \ref{lem: far_points}, which is a key step where the greedy rule for assigning labels to points is used. 

\begin{proof}[Proof of Lemma \ref{lem: far_points}]
 WLOG, let $j = T$. For $c \in P_T$, let $m(c)$ be the number of points assigned to $c$ in the clustering of $X(T) \setminus X(T^-)$ induced by the centers $P_T$, i.e., in this clustering every point is assigned to the \textit{nearest} point in $P_T$. 
 
 For shorthand, let $w$ denote the natural weights $w^T$ of points at the end of phase $T$.

\begin{observation} \label{obs1}
For $c \in P_T$, $w(c) \geq m(c)$. 
\end{observation}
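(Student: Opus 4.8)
\textbf{Proof proposal for Observation \ref{obs1}.}

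The plan is to unpack both quantities in terms of the natural weight definition and exhibit an explicit witness set. Recall that $w(c) = w^T(c)$ is the maximum number of points in $X(T)$ whose distances to $c$ sum to at most $2\textsf{OPT}$, while $m(c)$ counts the points of $X(T) \setminus X(T^-)$ that are assigned to $c$ in the offline clustering induced by $P_T$, which has total cost at most $2\textsf{OPT}$ on $X(T) \setminus X(T^-)$ (this is the defining property of the offline centers $P_T$ recorded in Section \ref{sec: analysis_defs}). First I would observe that the cluster of $c$ in that offline clustering is exactly a set of $m(c)$ points, all drawn from $X(T)$, and that the sum of their distances to $c$ is at most the total cost of the offline clustering, hence at most $2\textsf{OPT}$.

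The key step is then that this cluster is itself a feasible witness in the definition of the natural weight $w(c)$: it is a set of points from $X(T)$ whose aggregate distance to $c$ is $\le 2\textsf{OPT}$. Since $w(c)$ is the \emph{maximum} size of any such set, we get $w(c) \ge m(c)$. (One minor point to note: the offline clustering is on $X(T) \setminus X(T^-)$, so its cost bound is on that subset; but $X(T)\setminus X(T^-) \subseteq X(T)$, so the cluster of $c$ is in particular a subset of $X(T)$, which is all that the definition of $w^T(c)$ requires, and the per-cluster distance sum is bounded by the total offline cost $\le 2\textsf{OPT}$.) I do not anticipate any real obstacle here — this is essentially a definitional unwinding — the only thing to be careful about is matching the point set over which the natural weight is measured (a prefix/subset of $X$) with the point set clustered by $P_T$, and making sure the factor-$2$ slack from Fact \ref{fact: Meyerson} is what licenses the ``$\le 2\textsf{OPT}$'' bound used as the weight threshold.
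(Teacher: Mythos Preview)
Your proposal is correct and matches the paper's own justification essentially verbatim: the paper simply notes that by construction of $P_T$ there are $m(c)$ points whose total movement cost to $c$ is at most $2\textsf{OPT}$, and then invokes the definition of $w(c)$. Your added remarks about the point set (that the cluster lives in $X(T)\setminus X(T^-)\subseteq X(T)$) and the source of the $2\textsf{OPT}$ bound are accurate clarifications of exactly that one-line argument.
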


This follows from the definition of $w(c)$ and the fact that there are $m(c)$ points whose movement cost to $c$ is at most $2\textsf{OPT}$, by construction of $P_T$.

\begin{observation} \label{obs2}
If $(p(y_i), y_i)$ is a directed edge in $D(T)$, then $w(p(y_i)) \cdot d(p(y_i), y_i) < \beta_{T+1} \cdot \textsf{OPT}$. Likewise, if $(y_i, p(y_i))$ is a directed edge in $D(T)$, then $w(y_i) \cdot d(p(y_i), y_i) < \beta_{T+1} \cdot  \textsf{OPT}$. 
\end{observation}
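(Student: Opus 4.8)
\textbf{Proof of Observation \ref{obs2} (proposal).}
I would obtain this directly from Proposition \ref{prop: digraph_attachment} together with the orientation convention built into $D(T)$. Proposition \ref{prop: digraph_attachment} says $y_i$ and $p(y_i)$ are $\beta_{T+1}$-attached with respect to the end-of-phase weights $w = w^T$, i.e. $\min\{w(y_i),w(p(y_i))\}\cdot d(y_i,p(y_i)) < \beta_{T+1}\cdot\textsf{OPT}$. By the definition of $D(T)$, the arc between the two is oriented with the smaller-weight endpoint as its tail. Hence if the arc is $(p(y_i),y_i)$ then $w(p(y_i)) \le w(y_i)$, so the minimum is $w(p(y_i))$ and the attachment inequality reads $w(p(y_i))\cdot d(p(y_i),y_i) < \beta_{T+1}\textsf{OPT}$; symmetrically, if the arc is $(y_i,p(y_i))$ then $w(y_i)\le w(p(y_i))$ and the inequality reads $w(y_i)\cdot d(p(y_i),y_i) < \beta_{T+1}\textsf{OPT}$. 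This is all that is required; the only points of care are reading the arc orientation correctly and using the weights $w^T$ (which is exactly the regime in which Proposition \ref{prop: digraph_attachment} is stated, so no monotonicity of attachment is invoked). I expect no real obstacle in this step.

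\textbf{Role in Lemma \ref{lem: far_points}, and where the difficulty lies.}
Observations \ref{obs1} and \ref{obs2} are the two ingredients that drive the main estimate. Taking $j = T$ without loss of generality: for each $y_i$ with $p(y_i)\neq p_T^T$ and each $x\in S_{Ti}$, the greedy labelling rule (the pivots $p_1^T,\dots,p_T^T$ are static during a phase, since Add/Exchange operations end a phase) gives $d(x,p_T^T)\le d(x,p(y_i))$, while assignment of $x$ to $y_i$ in the $P_T$-clustering (and $p_T^T\in P_T$) gives $d(x,y_i)\le d(x,p_T^T)$. Combining with the triangle inequality, $d(x,p_T^T)\le d(x,y_i) + d(y_i,p(y_i))$. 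Summing over $x\in S_{Ti}$ and then over the at most $k$ indices $i$ with $p(y_i)\neq p_T^T$: the $\sum d(x,y_i)$ contributions total at most the offline clustering cost of $P_T$ on $X(T)\setminus X(T^-)$, which is at most $2\textsf{OPT}$ (since $\{y_1,\dots,y_k\}$ alone already has cost at most $2\textsf{OPT}$ by Fact \ref{fact: Meyerson}, and adding the pivots only decreases cost), while each term $|S_{Ti}|\cdot d(y_i,p(y_i))$ I would bound by $\beta_{T+1}\textsf{OPT}$; this yields $cost(S_{far,T};p_T^T)\le 2\textsf{OPT} + k\beta_{T+1}\textsf{OPT}\le k(\beta_{T+1}+2)\textsf{OPT}$. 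The count bound follows in parallel by showing $|S_{Ti}|\le w^T(p_T^T)$ for each $i$.

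\textbf{The main obstacle.}
Everything reduces to bounding $|S_{Ti}|\cdot d(y_i,p(y_i))$ and $|S_{Ti}|$, and the subtlety is the direction of the arc in $D(T)$. When the arc points into $p(y_i)$ (so $w^T(y_i)\le w^T(p(y_i))$), Observation \ref{obs2} gives $w^T(y_i)\cdot d(y_i,p(y_i)) < \beta_{T+1}\textsf{OPT}$ and Observation \ref{obs1} gives $|S_{Ti}|\le m(y_i)\le w^T(y_i)$, so $|S_{Ti}|\cdot d(y_i,p(y_i)) < \beta_{T+1}\textsf{OPT}$ immediately, and if moreover $w^T(y_i)\le w^T(p_T^T)$ the count bound is immediate as well. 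The hard regime is when $y_i$ has strictly larger natural weight than $p(y_i)$ (or than $p_T^T$): then Observation \ref{obs2} only controls $w^T(p(y_i))\cdot d(y_i,p(y_i))$, which is too weak if $|S_{Ti}|$ could be as large as $w^T(y_i)$. My plan here is to use the well-separated invariant (Lemma \ref{lem: well-sep-invariant}): $p_T^T$ and $p(y_i)$ are $\beta_T$-well-separated with $\beta_T = 3\beta_{T+1}$, so $d(p_T^T,p(y_i))\ge 3\beta_{T+1}\textsf{OPT}/\min\{w^T(p_T^T),w^T(p(y_i))\}$; chaining this with $d(x,p_T^T)\le d(x,p(y_i))$ and the triangle inequality forces each $x\in S_{Ti}$ to have $d(x,y_i)$ bounded below by roughly $\tfrac12\beta_{T+1}\textsf{OPT}/\min\{w^T(p_T^T),w^T(p(y_i))\}$, and since $y_i$'s offline cluster costs at most $2\textsf{OPT}$, a Markov-style counting argument forces $|S_{Ti}| < 4\min\{w^T(p_T^T),w^T(p(y_i))\}/\beta_{T+1}$, which (as $\beta_{T+1}\ge 24$) is both at most $w^T(p_T^T)$ and small enough to make $|S_{Ti}|\cdot d(y_i,p(y_i)) < 4\textsf{OPT}\le\beta_{T+1}\textsf{OPT}$. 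The real work, then, is to organize the case split on the relative order of $w^T(p_T^T),\ w^T(p(y_i)),\ w^T(y_i)$ so that in every branch exactly one of these two mechanisms (direct use of Observations \ref{obs1}--\ref{obs2}, or the well-separation/Markov argument) closes the bound.
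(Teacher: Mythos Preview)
Your proof of Observation~\ref{obs2} itself is correct and is exactly the paper's argument: the paper's justification is the single line ``This follows from the definition of $D(T)$ and Proposition~\ref{prop: digraph_attachment},'' and you have unpacked precisely that.

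Regarding the sketch you give for Lemma~\ref{lem: far_points}, your approach is valid but genuinely different from the paper's in the hard regime. In the easy direction ($w(y_i)\le w(p(y_i))$) both arguments coincide. In the hard direction ($w(y_i)>w(p(y_i))$), the paper does not run a Markov-style counting argument; instead it splits on whether $|S_{Ti}|\ge w(p(y_i))$ and shows the large-$|S_{Ti}|$ case is impossible by a contradiction with the greedy rule: if $|S_{Ti}|\ge w(p(y_i))$ one exhibits a subset $S'\subseteq S_{Ti}$ of size exactly $w(p(y_i))$ and computes $cost(S';p(y_i))<cost(S';p_T^T)$, contradicting that every point of $S'$ chose label $T$. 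Your route avoids this contradiction by directly lower-bounding $d(x,y_i)$ for $x\in S_{Ti}$ via the $\beta_T$-well-separation of $p_T^T$ and $p(y_i)$ (using $d(x,p_T^T)\le d(x,p(y_i))$ to get $d(x,p(y_i))\ge\tfrac12 d(p_T^T,p(y_i))$), and then capping $|S_{Ti}|$ via $\sum_{x\in S_{Ti}} d(x,y_i)\le 2\textsf{OPT}$. This works and is arguably cleaner, at the price of needing $\beta_{T+1}\ge 24$ rather than the paper's $\beta_{T+1}\ge 4$ in this step (which is fine since $\beta_{k+1}=24$). One small thing to tidy: in the sub-case $w(y_i)\le w(p(y_i))$ but $w(y_i)>w(p_T^T)$, you still need the count bound $|S_{Ti}|\le w(p_T^T)$; your Markov argument covers it (with $\min=w(p_T^T)$), but make sure you state this sub-case explicitly rather than leaving it implicit under ``the hard regime.''
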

This follows from the definition of $D(T)$ and Proposition \ref{prop: digraph_attachment}. 

 Call the points in $S_{far,T}$ \textit{far} points. In the claims below, we show that the far points can be moved to $p_T^T$ at bounded cost (Claims \ref{clm: phase-wise_bdd_clm1} and \ref{clm: phase-wise_bdd_clm2}), and that there are not too many far points relative to the weight of $p_T^T$ (Claim \ref{clm: phase-wise_bdd_clm3}). In turn, we will be able to \textit{charge} the cost of the far points to $p_T^T$.   

\setcounter{claim}{0}
\begin{claim} \label{clm: phase-wise_bdd_clm1}
Let $p(y_i) \neq p_T^T$. Suppose $w(y_i) > w(p(y_i))$. Then $cost(S_{Ti};p_T^T) \leq {(\beta_{T+1} + 2)}\textsf{OPT}$.  
\end{claim}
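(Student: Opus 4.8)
\textbf{Plan for the proof of Claim \ref{clm: phase-wise_bdd_clm1}.}

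The plan is to bound the movement cost of the points in $S_{Ti}$ to $p_T^T$ by routing them through their offline center $y_i$, using the triangle inequality $d(x, p_T^T) \le d(x, y_i) + d(y_i, p_T^T)$ summed over $x \in S_{Ti}$. First I would bound the first term: by definition $S_{Ti}$ consists of points assigned to $y_i$ in the $P_T$-induced clustering of $X(T)\setminus X(T^-)$, so $\sum_{x \in S_{Ti}} d(x, y_i) \le 2\textsf{OPT}$ since $P_T$ has cost at most $2\textsf{OPT}$ on $X(T)$ (indeed on the phase-$T$ arrivals, which form a subset). For the second term I need to control $|S_{Ti}| \cdot d(y_i, p_T^T)$. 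Here is where the hypothesis $w(y_i) > w(p(y_i))$ enters: this means the arc in $D(T)$ is directed $(y_i, p(y_i))$, so by Observation \ref{obs2} we get $w(y_i)\cdot d(p(y_i), y_i) < \beta_{T+1}\cdot \textsf{OPT}$. Since $p(y_i) = $ (the pivot nearest $y_i$ in weighted distance) and we are in the case $p(y_i) \ne p_T^T$ — wait, we actually want $d(y_i, p_T^T)$, not $d(y_i, p(y_i))$. Let me reconsider: since $|S_{Ti}| = m(y_i) \le w(y_i)$ by Observation \ref{obs1}, and $w(y_i) d(y_i, p(y_i)) < \beta_{T+1}\textsf{OPT}$, we get $|S_{Ti}| \cdot d(y_i, p(y_i)) \le w(y_i) d(y_i, p(y_i)) < \beta_{T+1}\textsf{OPT}$. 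So if the claim's statement implicitly uses $p(y_i)$ in place of $p_T^T$, or if there is an additional observation that $d(y_i, p_T^T)$ is comparable — actually, rereading, $S_{far,j}$ collects exactly those $S_{ji}$ with $p(y_i) \neq p_j^T$, and the claim bounds $cost(S_{Ti}; p_T^T)$. So we genuinely need $d(y_i, p_T^T)$.

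The resolution I would pursue: the per-point movement cost to $p_T^T$ is controlled not by $y_i$ but by the \emph{greedy labelling rule}. Every point $x \in S_{Ti}$ received label $T$ (that is its label, being in $C_T = C_T^T \setminus C_T^{T^-}$), which means $p_T^T$ was the nearest pivot to $x$ when $x$ arrived, so $d(x, p_T^T) \le d(x, p(y_i))$ (since $p(y_i)$ is also a pivot present during phase $T$). Hence $\sum_{x \in S_{Ti}} d(x, p_T^T) \le \sum_{x \in S_{Ti}} d(x, p(y_i)) \le \sum_{x \in S_{Ti}} \big(d(x, y_i) + d(y_i, p(y_i))\big) \le 2\textsf{OPT} + |S_{Ti}|\cdot d(y_i, p(y_i)) \le 2\textsf{OPT} + \beta_{T+1}\textsf{OPT} = (\beta_{T+1}+2)\textsf{OPT}$, where the last bound on $|S_{Ti}|\cdot d(y_i, p(y_i))$ uses Observations \ref{obs1} and \ref{obs2} together with the case hypothesis $w(y_i) > w(p(y_i))$. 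That gives exactly the claimed inequality.

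The main obstacle — and the reason the claim is split into cases — is precisely the direction of attachment between $y_i$ and $p(y_i)$: when $w(y_i) > w(p(y_i))$ (this case) the weighted-distance bound from $D(T)$ comes with the factor $w(y_i)$, which is the quantity that dominates $|S_{Ti}|$ via Observation \ref{obs1}, so everything aligns. In the complementary case $w(y_i) \le w(p(y_i))$ (presumably Claim \ref{clm: phase-wise_bdd_clm2}) the bound from Observation \ref{obs2} only controls $w(p(y_i)) d(y_i, p(y_i))$, and $w(p(y_i))$ need not dominate $|S_{Ti}|$, so a different argument — charging to $p(y_i)$'s weight and then using the well-separated invariant to bound how many such $y_i$ can point to a single pivot — would be needed there. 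For the present claim, though, the only subtlety to be careful about is invoking the greedy rule correctly (that $p(y_i)$ was indeed an available pivot throughout phase $T$, which holds since no pivot is deleted mid-phase), and confirming $m(y_i) = |S_{Ti}|$ matches the definition of $S_{Ti}$.
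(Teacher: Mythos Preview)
Your argument has a genuine gap: you have the direction of the attachment bound reversed. Under the hypothesis $w(y_i) > w(p(y_i))$, the minimum in the attachment inequality is $w(p(y_i))$, so Observation~\ref{obs2} (or directly Proposition~\ref{prop: digraph_attachment}) only yields
\[
w(p(y_i))\cdot d(y_i,p(y_i)) < \beta_{T+1}\cdot\textsf{OPT},
\]
\emph{not} $w(y_i)\cdot d(y_i,p(y_i)) < \beta_{T+1}\cdot\textsf{OPT}$. Consequently your chain $|S_{Ti}|\le m(y_i)\le w(y_i)$ does not let you conclude $|S_{Ti}|\cdot d(y_i,p(y_i)) < \beta_{T+1}\cdot\textsf{OPT}$; the quantity $w(y_i)$ simply does not appear in the available bound. (In your closing paragraph you have the two cases exactly swapped: it is in Claim~\ref{clm: phase-wise_bdd_clm2}, where $w(y_i)\le w(p(y_i))$, that the attachment bound carries the factor $w(y_i)$ and your argument goes through verbatim --- indeed the paper's proof of Claim~\ref{clm: phase-wise_bdd_clm2} is essentially what you wrote.)

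What the paper does for Claim~\ref{clm: phase-wise_bdd_clm1} is to compare $|S_{Ti}|$ with $w(p(y_i))$ instead. If $|S_{Ti}|\le w(p(y_i))$, then the attachment bound \emph{does} give $|S_{Ti}|\cdot d(y_i,p(y_i))<\beta_{T+1}\cdot\textsf{OPT}$, and your greedy-routing argument through $p(y_i)$ finishes it. The real work is ruling out $|S_{Ti}|\ge w(p(y_i))$: the paper shows that if this held, then taking any $w(p(y_i))$ points of $S_{Ti}$ and using the well-separated invariant between $p(y_i)$ and $p_T^T$ (via Proposition~\ref{prop: meta-prop}) forces $cost(S_{Ti}';p(y_i)) < cost(S_{Ti}';p_T^T)$, contradicting the greedy labelling rule that assigned these points label $T$. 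So the missing ingredient in your plan is this contradiction argument exploiting well-separation to cap $|S_{Ti}|$ by $w(p(y_i))$.
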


\begin{proof}
WLOG, let $p(y_i) = p_1^T$. We consider two cases. 

\setcounter{case}{0}

\begin{case} \label{cs1}
$|S_{Ti}| \geq w(p_1^T)$. We will show this case cannot happen.
\end{case}
In this case, we know that $w(y_i) \geq m(y_i) \geq |S_{Ti}|\geq w(p_1^T)$. We know by Observation \ref{obs2} that ${w(p_1^T) \cdot d(p_1^T, y_i) < \beta_{T+1}\cdot \textsf{OPT}}$. By Proposition \ref{prop: meta-prop}, this implies $w(p_1^T) \cdot d(y_i, p_T^T) \geq 2\beta_{T+1} \cdot \textsf{OPT}$. 

Since $|S_{Ti}| \geq w(p_1^T)$, there exists $S_{Ti}' \subseteq S_{Ti}$ such that $|S_{Ti}'| = w(p_1^T)$. In turn, $cost(S_{Ti}'; p_1^T) \leq cost(S_{Ti}'; y_i) + w(p_1^T) \cdot d(y_i,p_1^T) <  (\beta_{T+1} + 2)\cdot \textsf{OPT}$, since $P_T$ is a clustering with cost at most $2\textsf{OPT}$. On the other hand, 
$$cost(S_{Ti}'; p_T^T) \geq \sum_{p \in S_{Ti}'} d(y_i, p_T^T) - \sum_{p \in S_{Ti}'} d(p, y_i) = w(p_1^T) \cdot d(y_i, p_T^T) - \sum_{p \in S_{Ti}'} d(p,y_i) \geq (2\beta_{T+1} - 2)\textsf{OPT}.$$ 

Since $\beta_{T+1} \geq 4$, $\beta_{T+1} +2 \leq 2\beta_{T+1} - 2$, so $cost(S_{Ti}';p_1^T) < cost(S_{Ti}'; p_T^T)$, which violates that $T = \arg \min_{j \in [T]} d(p, p_j^T)$ for all $p \in S_{Ti}' \subseteq C_T$.

\begin{case}
$|S_{Ti}| \leq w_t(p_1^T)$. 
\end{case}
In this case, we know that since $w(p_1^T) \cdot d(y_i, p_1^T) < \beta_{T+1} \cdot \textsf{OPT}$, we also have $|S_{Ti}| \cdot d(y_i, p_1^T) < \beta_{T+1} \cdot \textsf{OPT}$. By the triangle inequality, 
\[cost(S_{Ti};p_1^T) \leq cost(S_{Ti};y_i) + |S_{Ti}|\cdot d(y_i, p_1^T) \leq 2\textsf{OPT} + \beta_{T+1} \cdot \textsf{OPT}. \]
Since $cost(S_{Ti};p_T^T) \leq cost(S_{Ti};p_1^T)$ by the greedy procedure, this proves Claim \ref{clm: phase-wise_bdd_clm1}. 
\end{proof}

\begin{claim} \label{clm: phase-wise_bdd_clm2}
Let $p(y_i) \neq p_T^T$. Suppose that $w(y_i) \leq w(p(y_i))$. Then $cost(S_{Ti};p_T^T) \leq {(\beta_{T+1} +1)}\textsf{OPT}$.
\end{claim}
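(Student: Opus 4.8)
The plan is to mirror the structure of the proof of Claim \ref{clm: phase-wise_bdd_clm1}, but now the relevant weight governing the attachment between $y_i$ and $p(y_i)$ is $w(y_i)$ rather than $w(p(y_i))$, since the directed edge in $D(T)$ runs from $y_i$ to $p(y_i)$ (Observation \ref{obs2}). WLOG set $j = T$ and $p(y_i) = p_1^T$. The key quantity is $|S_{Ti}|$, the number of phase-$T$ points given label $T$ online but assigned to $y_i$ offline, and I want to bound $cost(S_{Ti}; p_T^T)$. As in Claim \ref{clm: phase-wise_bdd_clm1}, I would split into two cases according to whether $|S_{Ti}|$ is large or small relative to the weight that controls the $y_i$--$p_1^T$ distance, which here is $w(y_i)$.

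In the first case, suppose $|S_{Ti}| \geq w(y_i)$. By Observation \ref{obs2}, $w(y_i)\cdot d(y_i, p_1^T) < \beta_{T+1}\cdot\textsf{OPT}$, and by Proposition \ref{prop: meta-prop} (the same meta-proposition invoked in Claim \ref{clm: phase-wise_bdd_clm1}) this forces $w(y_i)\cdot d(y_i, p_T^T) \geq 2\beta_{T+1}\cdot\textsf{OPT}$, using that $p_1^T$ and $p_T^T$ are $\beta_T$-well-separated (Lemma \ref{lem: well-sep-invariant}). Then I would pick $S_{Ti}' \subseteq S_{Ti}$ with $|S_{Ti}'| = w(y_i)$ and compare $cost(S_{Ti}'; p_1^T)$ against $cost(S_{Ti}'; p_T^T)$: the former is at most $cost(S_{Ti}'; y_i) + w(y_i)\cdot d(y_i,p_1^T) < (\beta_{T+1}+2)\textsf{OPT}$ since $P_T$ has cost at most $2\textsf{OPT}$, while the latter is at least $w(y_i)\cdot d(y_i,p_T^T) - cost(S_{Ti}';y_i) \geq (2\beta_{T+1}-2)\textsf{OPT}$ by reverse triangle inequality. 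Since $\beta_{T+1}\geq 4$ gives $\beta_{T+1}+2 \leq 2\beta_{T+1}-2$, this contradicts the greedy labelling rule (every point of $S_{Ti}'\subseteq C_T$ chose label $T$, hence $p_T^T$ was its nearest pivot), so this case is impossible.

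In the second case, $|S_{Ti}| \leq w(y_i)$. Then from $w(y_i)\cdot d(y_i,p_1^T) < \beta_{T+1}\cdot\textsf{OPT}$ we get $|S_{Ti}|\cdot d(y_i,p_1^T) < \beta_{T+1}\cdot\textsf{OPT}$, so by the triangle inequality $cost(S_{Ti};p_1^T) \leq cost(S_{Ti};y_i) + |S_{Ti}|\cdot d(y_i,p_1^T) \leq 2\textsf{OPT} + \beta_{T+1}\textsf{OPT}$. Hmm — this actually yields the bound $(\beta_{T+1}+2)\textsf{OPT}$, which is slightly weaker than the claimed $(\beta_{T+1}+1)\textsf{OPT}$; to recover the stated constant one presumably uses that the offline assignment of $S_{Ti}$ to $y_i$ is paid for in the $2\textsf{OPT}$ budget more carefully, splitting the $2\textsf{OPT}$ between $S_{Ti}$ and the rest, or else the statement tolerates the looser constant. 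The main obstacle I anticipate is exactly this bookkeeping of which weight ($w(y_i)$ vs.\ $w(p(y_i))$) drives the meta-proposition step in the impossible case: getting the direction of attachment right so that Proposition \ref{prop: meta-prop} applies, and verifying the inequality $\beta_{T+1}\geq 4$ suffices to derive the contradiction. Once that is pinned down, the greedy rule closes the argument exactly as in Claim \ref{clm: phase-wise_bdd_clm1}, and $cost(S_{Ti};p_T^T)\leq cost(S_{Ti};p_1^T)$ finishes the bound.
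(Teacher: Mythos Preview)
Your Case 2 is exactly the paper's proof, and it is the whole proof: the paper simply observes (via Observation~\ref{obs1}) that $|S_{Ti}| \le m(y_i) \le w(y_i)$ always holds, so no case split is needed. Your Case~1 is therefore vacuous --- the inequality $|S_{Ti}| > w(y_i)$ can never occur --- and the greedy-contradiction machinery you set up there is unnecessary. (It is also not clear your invocation of Proposition~\ref{prop: meta-prop} in Case~1 goes through: with $w(y_i)\le w(p_1^T)$ the roles needed for the proposition do not line up the same way they did in Claim~\ref{clm: phase-wise_bdd_clm1}, where $w(y_i)>w(p_1^T)$ let one take $p=y_i$. But since the case is empty this does not affect correctness.)

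On the constant: your instinct is right, and there is no clever bookkeeping you are missing. The paper's own proof also ends with $2\textsf{OPT} + \beta_{T+1}\textsf{OPT}$, i.e.\ $(\beta_{T+1}+2)\textsf{OPT}$; the ``$+1$'' in the statement is a typo. The downstream use in Claim~\ref{clm4} takes the maximum with the $(\beta_{T+1}+2)\textsf{OPT}$ bound from Claim~\ref{clm: phase-wise_bdd_clm1} anyway, so the discrepancy is harmless.
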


\begin{proof}
WLOG, let $p(y_i) = p_1^T$. By Observation \ref{obs2}, $w(y_i)\cdot d(y_i, p_1^T) < \beta_{T+1} \cdot \textsf{OPT}$. Further, $|S_{Ti}| \leq m(y_i) \leq w(y_i)$, so $|S_{Ti}|\cdot d(y_i,p_1^T) < \beta_{T+1} \cdot \textsf{OPT}$. So:
$$cost(S_{Ti};p_T^T) \leq cost(S_{Ti};p_1^T) \leq cost(S_{Ti}; y_i) + |S_{Ti}| \cdot d(y_i, p_1^T) \leq 2 \textsf{OPT} + \beta_{T+1} \cdot \textsf{OPT}.$$
\end{proof}

\begin{claim} \label{clm: phase-wise_bdd_clm3}
 Let $p(y_i) \neq p_T^T$. Then $|S_{Ti}| \leq w(p_T^T)$.
\end{claim}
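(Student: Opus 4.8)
The plan is to show $|S_{Ti}| \le w(p_T^T)$ by a contradiction argument that parallels Case~\ref{cs1} in the proof of Claim~\ref{clm: phase-wise_bdd_clm1}, but now exploiting the well-separation of the pivots themselves rather than just the attachment of $y_i$ to $p(y_i)$. Write $p(y_i) = p_1^T$ (WLOG, as in the earlier claims). Suppose for contradiction that $|S_{Ti}| > w(p_T^T)$. Combined with Observation~\ref{obs1}, we get $w(y_i) \ge m(y_i) \ge |S_{Ti}| > w(p_T^T)$, so in particular $\min\{w(y_i), w(p_T^T)\} = w(p_T^T)$. Since $p_1^T$ and $p_T^T$ are $\beta_T$-well-separated w.r.t. $w = w^T$ by Lemma~\ref{lem: well-sep-invariant}, and since $w(y_i)\cdot d(y_i, p_1^T) < \beta_{T+1}\cdot \textsf{OPT}$ (Observation~\ref{obs2}, using $w(y_i) > w(p_1^T)$ so the attachment inequality is witnessed by $w(y_i)$), I can invoke Proposition~\ref{prop: meta-prop} with the roles: $y_i$ attached to $p_1^T$ with the smaller weight on $y_i$'s side, and $p_1^T$ well-separated from $p_T^T$. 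This should yield a lower bound of the form $w(p_T^T)\cdot d(y_i, p_T^T) \ge 2\beta_{T+1}\cdot \textsf{OPT}$ (the same conclusion shape used in Case~\ref{cs1}).

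Next I would extract a subset $S' \subseteq S_{Ti}$ with $|S'| = w(p_T^T)$, which is possible precisely because $|S_{Ti}| > w(p_T^T)$. Moving $S'$ to $p_1^T$: by the triangle inequality and the fact that $P_T$ has cost at most $2\textsf{OPT}$ on $X(T)\setminus X(T^-)$,
\[
cost(S'; p_1^T) \le cost(S'; y_i) + |S'|\cdot d(y_i, p_1^T) = cost(S'; y_i) + w(p_1^T)\cdot\big(\tfrac{w(p_T^T)}{w(p_1^T)}\big)d(y_i,p_1^T),
\]
so I must be slightly careful: unlike in Case~\ref{cs1} I should use $|S'|\cdot d(y_i,p_1^T)$ directly. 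Since $|S'| = w(p_T^T) \le w(y_i)$ and $w(y_i)\cdot d(y_i,p_1^T) < \beta_{T+1}\textsf{OPT}$, we get $|S'|\cdot d(y_i,p_1^T) < \beta_{T+1}\textsf{OPT}$, hence $cost(S'; p_1^T) < (\beta_{T+1}+2)\textsf{OPT}$. On the other hand, moving $S'$ to $p_T^T$ costs at least $|S'|\cdot d(y_i, p_T^T) - cost(S'; y_i) = w(p_T^T)\cdot d(y_i, p_T^T) - cost(S'; y_i) \ge (2\beta_{T+1} - 2)\textsf{OPT}$. Since $\beta_{T+1} \ge 4$ gives $\beta_{T+1}+2 \le 2\beta_{T+1}-2$, we have $cost(S'; p_1^T) < cost(S'; p_T^T)$, contradicting the greedy labelling rule: every point of $S' \subseteq C_T$ received label $T$, so $p_T^T$ was its nearest pivot, forcing $cost(S'; p_T^T) \le cost(S'; p_1^T)$.

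The main obstacle I anticipate is getting the invocation of Proposition~\ref{prop: meta-prop} exactly right — specifically making sure the weight comparisons ($w(p_T^T)$ being the minimum in the relevant pairs, and $w(y_i) > w(p_1^T)$ so the $y_i$-to-$p_1^T$ attachment is witnessed on the $y_i$ side) line up with that proposition's hypotheses, since the constant $2\beta_{T+1}$ in the conclusion depends on which weight is used. A secondary subtlety is that this claim is stated without the case split on whether $w(y_i) > w(p(y_i))$; I should check whether the argument above also needs the companion case $w(y_i) \le w(p(y_i))$, in which the attachment is witnessed on the $p_1^T$ side — there $w(p_1^T)\cdot d(y_i,p_1^T) < \beta_{T+1}\textsf{OPT}$, and since $|S_{Ti}| \le w(y_i) \le w(p_1^T)$ we immediately get $|S_{Ti}| \le w(p_1^T)$, but I still need $|S_{Ti}| \le w(p_T^T)$, so I would again assume $|S_{Ti}| > w(p_T^T)$ and run essentially the same contradiction, now with $d(y_i,p_1^T)$ bounded via $w(p_1^T)$ and the well-separation of $p_1^T, p_T^T$ giving the needed lower bound on $d(y_i,p_T^T)$. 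Either way the contradiction comes from the greedy rule.
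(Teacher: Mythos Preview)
Your contradiction strategy is close in spirit to the paper's, but there is a real error in the case $w(y_i) > w(p_1^T)$: you have the side that witnesses the attachment reversed. By definition of attachment (and by Observation~\ref{obs2}, reading the edge direction in $D(T)$), the bound is always on the \emph{smaller} weight, so when $w(y_i) > w(p_1^T)$ one has $w(p_1^T)\cdot d(y_i,p_1^T) < \beta_{T+1}\textsf{OPT}$, not $w(y_i)\cdot d(y_i,p_1^T) < \beta_{T+1}\textsf{OPT}$. Consequently your step ``since $|S'| = w(p_T^T) \le w(y_i)$ and $w(y_i)\cdot d(y_i,p_1^T) < \beta_{T+1}\textsf{OPT}$, we get $|S'|\cdot d(y_i,p_1^T) < \beta_{T+1}\textsf{OPT}$'' does not go through: without further information you cannot compare $|S'| = w(p_T^T)$ to $w(p_1^T)$, and so you cannot bound $cost(S';p_1^T)$. (You make the symmetric mis-identification in the companion case $w(y_i)\le w(p_1^T)$, though there the contradiction can still be salvaged since $|S'| = w(p_T^T) < |S_{Ti}| \le w(y_i)$ and the correct bound is $w(y_i)\cdot d < \beta_{T+1}\textsf{OPT}$.)

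The paper sidesteps this by \emph{reusing} Claims~\ref{clm: phase-wise_bdd_clm1}--\ref{clm: phase-wise_bdd_clm2} rather than rerunning the greedy contradiction: from those claims $cost(S_{Ti};p_T^T)\le(\beta_{T+1}+2)\textsf{OPT}$, so by the triangle inequality $|S_{Ti}|\cdot d(y_i,p_T^T)\le cost(S_{Ti};p_T^T)+cost(S_{Ti};y_i)\le 2\beta_{T+1}\textsf{OPT}$. Separately it shows $w(p_T^T)\cdot d(y_i,p_T^T)\ge 2\beta_{T+1}\textsf{OPT}$ (via well-separation of $p_1^T,p_T^T$ and Proposition~\ref{prop: meta-prop}), and then simply divides by $d(y_i,p_T^T)$. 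Your approach is repairable along your own lines: in the case $w(y_i) > w(p_1^T)$, first invoke the conclusion of Claim~\ref{clm: phase-wise_bdd_clm1}, Case~\ref{cs1}, that $|S_{Ti}| < w(p_1^T)$; under your contradiction hypothesis $|S_{Ti}| > w(p_T^T)$ this forces $|S'| = w(p_T^T) < w(p_1^T)$, and then the \emph{correct} attachment bound $w(p_1^T)\cdot d(y_i,p_1^T) < \beta_{T+1}\textsf{OPT}$ does give $|S'|\cdot d(y_i,p_1^T) < \beta_{T+1}\textsf{OPT}$, after which the rest of your argument goes through.
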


\begin{proof}

\setcounter{case}{0}

As before, assume WLOG that $p(y_i) = p_1^T$.
\begin{case} \label{cs3}
$w(y_i) > w(p_1^T)$. 
\end{case}
We know from the proof of Claim \ref{clm: phase-wise_bdd_clm1}, Case \ref{cs1} that this implies $|S_{Ti}| < w(p_1^T)$. We have 
\begin{align*}
    |S_{Ti}| \cdot d(p_T^T, y_i) &= \sum_{p \in S_{Ti}}d(y_i, p_T^T) \\
    &\leq \sum_{p \in S_{Ti}} d(p,p_T^T) + \sum_{p \in S_{Ti}} d(p, y_i) \\
    &\leq (\beta_{T+1} + 2)\textsf{OPT} + 2\textsf{OPT} \tag{\text{Claim \ref{clm: phase-wise_bdd_clm1}}}\\
    &\leq 2\beta_{T+1} \cdot \textsf{OPT} \\ 
    &\leq w(p_T^T) \cdot d(p_T^T, y_i) 
\end{align*}
where in the last line we have applied Proposition \ref{prop: meta-prop}, using that $w(y_i) > w(p_1^T)$, Observation \ref{obs2}, and $p_1^T$ and $p_T^T$ are $\beta_T$-well-separated w.r.t. $w$. Finally, dividing both ends of the chain of inequalities by $d(p_T^T, y_j)$ gives $|S_{Tj}| \leq w(p_T^T)$, as desired. 

\begin{case}
$w(y_i) \leq w(p_1^T)$.
\end{case}
First consider when $w(p_T^T) \geq w(y_i)$. Then $w(p_T^T) \geq w(y_i) \geq m(y_i) \geq |S_{Ti}|$, so the claim follows. 

So the last case to consider is when $w(p_T^T) < w(y_i)$. It suffices to show that $w(p_T^T) \cdot d(p_T^T, y_i) \geq 2\beta_{T+1} \cdot \textsf{OPT}$; then, we can just apply the argument in Case \ref{cs3}. Suppose to the contrary that $w(p_T^T) \cdot d(p_T^T, y_i) < 2\beta_{T+1} \cdot \textsf{OPT}$. Then 
\begin{align*}
    \beta_T \cdot \textsf{OPT} &\leq  w(p_T^T) \cdot d(p_T^T, p_1^T) \\
    &\leq w(p_T^T) \cdot d(p_T^T, y_i) + w(p_T^T) \cdot d(y_i, p_1^T) \\
    &\leq 2\beta_{T+1} \cdot \textsf{OPT} +  w(p_T^T) \cdot d(y_i, p_1^T) \\
    &< 2\beta_{t+1} \cdot \textsf{OPT} + w(y_i) \cdot d(y_i, p_1^T) \\
    &< 2\beta_{T+1} \cdot \textsf{OPT} + \beta_{T+1} \cdot \textsf{OPT} \\
    &= \beta_T \cdot \textsf{OPT}
\end{align*}
where the second-to-last line follows from Observation \ref{obs2}. The left-hand and right-hand sides give a contradiction, concluding the proof of the case and the claim. 
\end{proof}

\begin{claim} \label{clm4}
$cost(S_{far,T};p_T^T) \leq k \cdot (\beta_{T+1} + 2)\textsf{OPT}$ and $|S_{far,T}| \leq k \cdot w(p_T^T)$. 
\end{claim}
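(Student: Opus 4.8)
The plan is to obtain Claim~\ref{clm4} by simply aggregating Claims~\ref{clm: phase-wise_bdd_clm1}, \ref{clm: phase-wise_bdd_clm2}, and \ref{clm: phase-wise_bdd_clm3} over all offline optimal centers $y_i$ whose associated pivot $p(y_i)$ is not $p_T^T$. First I would recall that, by definition, $S_{far,T} = \bigcup_{i\,:\,p(y_i)\neq p_T^T} S_{Ti}$, that the sets $S_{Ti}$ are pairwise disjoint (each point of $C_T$ is assigned to exactly one center of $P_T$ in the clustering induced by $P_T$), and that, since $P_T$ contains exactly $k$ optimal centers $y_1,\dots,y_k$, this union ranges over at most $k$ indices.

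Next I would bound the cost. For each index $i$ with $p(y_i)\neq p_T^T$, either $w(y_i) > w(p(y_i))$, in which case Claim~\ref{clm: phase-wise_bdd_clm1} gives $cost(S_{Ti};p_T^T) \le (\beta_{T+1}+2)\textsf{OPT}$, or $w(y_i)\le w(p(y_i))$, in which case Claim~\ref{clm: phase-wise_bdd_clm2} gives the (slightly stronger) bound $cost(S_{Ti};p_T^T)\le(\beta_{T+1}+1)\textsf{OPT}$. Taking the weaker of the two so that a single statement covers both cases, $cost(S_{Ti};p_T^T)\le(\beta_{T+1}+2)\textsf{OPT}$ always; summing over the at most $k$ relevant $i$ and using disjointness of the $S_{Ti}$ yields $cost(S_{far,T};p_T^T)\le k(\beta_{T+1}+2)\textsf{OPT}$.

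For the cardinality bound I would invoke Claim~\ref{clm: phase-wise_bdd_clm3}, which states $|S_{Ti}|\le w(p_T^T) = w^T(p_T^T)$ for every $i$ with $p(y_i)\neq p_T^T$; summing over the at most $k$ such indices (again using disjointness) gives $|S_{far,T}| = \sum_{i\,:\,p(y_i)\neq p_T^T} |S_{Ti}| \le k\,w^T(p_T^T)$, which is exactly the second half of the claim. Unwinding the WLOG $j=T$ then gives the statement of Lemma~\ref{lem: far_points} for general $j$.

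There is essentially no obstacle here beyond bookkeeping: all the analytic work (the greedy-rule argument and the well-separation argument) has already been done in Claims~\ref{clm: phase-wise_bdd_clm1}--\ref{clm: phase-wise_bdd_clm3}. The only things one must be careful about are that the index set over which we sum has size at most $k$ and that the $S_{Ti}$ genuinely partition $S_{far,T}$, so that the sums of costs and of cardinalities telescope cleanly.
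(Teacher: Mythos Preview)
Your proposal is correct and matches the paper's own proof essentially line for line: the paper also writes $cost(S_{far,T};p_T^T)=\sum_{i:p(y_i)\neq p_T^T} cost(S_{Ti};p_T^T)$ and bounds it by $k(\beta_{T+1}+2)\textsf{OPT}$ via Claims~\ref{clm: phase-wise_bdd_clm1}--\ref{clm: phase-wise_bdd_clm2}, then does the same summation with Claim~\ref{clm: phase-wise_bdd_clm3} for the size bound. Your added remarks about disjointness of the $S_{Ti}$ and the index set having size at most $k$ are exactly the implicit justifications the paper relies on.
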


\begin{proof}
By Claims \ref{clm: phase-wise_bdd_clm1} and \ref{clm: phase-wise_bdd_clm2}, 
$$cost(S_{far,T};p_T^T) = \sum_{i:p(y_i) \neq p_T^T} cost(S_{Ti};p_T^T) \leq k \cdot (\beta_{T+1} + 2)\textsf{OPT}$$
By Claim \ref{clm: phase-wise_bdd_clm3}, 
$$|S_{far,T}| = \sum_{i: p(y_i) \neq p_T^T} |S_{Ti}| \leq k\cdot w(p_T^T).$$
\end{proof}
\end{proof}

The proof of Lemma \ref{lem: sufficiently_weighted_centers} can be found in Appendix \ref{appendix: proofs_bounded_cost}. The argument is by induction and is similar in flavor to the proof of Lemma \ref{lem: cross_phase_bounded_cost}, which we give below. Lemma \ref{lem: cross_phase_bounded_cost} implies Theorem \ref{thm: main_thm}. 
\begin{proof}[Proof of Lemma \ref{lem: cross_phase_bounded_cost}]
 The proof is by induction. Let $C_j$, $S_{ji}$, and $S_{far, j}$ be as in Lemma \ref{lem: far_points}. Define $S_{near, j} = \bigcup_{i : p(y_i) = p_j^T} S_{ji}$ and $S_j$ to be the elements in $C_j$ that are assigned to $p_j^T$ in the clustering of $X(T) \setminus X(T^-)$ induced by $P_T$. Let $w^t$ denote the natural weights at the end of phase $t$. First we need the following key claim. 

 \setcounter{claim}{0}
\begin{claim} \label{clm: main_lem_clm1}
For any $x,y \in \delta^+(p_j^T) \cup \delta^-(p_j^T) \cup \{p_j^T\}$, $x$ and $y$ are $2\beta_{T+1}$-attached w.r.t. $w^T$. 
\end{claim}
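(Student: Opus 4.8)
The plan is to prove the claim by a case analysis on the unordered pair $\{x,y\}$, writing $w = w^T$ throughout. Three facts established earlier drive everything. First, every neighbour $v$ of $p_j^T$ in $D(T)$ is $\beta_{T+1}$-attached to $p_j^T$ w.r.t.\ $w$: for $v = y_i$ this is Proposition~\ref{prop: digraph_attachment} (a center $y_i$ is a neighbour of $p_j^T$ precisely when $p(y_i) = p_j^T$), and for $v = c_j^{T^-}$ it is exactly the condition under which that arc is placed in $D(T)$; moreover an out-neighbour of $p_j^T$ has $w$-weight $\ge w(p_j^T)$ and an in-neighbour has $w$-weight $\le w(p_j^T)$. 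Second, $p_1^T,\dots,p_T^T$ are $\beta_T$-well-separated, hence $\beta_{T+1}$-well-separated, w.r.t.\ $w$ (Lemma~\ref{lem: well-sep-invariant}, using $\beta_T = 3\beta_{T+1} > \beta_{T+1}$). Third, since the algorithm performs an available Add or Exchange Operation whenever one exists and both operations strictly increase the number of pivots, no Exchange Operation is applicable at the end of phase $T$, i.e.\ w.r.t.\ $w$ with the pivots $p_1^T,\dots,p_T^T$.

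If $x = p_j^T$ or $y = p_j^T$, the other point is a neighbour of $p_j^T$, hence $\beta_{T+1}$-attached and a fortiori $2\beta_{T+1}$-attached to it (if $x = y = p_j^T$ the statement is vacuous). Otherwise both $x$ and $y$ are neighbours of $p_j^T$, and I split on whether $\min\{w(x),w(y)\} > w(p_j^T)$. If not --- say $w(x)\le w(p_j^T)$ --- I will only need the triangle inequality through $p_j^T$: attachment of $x$ gives $w(x)\,d(x,p_j^T) < \beta_{T+1}\textsf{OPT}$ (the smaller weight on this pair is $w(x)$), and since $\min\{w(x),w(y)\}\le\min\{w(p_j^T),w(y)\}$, attachment of $y$ gives $\min\{w(x),w(y)\}\,d(p_j^T,y) < \beta_{T+1}\textsf{OPT}$; adding these shows $\min\{w(x),w(y)\}\,d(x,y) < 2\beta_{T+1}\textsf{OPT}$.

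The remaining case --- both $x$ and $y$ proper neighbours with $w(x),w(y) > w(p_j^T)$, so both are out-neighbours of $p_j^T$ --- is the main obstacle, since now the triangle bound through $p_j^T$ is too weak: the relevant minimum weight can strictly exceed $w(p_j^T)$. Here I argue by contradiction. Assume $x$ and $y$ are $2\beta_{T+1}$-well-separated w.r.t.\ $w$; I will show an Exchange Operation with pivot $p_j^T$ and points $x_\alpha = x$, $x_\gamma = y$ would be applicable at the end of phase $T$. Conditions (1)--(3) of the Exchange rule hold at once: $x$ and $y$ are $\beta_{T+1}$-attached to $p_j^T$, and both are strictly heavier than $p_j^T$. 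For condition (4) I must verify that the $T+1$ points $\{x,y\}\cup\{p_l^T : l\ne j\}$ are $\beta_{T+1}$-well-separated w.r.t.\ $w$: the $p_l^T$'s are pairwise $\beta_{T+1}$-well-separated by Lemma~\ref{lem: well-sep-invariant}; each of $x,y$ is $\beta_{T+1}$-attached to $p_j^T$ with $w(p_j^T)$ the smaller weight, and $p_j^T$ is $\beta_T$-well-separated from each $p_l^T$, so Proposition~\ref{prop: meta-prop} gives that $x$ (resp.\ $y$) is $(\beta_T-\beta_{T+1}) = 2\beta_{T+1}$-well-separated, hence $\beta_{T+1}$-well-separated, from each $p_l^T$ (this also forces $x,y\notin\{p_l^T\}$, so the $T+1$ points are distinct); and $x,y$ are $\beta_{T+1}$-well-separated by the contradiction hypothesis. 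Thus all four conditions hold, contradicting that no Exchange is applicable at the end of phase $T$; hence $x$ and $y$ are $2\beta_{T+1}$-attached.

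Finally, a couple of loose ends I will address: the only possible non-$y_i$ neighbour of $p_j^T$ is $c_j^{T^-}$, which is a previously-arrived point and hence a legitimate choice for $x_\alpha$ or $x_\gamma$ in the Exchange rule; and when $T = 1$ there are no pivots $p_l^T$ with $l\ne j$, so condition (4) in the last case degenerates to ``$x$ and $y$ are $\beta_{T+1}$-well-separated,'' which still follows from the contradiction hypothesis, so the argument goes through verbatim.
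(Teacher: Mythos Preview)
Your proof is correct and follows essentially the same approach as the paper's: the trivial case where one of $x,y$ equals $p_j^T$; the triangle-through-$p_j^T$ argument when at least one of $x,y$ has weight at most $w(p_j^T)$ (the paper packages this as an application of Proposition~\ref{prop: meta-prop}, you unfold it directly); and the contradiction via an applicable Exchange Operation when both are out-neighbours. The one minor difference is that in the last case the paper assumes only $\beta_{T+1}$-well-separation and hence obtains the slightly stronger conclusion that $x,y$ are $\beta_{T+1}$-attached, whereas you assume $2\beta_{T+1}$-well-separation and obtain exactly the stated $2\beta_{T+1}$-attachment; since only the latter is required for the claim, this is immaterial.
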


\begin{proof}[Proof of Claim \ref{clm: main_lem_clm1}]
If $x$ or $y$ is $p_j^T$, then the claim automatically holds by Proposition \ref{prop: digraph_attachment}. There are two other cases. The first case is, WLOG, $x \in \delta^-(p_j^T)$. Regardless of whether $y$ is in $\delta^{-}(p_j^T)$ or $\delta^{+}(p_j^T)$, the claim holds by Propositions \ref{prop: digraph_attachment} and \ref{prop: meta-prop}. The second case is that $x,y \in \delta^{+}(p_j^T)$. We prove the stronger statement that $x$ and $y$ are $\beta_{T+1}$-attached w.r.t. $w^T$. Suppose to the contrary that $x$ and $y$ are $\beta_{T+1}$-well-separated. We claim that this implies 
\begin{equation}\label{add_swap_contr}
\{p_1^T, \dots, p_T^T\} \cup \{x, y \} \setminus \{p_j^T\}
\end{equation}
is $\beta_{T+1}$-well-separated w.r.t. $w^T$; this would give a contradiction, since if an Exchange Operation were available, it would have been executed. Now suppose that (\ref{add_swap_contr}) does not hold. Then WLOG $p_{j'}^T$ and $x$ are $\beta_{T+1}$-attached w.r.t. $w^T$, for some $j' \neq j$. Since $x \in \delta^+(p_j^T)$ and since $x$ and $p_j^T$ are $\beta_{T+1}$-attached w.r.t. $w^T$, by Proposition \ref{prop: meta-prop}, $p_j^T$ and $p_{j'}^T$ are $2\beta_{T+1}$-attached w.r.t. $w^T$. This contradicts that $p_j^T$ and $p_{j'}^T$ are $\beta_T$-well-separated w.r.t. $w^T$, since $2\beta_{T+1} < \beta_T$. This concludes the proof of the case and the claim.
\end{proof}
To bound the cost contribution of $C_j^{T^-}$, we case on which statement holds in Lemma \ref{lem: well_attached_centers}.

\setcounter{case}{0}
\begin{case} \label{cs: main_lem_cs1}
$c_j^{T^-}$ is $\beta_{T+1}$-attached to $p_j^T$ w.r.t. $w^T$ (i.e., (b) holds in Lemma \ref{lem: well_attached_centers}). 
\end{case}

Since in Case \ref{cs: main_lem_cs1}, $c_j^{T^-}$ is $\beta_{T+1}$-attached to $p_j^T$ w.r.t. $w^T$, $c_j^{T^-} \in \delta^+(p_j^T) \cup \delta^-(p_j^T)$. Also, $c_j^T$ by definition is in $\delta^+(p_j^T) \cup \{p_j^T\}$. So by Claim \ref{clm: main_lem_clm1}, $c_j^{T^-}$ is $2\beta_{T+1}$-attached to $c_j^T$ w.r.t. $w^T$. Using this, we bound $cost(C_j^{T^-}; c_j^T)$:
\begin{align}
cost(C_j^{T^-}; c_j^T) &\leq cost(C_j^{T^-}; c_j^{T^-}) + |C_j^{T^-}| \cdot d(c_j^{T^-}, c_j^T) \notag \\
&\leq g(T^-, k) \cdot \textsf{OPT} + |C_j^{T^-}| \cdot d(c_j^{T^-}, c_j^T) \notag \\
&\leq  g(T^-, k) \cdot \textsf{OPT} + (2k+1) \cdot T^- \cdot w^{T^-}(c_j^{T^-}) \cdot d(c_j^{T^-}, c_j^T)  \notag \\
&\leq  g(T^-, k) \cdot \textsf{OPT} + (2k+1) \cdot T^- \cdot w^{T}(c_j^{T^-}) \cdot d(c_j^{T^-}, c_j^T)  \notag \\
&\leq g(T^-, k) \cdot \textsf{OPT} + (2k+1) \cdot T^- \cdot 2 \beta_{T+1} \cdot \textsf{OPT} \label{prev_cost_cs1}
\end{align}
where the third inequality is due to Lemma \ref{lem: sufficiently_weighted_centers}.

\begin{case} \label{cs: main_lem_cs2}
(b) does not hold in Lemma \ref{lem: well_attached_centers}, so (a) holds, i.e., $w^{T^-}(c_j^{T^-}) \leq w^T(p_j^T)$ and $w^{T^-}(c_j^{T^-}) \cdot d(c_j^{T^-}, p_j^T) \leq \beta_{T^-}(T-T^-) \cdot \textsf{OPT}$. 
\end{case}

We bound $cost(C_j^{T^-}; c_j^T)$:
\begin{align}
cost(C_j^{T^-}; c_j^T) &\leq cost(C_j^{T^-}; c_j^{T^-}) + |C_j^{T^-}| \cdot d(c_j^{T^-}, c_j^T) \notag \\
&\leq g(T^-, k) \cdot \textsf{OPT} + |C_j^{T^-}| \cdot d(c_j^{T^-}, p_j^T) + |C_j^{T^-}| \cdot d(p_j^T, c_j^T) \label{two_bounds}
\end{align}
and now we use the assumptions of the case to continue bounding from (\ref{two_bounds}):
\begin{align}
|C_j^{T^-}| \cdot d(c_j^{T^-}, p_j^T) &\leq (2k+1) \cdot T^- \cdot w^{T^-}(c_j^{T^-}) \cdot d(c_j^{T^-}, p_j^T) \notag \\
&\leq (2k+1) \cdot T^- \cdot \beta_{T^-} (T-T^{-}) \cdot \textsf{OPT} \label{bound_1}
\end{align}
where the first inequality is due to Lemma \ref{lem: sufficiently_weighted_centers}. Next, 
\begin{align}
|C_j^{T^-}| \cdot d(p_j^T, c_j^T) \leq (2k+1) T^- \cdot w^{T^-}(c_j^{T^-}) \cdot d(p_j^T, c_j^T) &\leq (2k+1)  T^- \cdot w^{T}(p_j^{T}) \cdot d(p_j^T, c_j^T) \notag \\
&\leq (2k+1) T^- \cdot \beta_{T+1} \cdot \textsf{OPT} \label{bound_2}
\end{align}
where the first inequality is due to Lemma \ref{lem: sufficiently_weighted_centers} and the last inequality is due to Proposition \ref{prop: attached_estimated_center}. So combining (\ref{two_bounds}), (\ref{bound_1}), (\ref{bound_2}) gives 
\begin{equation} \label{prev_cost_cs2}
    cost(C_j^{T^-}; c_j^T) \leq g(T^-, k) \cdot \textsf{OPT} + (2k+1) \cdot T^- \cdot (\beta_{T^-}(T-T^-) + \beta_{T+1}) \cdot \textsf{OPT}. 
\end{equation}
Now we have bounds (\ref{prev_cost_cs1}) and (\ref{prev_cost_cs2}) for $cost(C_j^{T^-}; c_j^T)$. Recall that  $C_j^T = C_j^{T^-} \cup S_{far, j} \cup S_{near, j} \cup S_j$. The following bounds will hold regardless of whether we are in Case \ref{cs: main_lem_cs1} or \ref{cs: main_lem_cs2}. We have
\begin{equation} \label{pivot_cost}
cost(S_j; c_j^T) \leq cost(S_j; p_j^T) + |S_j| \cdot d(p_j^T, c_j^T) \leq 2\textsf{OPT} + w^T(p_j^T) \cdot d(p_j^T, c_j^T) \leq (2 + \beta_{T+1})\textsf{OPT}
\end{equation}
\begin{align}
cost(S_{near,j}; c_j^T) &= \sum_{i: p(y_i) = p_j^T} cost(S_{ji}; c_j^T)
\leq \sum_{i: p(y_i) = p_j^T} \sum_{p \in S_{ji}} d(p, c_j^T) \notag \\
&\leq 2\textsf{OPT} + \sum_{i: p(y_i) = p_j^T} w^T(y_i)\cdot d(y_i, c_j^T)  \leq (2k\beta_{T+1} +2)\textsf{OPT} \label{near_cost}
\end{align}
where we have used Claim \ref{clm: main_lem_clm1} and that $|S_{ji}| \leq w^T(y_i)$. Finally, by Lemma \ref{lem: far_points},
\begin{align}
    cost(S_{far,j}; c_j^T) \leq cost(S_{far,j}; p_j^T) + |S_{far,j}| \cdot d(p_j^T, c_j^T) \leq k(2\beta_{T+1} +2) \textsf{OPT} \label{far_cost}
\end{align}
 Combining (\ref{pivot_cost}), (\ref{near_cost}), (\ref{far_cost}) with (\ref{prev_cost_cs1}) or (\ref{prev_cost_cs2}) gives the sought bound:
$$cost(C_j^T; c_j^T) \leq [g(T^-, k) + g(k)] \textsf{OPT} \leq g(T,k) \cdot \textsf{OPT}.$$
\end{proof}

\printbibliography

\appendix

\section{Helper Propositions} \label{appendix: helper_props}

\begin{fact}[Fact 2.1 in \cite{meyerson2004k}] \label{fact: Meyerson}
Let $N$ be a set of points, with $S \subseteq N$. Let $k$ be an integer with $0 \leq k \leq n$. Let $K \subseteq S$ be the $k$-element subset of $S$ minimizing $\sum_{x \in S} d(x,K)$ where $d(\cdot, \cdot)$ is the distance function on $N$, and $d(x,K)$ denotes $\min_{m \in K} d(x,m)$. Then if $K'$ is a $k$-element subset of $N$, $\sum_{x \in S} d(x,K) \leq 2 \sum_{x \in S} d(x,K')$. 
\end{fact}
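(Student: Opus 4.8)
The plan is to prove this by the standard ``restricting centers to the input points costs at most a factor $2$'' argument. Fix an arbitrary $k$-element subset $K' \subseteq N$ (we may assume $k \geq 1$ and $S \neq \emptyset$, since the statement is vacuous otherwise). First I would partition $S$ according to $K'$: for each $c' \in K'$, let $S_{c'}$ be the set of points $x \in S$ whose nearest center in $K'$ is $c'$, breaking ties by any fixed rule so that the sets $\{S_{c'}\}_{c' \in K'}$ partition $S$ and $d(x, K') = d(x, c')$ for every $x \in S_{c'}$.

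Next I would construct a cheap feasible solution that lives inside $S$. For each $c'$ with $S_{c'} \neq \emptyset$, pick a representative $s_{c'} \in \argmin_{x \in S_{c'}} d(x, c')$, i.e.\ a point of $S_{c'}$ closest to $c'$, and set $\widetilde{K} = \{\, s_{c'} : c' \in K',\ S_{c'} \neq \emptyset \,\}$. Then $\widetilde{K} \subseteq S$ and $|\widetilde{K}| \leq k$; if $|\widetilde{K}| < k$, pad it with arbitrary additional points of $S$ to obtain a set of size exactly $k$, which only decreases $\sum_{x \in S} d(x, \widetilde{K})$. Since $K$ minimizes $\sum_{x \in S} d(x, \cdot)$ over all $k$-element subsets of $S$, this yields $\sum_{x \in S} d(x, K) \leq \sum_{x \in S} d(x, \widetilde{K})$.

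It then remains to bound $\sum_{x \in S} d(x, \widetilde{K})$ in terms of $\sum_{x \in S} d(x, K')$. For $x \in S_{c'}$, the triangle inequality gives $d(x, \widetilde{K}) \leq d(x, s_{c'}) \leq d(x, c') + d(c', s_{c'})$, and by the choice of $s_{c'}$ as a closest point of $S_{c'}$ to $c'$ we have $d(c', s_{c'}) \leq d(c', x)$, so $d(x, \widetilde{K}) \leq 2\, d(x, c') = 2\, d(x, K')$. Summing this over the partition $\{S_{c'}\}$ of $S$ gives $\sum_{x \in S} d(x, \widetilde{K}) \leq 2 \sum_{x \in S} d(x, K')$, and chaining with the inequality from the previous paragraph completes the proof.

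There is no real obstacle here — the argument is elementary — so the ``hard part'' is only a couple of bookkeeping points: ensuring the partition $\{S_{c'}\}$ is well-defined via a fixed tie-break so each $x \in S$ is charged to exactly one center, and handling the case $|\widetilde{K}| < k$ by padding and invoking monotonicity of the objective under adding centers.
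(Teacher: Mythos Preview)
Your proof is correct and is exactly the standard argument for this fact. Note, however, that the paper does not actually prove this statement: it is quoted as Fact~2.1 from \cite{meyerson2004k} and used as a black box, so there is no in-paper proof to compare against. Your argument matches the usual proof of this well-known bound.
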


\begin{proposition} \label{prop: meta-prop}
Let $x,y,p$ be three points, and let $w$ denote some associated weights. Assume that $\beta, \beta_x, \beta_y, \textsf{OPT} > 0$. Suppose that $w(x) \leq w(p)$ and that $x$ and $y$ are $(\beta, \textsf{OPT})$-well-separated w.r.t. $w$. If $x$ and $p$ are $(\beta_x, \textsf{OPT})$-attached w.r.t. $w$, and $y$ and $p$ are $(\beta_y, \textsf{OPT})$-attached w.r.t. $w$, then $\beta < \beta_x + \beta_y$. 
\end{proposition}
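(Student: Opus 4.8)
The plan is to unwind the three hypotheses into metric inequalities and combine them via the triangle inequality, being careful about which weights control which distances. Write $w_x = w(x)$, $w_y = w(y)$, $w_p = w(p)$. The well-separation hypothesis on $x,y$ says $\min\{w_x, w_y\} \cdot d(x,y) \geq \beta \cdot \textsf{OPT}$. The attachment hypotheses say $\min\{w_x, w_p\} \cdot d(x,p) < \beta_x \cdot \textsf{OPT}$ and $\min\{w_y, w_p\} \cdot d(y,p) < \beta_y \cdot \textsf{OPT}$. We are also given $w_x \leq w_p$, so $\min\{w_x, w_p\} = w_x$; hence $d(x,p) < (\beta_x/w_x)\cdot \textsf{OPT}$.

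First I would handle the distance $d(y,p)$. From the second attachment hypothesis, $d(y,p) < \beta_y \cdot \textsf{OPT} / \min\{w_y, w_p\} \leq \beta_y \cdot \textsf{OPT} / w_y$ if $w_y \leq w_p$, but in general $\min\{w_y,w_p\}$ could be $w_p$. The cleaner route: regardless of which is smaller, $\min\{w_y,w_p\} \cdot d(y,p) < \beta_y\cdot\textsf{OPT}$, and since $\min\{w_x,w_y\} \leq w_y$ and $\min\{w_x,w_y\}\leq w_x \leq w_p$, we get $\min\{w_x,w_y\} \leq \min\{w_y,w_p\}$. Therefore $\min\{w_x,w_y\}\cdot d(y,p) \leq \min\{w_y,w_p\}\cdot d(y,p) < \beta_y\cdot\textsf{OPT}$. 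Similarly $\min\{w_x,w_y\} \leq w_x = \min\{w_x,w_p\}$, so $\min\{w_x,w_y\}\cdot d(x,p) \leq \min\{w_x,w_p\}\cdot d(x,p) < \beta_x \cdot \textsf{OPT}$. Now multiply the triangle inequality $d(x,y) \leq d(x,p) + d(y,p)$ through by $\min\{w_x,w_y\}$:
\[
\beta \cdot \textsf{OPT} \leq \min\{w_x,w_y\}\cdot d(x,y) \leq \min\{w_x,w_y\}\cdot d(x,p) + \min\{w_x,w_y\}\cdot d(y,p) < (\beta_x + \beta_y)\cdot\textsf{OPT}.
\]
Dividing by $\textsf{OPT} > 0$ gives $\beta < \beta_x + \beta_y$.

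The only real subtlety — and the one place to be careful rather than the "main obstacle," since the argument is short — is the bookkeeping with the $\min$'s: the well-separation bound is stated with the weaker weight $\min\{w_x,w_y\}$, so one must check that this weight is dominated by both $\min\{w_x,w_p\}$ and $\min\{w_y,w_p\}$, which is exactly where the hypothesis $w_x \leq w_p$ is used (it forces $w_x = \min\{w_x,w_p\}$ and also $w_x \leq w_p$ makes $\min\{w_x,w_y\}\leq \min\{w_y,w_p\}$). Once that domination is in hand the rest is a one-line triangle-inequality computation. I would present the two domination facts as a preliminary observation, then do the triangle inequality chain displayed above, and conclude.
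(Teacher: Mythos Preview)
Your proof is correct and is essentially the same triangle-inequality argument as the paper's, just organized more cleanly: the paper splits into the two cases $w(p)\le w(y)$ and $w(y)\le w(p)$, whereas your single observation $\min\{w_x,w_y\}\le \min\{w_y,w_p\}$ (which uses $w_x\le w_p$) handles both at once.
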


\begin{proof}[Proof of Proposition \ref{prop: meta-prop}]
We consider two cases. First, suppose that $w(p) \leq w(y)$. Then 
\begin{align*}
\beta \cdot \textsf{OPT} & \leq w(x) \cdot d(x,y) \\
& \leq w(x) \cdot d(x,p) + w(x) \cdot d(p,y) \\
&\leq w(x) \cdot d(x,p) + w(p) \cdot d(p,y) \\
&< \beta_x \cdot \textsf{OPT} + \beta_y \cdot \textsf{OPT}.
\end{align*}
For the second case, $w(y) \leq w(p)$. Then 
\begin{align*}
\beta \cdot \textsf{OPT} &\leq \min\{w(x), w(y)\} \cdot d(x,y) \\
&\leq \min\{w(x), w(y)\} \cdot d(x,p) + \min\{w(x), w(y)\} \cdot d(y,p) \\
&\leq w(x) \cdot d(x,p) + w(y) \cdot d(y,p) \\
&< \beta_x \cdot \textsf{OPT} + \beta_y \cdot \textsf{OPT}.
\end{align*}
\end{proof}

\section{Lower Bound} \label{appendix: lower_bd}

\begin{proof}[Proof of Theorem \ref{thm: lower_bd}]
For sake of contradiction, assume there is an online algorithm  $A$ that is a $c$-approximation for $c\leq \frac{k-1}{2} -\epsilon$ and any $\epsilon >0$. The algorithm is given a budget $\textsf{OPT}$.  The sequence of points that arrives will lie in $\mathbb{R}^k$. First there are $10k^2$ points $P_0$ that arrive at location $ (0,0,\dots, 0)$.  We may assume that $A$ assigns each of these the same label.  

There are $k-1$ remaining phases for how points are released. Index these phases as $1,2, \ldots k-1$. At the end of the $i$th phase, $A$ will have accumulated cost at least $\frac{i}{2} \textsf{OPT}$.  In the end, $A$'s cost will be at least $\frac{k-1}{2}\textsf{OPT}$, contradicting the assumption that $A$ is a $c$-approximation. Additionally, by the end of the $i$th phase, the algorithm will have used at least $i+1$ labels.  All points in the $i$th phase will arrive at location $(0, 0, \ldots \textsf{OPT}, \ldots 0)$ where the $i$th dimension is non-zero.  The number of points that arrive will be $10 k^2$ at this location.  The majority of these points will have to be labelled with a single label by the algorithm. 

Consider phase $i$.  The adversary releases one point $q_i$ at location $(0, 0, \ldots \textsf{OPT}, \ldots 0)$.  First \emph{assume} the algorithm the algorithm does not give this  point a new label. We will revisit this assumption at the end of the proof. Then regardless of what label the algorithm gives this point, it is distance at least $\textsf{OPT}$ from all other points that have arrived.  Thus, this point will add an additional $\textsf{OPT}/2$ cost to the algorithm.  Inductively, the total algorithm cost is at least $\frac{i}{2} \textsf{OPT}$.  

Next $10k^2 -1$ points arrive at this same location.  If the algorithm does not label half of these points with a new label, then these points will be grouped with points that arrived previously.  Since prior points are distances at least $\textsf{OPT}$ away, this will make the algorithm's total cost larger than $\Omega(k^2 \textsf{OPT})$. Thus, the algorithm must label half of these points with a new label.  Now the inductive invariants are satisfied.

At the end, the optimal solution labels the points that arrive in each phases a different label. The total optimal cost is $0$ and the algorithm has cost $\Omega(k \textsf{OPT})$.

Now we revisit the assumption that $q_i$ must not be given a new label by $A$.  Indeed, say it is given a new label.  Then the above procedure terminates at this time.  Instead, we know the algorithm has used $i+1$ labels.  There is a clustering of the points that have arrived whose cost is below \textsf{OPT} and only uses $i$ labels.  Intuitively, this is a large mistake. 

Indeed, consider the following. There will be $k-i$ points $P^*$ that will arrive.  The $\ell$th point arrives at location $(0, 0, \ldots L \cdot \textsf{OPT}, \ldots 0)$ where $L$ is a parameter. Here $\ell \in [k-i]$. The non-zero dimension is the $(i+\ell)th$ dimension. 

The optimal solution labels $q_i$ the same label as the points in $P_0$ and this cluster has cost \textsf{OPT}.  All other points get a unique label depending on which dimension is non-zero.  $k-1$ dimensions are used, combined with $P_0$ this is $k$ labels. The cost of these points are $0$ and the only cost is $q_i$ paying cost $\textsf{OPT}$.  The optimal cost is bounded. 

On the other hand, $A$  only has $k-i-1$ unused labels when points in $P^*$ begin arriving.  One of them must be given a label as some other point whose distance is at least $L\textsf{OPT}$ away.  The cost of this cluster is at least $\frac{L}{2} \textsf{OPT}$.  Setting $L$ large, contradicts the bounded approximation ratio of $A$.

\end{proof}

\section{Omitted Proofs from Section \ref{sec:invariant}} \label{appendix: omitted_proofs_invariant}

\begin{proof}[Proof of Proposition \ref{prop: attached_estimated_center}]
By the definition (\ref{eq: good_center}) of $c_j^T$, we know that $c_j^T \in \delta^+(p_j^T) \cup p_j^T \subseteq \\ \{y_1, \dots, y_l, p_j^T, c_j^{T^-}\}$. By Proposition \ref{prop: digraph_attachment} and the fact that $c_j^{T^-} \in \delta^+(p_j^T)$ only if $c_j^{T^-}$ is $\beta_{T+1}$-attached to $p_j^T$ w.r.t. $w^T$, $c_j^T$ is $\beta_{T+1}$-attached to $p_j^T$ w.r.t. $w^T$. That $w^T(c_j^T) \geq w^T(p_j^T)$ follows directly from $c_j^T$ maximizing $w^T$ over a set that includes $p_j^T$, and this is with equality if and only if $c_j^T = p_j^T$ due to the tiebraking rule in the definition (\ref{eq: good_center}) of estimated center. 
\end{proof}

\begin{proof}[Proof of Lemma \ref{lem: well-sep-invariant}]
The proof will consist of an inner induction nested in an outer induction. Let $1 = T_1 < T_2 < \cdots < T_L \leq k$ denote the non-intermediate phases (i.e., phases during which at least one point receives a label). 
Take $T = T_i$, so that $T^+ = T_{i+1}$. In the \textit{outer induction}, we will prove Lemma \ref{lem: well-sep-invariant} for the non-intermediate phases: i.e., if $p_1^T, \dots, p_T^T$ are $\beta_{T}$-well-separated w.r.t. the natural weights at the start of phase $T$, then $p_1^{T^+}, \dots, p_{T^+}^{T^+}$ are $\beta_{T^+}$-well-separated w.r.t. the natural weights at the start of phase $T^+$. Note the base case is satisfied, since $\{p_1^1\}$ is trivially a $\beta_1$-well-separated set w.r.t. the natural weights at the start of phase 1, thus also w.r.t. the natural weights at the end of phase 1. Given the inductive assumption that $p_1^T, \cdots, p_T^T$ are $\beta_T$-well-separated w.r.t. the natural weights at the start of phase $T$, we prove the following claim. Taking $t=T^+$ in the claim completes the proof.

\innerinduct*

 Note that if (a) holds, another Add Operation is available, so $t < T^+$.
 
 The proof of Proposition \ref{prop: intermediate_attachment} will be the \textit{inner induction}. The first part (\ref{eq: well-sep-inner-induction}) of the claim establishes that Lemma \ref{lem: well-sep-invariant} holds for the intermediate phases. Note that while we only need to prove (\ref{eq: well-sep-inner-induction}) to establish Lemma \ref{lem: well-sep-invariant}, we will need to couple the inner induction with the second part of the claim to actually prove (\ref{eq: well-sep-inner-induction}). The second part of the claim states that, upon the arrival of the single point $x_i$, as we consecutively reset the pivots until no Add Operation or Exchange Operation is available, the (fixed) estimated center $c_j^T$ for cluster $j$ at the end of phase $T$ remains attached to the pivot for label $j$ --- even though the well-separation parameter $\beta_t$ decreases and the pivot for label $j$ may change position.

\begin{proof}[Proof of Proposition \ref{prop: intermediate_attachment}]
The proof is by induction. First we prove the base case $t = T$. We know that $p_1^T, \dots, p_T^T$ are $\beta_{T}$-well-separated w.r.t. the weights at the start of phase $T$ by the inductive assumption of the outer induction, thus also w.r.t. $w_{i-1}$. So (\ref{eq: well-sep-inner-induction}) is satisfied. Also, we know that $c_j^T$ is $\beta_{T+1}$-attached to $p_j^T$ w.r.t. $w_{i-1}$ by Proposition \ref{prop: attached_estimated_center}, so (b) is satisfied. This concludes the base case.  

Next we assume the claim holds for $t \in [T, T^+)$. The inductive step is to prove that the claim holds for $t+1$, or when this phase is skipped as in Case 4 of the Add Operation and Case 5 of the Exchange Operation, holds for $t+2$.

\paragraph*{Inductive step for (\ref{eq: well-sep-inner-induction}).} We start by proving the inductive step for the first part (\ref{eq: well-sep-inner-induction}) of the claim for $t+1$ or $t+2$. If one of Cases 1 through 3 of the Add Operation is executed, then (\ref{eq: well-sep-inner-induction}) holds by construction for $t+1$. This is because in these cases the new pivot location ($x_{\alpha}$ or some $c_j^T$, $j \in [T]$) is $\beta_{t+1}$-well-separated from $p_1^t, \cdots, p_t^t$ w.r.t. $w_i = w_{t+1}$. Likewise, if one of Cases 1 through 4 of the Exchange Operation is executed, then (\ref{eq: well-sep-inner-induction}) holds by construction for $t+1$. This is because in these cases the new set of pivots is $\{p_1^t, \cdots, p_t^t\} \cup \{x_{\alpha},x_{\gamma}\} \setminus \{p_{j}^t\}$, which is $\beta_{t+1}$-well-separated w.r.t. $w_i = w_{t+1}$. So it remains to show that (\ref{eq: well-sep-inner-induction}) holds for $t+2$ in Case 4 of the Add Operation and Case 5 of the Exchange Operation.

\setcounter{case}{0}

\begin{case} \label{cs: well-sep-invariant-3b}
Case 4 of the Add Operation is executed.
\end{case}
 In particular, we need to show that (A) $c_{f}^T$ and $c_{g}^T$ are $\beta_{t+2}$-well-separated w.r.t. $w_{t+2} = w_i$, and (B) $c_{f}^T$ and $c_{g}^T$ are each $\beta_{t+2}$-well-separated from $p_1^t, \dots, p_t^t$ w.r.t. $w_{t+2} = w_i$. \\
 
First we prove (A). Suppose to the contrary that $c_{f}^T$ and $c_{g}^T$ are $\beta_{t+2}$-attached w.r.t. $w_i$. Then $c_{f}^T$ and $c_{g}^T$ are $\beta_{t+2}$-attached w.r.t. $w_{i-1}$, since $w_{i-1} \leq w_i$ holds pointwise. Also, by Proposition \ref{prop: attached_estimated_center}, $c_{f}^T$ is $\beta_{T+1}$-attached to $p_{f}^T$ w.r.t. $w_{i-1}$ and $w_{i-1}(c_{f}^T) \geq w_{i-1}(p_{f}^T)$ (and likewise for $c_g^T, p_g^T$). Since by the base case $p_1^T, \cdots, p_T^T$ are $\beta_{T}$-well-separated w.r.t. $w_{i-1}$, and $\beta_{t+2} + \beta_{T+1} + \beta_{T+1} < \beta_T$, we reach a contradiction through two applications of Proposition \ref{prop: meta-prop}. We conclude that $c_{f}^T$ and $c_{g}^T$ are $\beta_{t+2}$-well-separated w.r.t. $w_{t+2} = w_i$. \\

Next we prove (B). WLOG, we will first show that $c_f^T$ is $\beta_{t+2}$-well-separated from $p_f^t$ w.r.t. $w_{t+2} = w_i$. First, we show that $w_i(c_f^T) \geq w_i(p_f^t)$. This is clearly true for $t>T$, by assumption of Case 4 of the Add Operation and the fact that $w_t = w_i$. When $t=T$, the argument is more subtle. Note that $c_f^T \neq p_f^t$, since $c_f^T$ is $\beta_{t+2}$-attached to $x_{\alpha}$ w.r.t. $w_i$, while $x_{\alpha}$ is $\beta_{t+1}$-well-separated from $p_f^t$ w.r.t. $w_i$. In turn, since $c_f^T \neq p_f^t$ if $t = T$, by Proposition \ref{prop: attached_estimated_center} we have $w_{i-1}(c_f^T) > w_{i-1}(p_f^t)$. Since the natural weights can only go up by 1 on the arrival of $x_i$, this means that $w_i(c_f^T) \geq w_i(p_f^t)$ if $t=T$. So we conclude $w_i(c_f^T) \geq w_i(p_f^t)$ for all $t \in [T, T^+)$ as desired. Thus, by Proposition \ref{prop: meta-prop}, it is not possible for $c_{f}^T$ to be $\beta_{t+2}$-attached to $p_f^t$ w.r.t. $w_i$, since $w_i(c_f^T) \geq w_i(p_f^t)$, $c_f^T$ is $\beta_{t+2}$-attached to $x_{\alpha}$ w.r.t. $w_i$ (by definition of Case 4 of the Add Operation), and $p_f^t$ and $x_{\alpha}$ are $\beta_{t+1}$-well-separated w.rt. $w_i$ (by definition of the Add Operation). 

To finish the proof of (B), we need to show that $c_f^T$ is $\beta_{t+2}$-well-separated from $p_l^t$ w.r.t. $w_{t+2}$, $l \in [t] \setminus f$. By the inductive assumption, at least one of statements (a)---(c) in the statement of the proposition holds. Note that (a) in the statement of the proposition does not hold, for else Case 1 of the Add Operation would have been executed. Also (c) in the statement of the proposition does not hold, since by assumption of Case 4 of the Add Operation, $w_t(c_f^T) \geq w_t(p_f^t)$. So (b) in the statement of the claim holds, i.e., $c_f^T$ is $\beta_{t+1}$-attached to $p_f^t$ w.r.t. $w_t$. Also by the inductive assumption, $p_f^t$ and $p_l^t$ are $\beta_t$-well-separated w.r.t. $w_t$. Since $\beta_{t+1} + \beta_{t+2} < \beta_t$, it must be the case by Proposition \ref{prop: meta-prop} that $c_f^T$ and $p_l^t$ are $\beta_{t+2}$-well-separated w.r.t. $w_t$ for $l \neq f$. Thus $c_f^T$ and $p_l^t$ are $\beta_{t+2}$-well-separated w.r.t. $w_{t+2}$ as well, concluding the proof of (B) and of Case \ref{cs: well-sep-invariant-3b}.

\begin{case}
Case 5 of the Exchange Operation is executed.
\end{case}
First note that $\{x_{\alpha},x_{\gamma}\} \cup \{p_l^t \mid l \in [t] \setminus j\}$ is $\beta_{t+1}$-well-separated w.r.t. $w_{t+2} = w_i$ by assumption of the Exchange Operation, thus this set is also $\beta_{t+2}$-well-separated w.r.t. $w_i$. 

Next, we need to show that $c_{j}^T$ is $\beta_{t+2}$-well-separated from $\{p_l^t \mid l \in [t] \setminus j\} \cup \{x_{\alpha},x_{\gamma}\}$ w.r.t. $w_i$. Since Cases 3 and 4 of the Exchange Operation were not executed, $c_{j}^T$ is $\beta_{t+2}$-well-separated from $x_{\alpha}$ and from $x_{\gamma}$ w.r.t. $w_i$. 

Next, note that since an Add Operation, which takes precedence over an Exchange Operation, was not executed, (a) in the statement of the proposition cannot hold in the inductive assumption for $j$. Moreover, since Case 2 of the Exchange Operation is not executed and (c) in the statement of the proposition cannot hold when $t=T$ (by Proposition \ref{prop: attached_estimated_center}), (c) in the statement of the proposition cannot hold in the inductive assumption for $j$. We conclude $c_j^T$ is $\beta_{t+1}$-attached to $p_{j}^t$ w.r.t. $w_t$ and $w_t(c_{j}^T) \geq w_t(p_{j}^t)$. This implies by Proposition \ref{prop: meta-prop} that $c_{j}^T$ is $\beta_{t+2}$-well-separated from $p_l^t$ w.r.t. $w_t$ for $l \neq j$, since $p_{j}^t$ and $p_l^t$ are $\beta_{t}$-well-separated w.r.t. $w_t$ by the inductive assumption. Hence $c_{j}^T$ is $\beta_{t+2}$-well-separated from $p_l^t$ w.r.t. $w_{t+1} = w_i$ for $l \in [t] \setminus j$.

\paragraph*{Inductive step for (a)---(c).}  Next we prove the inductive step holding for the second part of the proposition. We first introduce the following fact which we will use repeatedly in the rest of the proof.

 \begin{fact} \label{fact: centers-well-sep}
Let $j, j' \in [T]$, $j \neq j'$. Then $c_j^T$ and $c_{j'}^T$ are $\beta_{T+1}$-well-separated w.r.t. $w_{i-1}$.
\end{fact}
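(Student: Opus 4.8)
The plan is to derive the separation of $c_j^T$ and $c_{j'}^T$ from the separation of the pivots $p_j^T$ and $p_{j'}^T$ (which holds with the larger parameter $\beta_T$ by the outer inductive hypothesis) together with the attachment of each estimated center to its own pivot (Proposition~\ref{prop: attached_estimated_center}). The key quantitative input is that $\beta_T$ is strictly larger than $2\beta_{T+1}$, and in fact larger than $\beta_{T+1} + \beta_{T+1} + \beta_{T+2}$ or similar, since $\beta_t = 8\cdot 3^{k-t+2}$ grows by a factor of $3$ as $t$ decreases by $1$; this slack is exactly what lets us ``peel off'' two attachment steps and still land above $\beta_{T+1}$.

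First I would recall that, by Proposition~\ref{prop: attached_estimated_center}, $c_j^T$ is $\beta_{T+1}$-attached to $p_j^T$ with $w^T(c_j^T)\ge w^T(p_j^T)$, and likewise $c_{j'}^T$ is $\beta_{T+1}$-attached to $p_{j'}^T$ with $w^T(c_{j'}^T)\ge w^T(p_{j'}^T)$, all with respect to $w_{i-1} = w^T$ (here $w^T$ is the weight at the end of phase $T$, which is $w_{i-1}$ in the notation of the current proof). Now suppose for contradiction that $c_j^T$ and $c_{j'}^T$ are $\beta_{T+1}$-attached w.r.t. $w_{i-1}$. Apply Proposition~\ref{prop: meta-prop} with the three points $x = c_j^T$ (or whichever of $c_j^T,c_{j'}^T$ has the smaller weight), $y = $ the other estimated center, $p = p_j^T$: using $w(c_j^T)\ge w(p_j^T)$ and the attachment of $c_j^T$ to $p_j^T$ one pushes the ``attachment'' from $c_j^T$ onto $p_j^T$, concluding that $p_j^T$ is $(\beta_{T+1}+\beta_{T+1})$-attached — i.e. $2\beta_{T+1}$-attached — to $c_{j'}^T$. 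Then apply Proposition~\ref{prop: meta-prop} a second time with $p = p_{j'}^T$ and the pair $\{p_j^T, c_{j'}^T\}$, using $w(c_{j'}^T)\ge w(p_{j'}^T)$, to transfer the remaining attachment onto $p_{j'}^T$: this yields that $p_j^T$ and $p_{j'}^T$ are $(2\beta_{T+1}+\beta_{T+1})=3\beta_{T+1}$-attached w.r.t. $w_{i-1}$.

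But $3\beta_{T+1} = 3\cdot 8\cdot 3^{k-(T+1)+2} = 8\cdot 3^{k-T+2} = \beta_T$, and the outer inductive hypothesis says $p_1^T,\dots,p_T^T$ are $\beta_T$-well-separated w.r.t. the natural weights at the start of phase $T$, hence also w.r.t. $w_{i-1}\ge$ those weights (since well-separation is preserved as weights increase — the footnote to Lemma~\ref{lem: well-sep-invariant}). So $p_j^T$ and $p_{j'}^T$ being $\beta_T$-attached is a direct contradiction; actually since the attachment we derived is strict ($3\beta_{T+1} < \beta_T$ would be needed for a clean strict inequality, but Proposition~\ref{prop: meta-prop} gives strict inequalities $\beta<\beta_x+\beta_y$ at each step, so the accumulated bound is strictly less than $3\beta_{T+1}=\beta_T$, hence strictly less than $\beta_T$, contradicting $\beta_T$-well-separation). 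Therefore $c_j^T$ and $c_{j'}^T$ must be $\beta_{T+1}$-well-separated w.r.t. $w_{i-1}$, as claimed.

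The only real subtlety — and the step I would be most careful about — is getting the direction of the weight inequalities right so that both invocations of Proposition~\ref{prop: meta-prop} are legal: the proposition requires the point playing the role of ``$x$'' to have weight at most that of ``$p$''. Since Proposition~\ref{prop: attached_estimated_center} gives $w(c)\ge w(p)$ for the estimated-center/pivot pairs, in the first application the pivot $p_j^T$ is the lighter one (role of $x$) and $c_j^T$... wait — this needs the contrapositive reading: one uses $w(p_j^T)\le w(c_j^T)$ to treat $p_j^T$ as ``$x$'' and $c_j^T$ as ``$p$'', concluding separation of $p_j^T$ from things $c_j^T$ is attached to. I would therefore set it up as: assume $c_j^T,c_{j'}^T$ are $\beta_{T+1}$-attached; since $w(p_j^T)\le w(c_j^T)$ and $p_j^T$ is $\beta_{T+1}$-attached to $c_j^T$ while... no. The correct framing, which I will verify line by line, is to assume $p_j^T, p_{j'}^T$ are well-separated (true), treat $p_j^T$ as the light endpoint relative to $c_j^T$, and run Proposition~\ref{prop: meta-prop} ``in reverse'' to show no short chain through the $c$'s can exist — concretely, if $c_j^T$ and $c_{j'}^T$ were $\beta_{T+1}$-attached, then chaining $p_j^T\!-\!c_j^T\!-\!c_{j'}^T\!-\!p_{j'}^T$ (each link attached, with weights nondecreasing toward the $c$'s) forces $p_j^T\!-\!p_{j'}^T$ to be $3\beta_{T+1}$-attached, i.e. contradicts $\beta_T$-well-separation since $3\beta_{T+1}=\beta_T$. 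That is the whole argument; the weight bookkeeping is the part to double-check.
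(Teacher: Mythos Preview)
Your proposal is correct and follows essentially the same argument as the paper: assume the two estimated centers are $\beta_{T+1}$-attached, invoke Proposition~\ref{prop: attached_estimated_center} for the attachments $c_j^T$--$p_j^T$ and $c_{j'}^T$--$p_{j'}^T$ together with the weight inequalities $w_{i-1}(c_\bullet^T)\ge w_{i-1}(p_\bullet^T)$, and chain two applications of Proposition~\ref{prop: meta-prop} through the pivots to force $\beta_T < 3\beta_{T+1}=\beta_T$, a contradiction. The paper's proof is just a terser version of exactly this; your more explicit tracking of which point plays the role of $x$ (the lighter one) in each application of Proposition~\ref{prop: meta-prop}, and your observation that the strict inequality in that proposition is what rescues the borderline $3\beta_{T+1}=\beta_T$, are both correct and worth keeping.
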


\begin{proof}[Proof of Fact \ref{fact: centers-well-sep}]
Suppose to the contrary that $c_j^T$ and $c_{j'}^T$ are $\beta_{T+1}$-attached w.r.t. $w_{i-1}$. We also know by Proposition \ref{prop: attached_estimated_center} that $c_j^T, c_{j'}^{T}$ are $\beta_{T+1}$-attached to $p_j^T, p_{j'}^T$, respectively, w.r.t. $w_{i-1}$, and that $w_{i-1}(c_j^T) \geq w_{i-1}(p_j^T)$ and $w_{i-1}(c_{j'}^T) \geq w_{i-1}(p_{j'}^T)$. By the base case of the inner induction, $p_1^T, \cdots, p_T^T$ are $\beta_T$-well-separated w.r.t. $w_{i-1}$, so applying Proposition \ref{prop: meta-prop} gives a contradiction. 
\end{proof}

In the remainder of the proof, we need to show that, given the inductive assumption that Proposition \ref{prop: intermediate_attachment} holds for $t$, we have that at least one of statements (a)--(c) holds for $t+1$ (or when this phase is skipped, $t+2$). (Recall we already proved the first part of the inductive step, i.e.,  (\ref{eq: well-sep-inner-induction}) holds for $t+1$.) 

We case on which operation is executed to generate the new set of $t+1$ (or $t+2$) pivots from $p_1^t, \dots, p_t^t$, and on which of statements (a)--(c) holds in the inductive assumption for $t$.

\setcounter{case}{0}

\begin{case} \label{cs1addmove}
Case 1 of the Add Operation is executed.
\end{case}
Let $j^*$ be the specific $j$ chosen in Case 1 of the Add Operation. We need to show that for every $j \in [T]$, at least one of statements (a)--(c) holds for $t+1$. \\
If $j = j^*$, then $p_j^{t+1} = c_j^T$, so certainly $c_j^T$ is $\beta_{t+2}$-attached to $p_j^{t+1}$ w.r.t. $w_{t+1}$, so statement (b) is satisfied for $t+1$. \\
Now we consider $j \neq j^*$. We case on which of statements (a)--(c) holds for $j \in [T]$ in the inductive assumption for $t$. \\

\noindent \underline{If statement (a) holds in the inductive assumption for $t$.} 
Then by definition, $c_j^T$ is $\beta_{t+1}$-well-separated from $p_1^t, \dots, p_t^t$ w.r.t. $w_i$, so it must also be $\beta_{t+2}$-well-separated from $p_1^t, \dots, p_t^t$ w.r.t. $w_i$. Moreover, by Fact \ref{fact: centers-well-sep}, $c_j^T$ is $\beta_{T+1}$-well-separated (thus $\beta_{t+2}$-well-separated) from $p_{j^*}^{t+1} = c_{j^*}^T$ w.r.t. $w_{i-1}$, thus also w.r.t. $w_i$. Since $\{p_1^{t+1}, \dots, p_{t+1}^{t+1}\} = \{p_1^t, \dots, p_t^t, c_{j^*}^T\}$, statement (a) holds for $c_j^T$ and the inductive step $t+1$ is satisfied. 
\\

\noindent \underline{Else if statement (c) holds in the inductive assumption for $t$.} 
Note that $p_j^t = p_j^{t+1}$ for $j \neq j^*$ when Case 1 is executed. Moreover, if statement (c) holds, it must be the case that $t > T$, since $w_{i-1}(c_j^T) \geq w_{i-1}(p_j^T)$ by construction. But if $t > T$, then $w_{t+1} = w_{t} = w_i$. Also, $f(t+1, T) > f(t,T)$. So statement (c) still holds for $t+1$. 
\\

\noindent \underline{Else statement (b) holds in the inductive assumption for $t$.} 
In this case, since statement (c) does not hold, then $c_j^T$ is $\beta_{t+1}$-attached to $p_j^t$ w.r.t. $w_t$, and $w_t(c_j^T) \geq w_t(p_j^t)$ (the latter inequality holds because if $t=T$, this is true by definition, and if $t >T$, then $f(t,T) = \beta_T(t-T) > \beta_{t+1}$.) It is not possible for $c_j^T$ to be $\beta_{t+2}$-attached to $p_{j'}^t$ w.r.t. $w_{t+1}$ for $j'\in [t] \setminus j$ , since this would mean $c_j^T$ is $\beta_{t+2}$-attached to $p_{j'}^t$ w.r.t. $w_t$ as well. But if this were true, then we get a contradiction by Proposition \ref{prop: meta-prop}, because $p_j^t$ and $p_{j'}^t$ are $\beta_t$-well-separated w.r.t. $w_t$ by the inductive assumption. Also, it is not possible for $c_j^T$ to be $\beta_{t+2}$-attached to $p_{j^*}^{t+1} = c_{j^*}^T$ w.r.t. $w_{t+1}$, thus also w.r.t. $w_{i-1}$, by Fact \ref{fact: centers-well-sep}. So either $c_j^T$ is $\beta_{t+2}$-well-separated from $\{p_1^t, \dots, p_t^t, c_{j^*}^T\} = \{p_1^{t+1}, \dots, p_{t+1}^{t+1} \}$ w.r.t. $w_{t+1} = w_i$; or, $c_j^T$ is $\beta_{t+2}$-attached to $p_j^t = p_j^{t+1}$ w.r.t. $w_{t+1}$. So statement (a) or statement (b) must hold for $t+1$. 
\\

 In the remaining cases of the Add Operation, it is not possible that statement (a) holds in the inductive assumption for $t$, for then Case 1 of the Add Operation would have been executed. Thus we can omit statement (a) from the casework.

\begin{case} \label{cs2addmove}
Case 2 of the Add Operation is executed. 
\end{case}

\noindent \underline{If statement (c) holds in the inductive assumption for $t$.} The reasoning is the same as in Case \ref{cs1addmove} (except there is no distinguished $j^*$). \\

\noindent \underline{Else statement (b) holds in the inductive assumption for $t$.} Since statement (c) does not hold, we have $w_t(c_j^T) \geq w_t(p_j^t)$ (the reasoning is the same as in Case \ref{cs1addmove}). But by the assumptions of Case 2 of the Add Operation, this means that $c_j^T$ is $\beta_{t+2}$-well-separated from $x_{\alpha}$ w.r.t. $w_i$.  So either $c_j^T$ is $\beta_{t+2}$-well-separated from $\{p_1^t, \dots, p_t^t, x_{\alpha}\} = \{p_1^{t+1}, \dots, p_{t+1}^{t+1}\}$ w.r.t. $w_i$, or $c_j^T$ is $\beta_{t+2}$-attached to $p_j^t = p_j^{t+1}$ w.r.t. $w_i = w_{t+1}$, by the same reasoning as in Case \ref{cs1addmove} (except there is no distinguished $j^*$). So statement (a) or statement (b) must hold for $t+1$.

\begin{case} \label{cs3a_addmove}
Case 3 of the Add Operation is executed. 
\end{case}

Let $j^*$ be as in $j$ chosen in Case 3 of the Add Operation. We need to show that for every $j \in [T]$, at least one of statements (a)---(c) holds for $t+1$. \\
If $j=j^*$, then $p_j^{t+1} = x_{\alpha}$, and by definition of $j^*$, $c_j^T$ is $\beta_{t+2}$-attached to $p_j^{t+1}$ w.r.t. $w_i = w_{t+1}$, so statement (b) is satisfied for $t+1$.\\
Now we consider $j \neq j^*$. We case on which of statements (a)---(c) holds for $j \in [T]$ in the inductive assumption for $t$.  \\

\noindent \underline{If statement (c) holds in the inductive assumption for $t$.} The reasoning is the same as in Case \ref{cs1addmove}. \\

\noindent \underline{Else statement (b) holds in the inductive assumption for $t$.} Note that since $w_t(c_j^t) \geq w_t(p_j^t)$ and $j \neq j^*$  (for the same reasoning as in Case \ref{cs1addmove}),  $c_j^T$ is $\beta_{t+2}$-well-separated from $p_{j^*}^{t+1} = x_{\alpha}$ w.r.t. $w_i$ (by the assumptions of Case 3 of the Add Operation and the uniqueness of $j^*$). Now the reasoning is the same as in Case \ref{cs1addmove}.

\begin{case} \label{cs3b_addmove}
Case 4 of the Add Operation is executed. 
\end{case}

Let $f$ or $g$ be as in Case 4 of the Add Operation. We need to show that for every $j \in [T]$, at least one of the statements (a)---(c) holds for $t+2$. (Note phase $t+1$ is skipped.)
If $j = f$ or $g$, then $p_j^{t+2} = c_j^T$, so certainly $c_j^T$ is $\beta_{t+3}$-attached to $p_j^{t+2}$ w.r.t. $w_{t+2}$, so statement (b) is satisfied for $t+2$. 
Now we consider $j \neq f,g$. We case on which of statements (a)---(c) held for $j \in [T]$ in the inductive assumption for $t$. \\

\noindent \underline{If statement (c) holds in the inductive assumption for $t$.} The reasoning is nearly the same as in Case \ref{cs1addmove}, except now there are two distinguished $f,g$ (instead of just $j^*$), and we use that $f(t+2, T) > f(t,T)$ (instead of that  $f(t+1, T) > f(t,T)$).  \\

\noindent \underline{Else statement (b) holds in the inductive assumption for $t$.} Note that since $j \neq f,g$,  $c_j^T$ is $\beta_{t+3}$-well-separated from $p_{f}^{t+2} = c_{f}^T$ w.r.t. $w_i$ and from $p_{g}^{t+2} = c_{g}^T$ w.r.t. $w_i$, by Fact \ref{fact: centers-well-sep}. By the same reasoning as in Case \ref{cs1addmove}, either $c_j^T$ is $\beta_{t+3}$-attached to $p_j^{t+2} = p_j^t$ w.r.t. $w_i$, or $c_j^T$ is $\beta_{t+3}$-well-separated from $\{p_1^t, \dots, p_t^t, c_{f}^T, c_{g}^T\} = \{p_1^{t+2}, \dots, p_{t+2}^{t+2}\}$ w.r.t. $w_{i}$. So statement (a) or statement (b) must hold for $t+2$. \\

In all cases of the Exchange Operation, it is not possible that statement (a) holds in the inductive assumption for $t$, for then an Add Operation (specifically Case 1) would have been executed instead, as the Add Operation takes precedence over the Exchange Operation. Thus we can omit statement (a) from the casework. Let $j^*$ be the specific $j$ in the description of the Exchange Operation in the algorithm description.

\begin{case} \label{cs1_exchange_move}
Case 1 of the Exchange Operation is executed. 
\end{case}

We need to show that for every $j \in [T]$, at least one of statements (a)---(c) holds for $t+1$. \\
If $j=j^*$, there is nothing to prove because in Case 1 of the Exchange Operation, $j^* > T$, meaning $c_{j^*}^T$ is not defined.\\
Now we consider $j \neq j^*$. We case on which of statements (a)---(c) holds for $j \in [T]$ in the inductive assumption for $t$.  \\

\noindent \underline{If statement (c) holds in the inductive assumption for $t$.} The reasoning is the same as in Case \ref{cs1addmove}.  \\

\noindent \underline{Else statement (b) holds in the inductive assumption for $t$.} We first claim $c_j^T$ is $\beta_{t+1}$-attached to $p_j^t = p_j^{t+1}$ w.r.t. $w_{t+1}$ (note (b) states this w.r.t. $w_t$ only). For, given that $c_j^T$ is $\beta_{t+1}$-attached to $p_j^t$ w.r.t. $w_t$ and $w_t(c_j^T) \geq w_t(p_j^t)$, it must be the case that $c_j^T$ is $\beta_{t+1}$-well-separated from $p_{j'}^t$ w.r.t. $w_{t}$, for any $j' \neq j$ (by Proposition \ref{prop: meta-prop} and the fact that $p_j^t$ and $p_{j'}^t$ are $\beta_t$-well-separated w.r.t. $w_t$). But this means that $c_j^T$ is $\beta_{t+1}$-well-separated from $p_{j'}^t$ w.r.t. $w_{t+1}$ as well. Thus, if $c_j^T$ were not $\beta_{t+1}$-attached to $p_j^t$ w.r.t. $w_{t+1}$, then it would be $\beta_{t+1}$-well-separated from $\{p_1^t, \dots, p_t^t\}$ w.r.t. $w_{t+1} = w_i$, hence an Add Operation would have been executed instead as the Add Operation  takes precedence over the Exchange Operation. So we conclude that $c_j^T$ is $\beta_{t+1}$-attached to $p_j^t = p_j^{t+1}$ w.r.t. $w_{t+1}$.
\\
If $w_{t+1}(c_j^T) < w_{t+1}(p_j^t)$, then since $t+1 > T$,  statement (c) holds for $t+1$. Otherwise, $w_{t+1}(c_j^T) \geq w_{t+1}(p_j^t)$. Also, $w_{t+1}(x_{\alpha}) \geq w_{t+1}(p_{j^*}^t)$ and $x_{\alpha}$ is $\beta_{t+1}$-attached to $p_{j^*}^t$ w.r.t. $w_{t+1} = w_i$, by definition of the Exchange Operation. So noting that $p_j^t$ and $p_{j^*}^t$ are $\beta_t$-well-separated w.r.t. $w_{t+1}$ and applying Proposition \ref{prop: meta-prop} twice gives that $c_j^T$ is $\beta_{t+2}$-well-separated from $x_{\alpha}$ w.r.t. $w_{t+1}$ (and likewise for $x_{\gamma}$). We also know that $c_j^T$ is $\beta_{t+2}$-well-separated from $p_{j'}^t$ w.r.t. $w_{t+1}$, $j' \neq j$, by the previous paragraph. So either $c_j^T$ is $\beta_{t+2}$-attached to $p_j^t = p_j^{t+1}$ w.r.t. $w_{t+1}$, or $c_j^T$ is $\beta_{t+2}$-well-separated from $\{p_1^t, \cdots, p_t^t, x_{\alpha}, x_{\gamma}\} \setminus \{p_{j^*}^t\} = \{p_1^{t+1}, \dots, p_{t+1}^{t+1}\}$. Thus statement (a) or (b) holds for $t+1$.

\begin{case} \label{cs2a_exchange_move}
Case 2 of the Exchange Operation is executed.
\end{case}

We need to show that for every $j \in [T]$, at least one of statements (a)---(c) holds for $t+1$. \\
First consider $j=j^*$. Since as noted above statement (a) cannot hold for the inductive assumption for $t$, statement (b) or statement (c) must hold for the inductive assumption for $t$. Also, since $w_{t+1} = w_i$, it must be the case that $w_{t+1}(c_j^T) < w_{t+1}(p_j^t)$ by the assumptions of Case 2 of the Exchange Operation. Thus $w_t(c_j^T) \leq w_t(p_j^t)$. 

If $t = T$, then it must be the case that $w_t(c_j^T) = w_t(p_j^t)$ hence $c_j^T = p_j^t$ by the tiebreaking rule in the definition (\ref{eq: good_center}) of estimated center. If $t > T$, then since $w_t = w_{t+1} = w_p$, statement (c) holds for $t$. So in both cases, $w_i(c_j^T) \cdot d(c_j^T, p_j^t) \leq \beta_T(t-T)$. Moreover, by the definition of the Exchange Operation, $w_i(p_j^t) \cdot d(p_j^t, p_j^{t+1}) = w_i(p_j^t) \cdot d(p_j^t, x_{\alpha}) \leq \beta_{t+1} \cdot \textsf{OPT}$. So by Proposition \ref{prop: meta-prop}, $w_i(c_j^T) \cdot d(c_j^T, x_{\alpha}) \leq (\beta_{t+1} + f(t,T)) \textsf{OPT} \leq f(t+1, T) \textsf{OPT}$. Hence statement (c) holds for $t+1$.  

 For $j \neq j^*$, the reasoning is the same as in Case \ref{cs1_exchange_move}.

\begin{case} \label{cs2b_exchange_move}
Case 3 of the Exchange Operation is executed. 
\end{case}

We need to show that for every $j \in [T]$, at least one of statements (a)---(c) holds for $t+1$. \\
First consider $j=j^*$. By the assumptions of of Case 3 of the Exchange Operation, statement (b) holds for $t+1$.  

For $j \neq j^*$, the reasoning is the same as in Case \ref{cs1_exchange_move}.

\begin{case} \label{cs2c_exchange_move}
Case 4 of the Exchange Operation is executed. 
\end{case}

The reasoning is the same as in Case \ref{cs2b_exchange_move}, only now the roles of $x_{\alpha},x_{\gamma}$ are reversed.

\begin{case} \label{cs2d_exchange_move}
Case 5 of the Exchange Operation is executed. 
\end{case}

We need to show that for every $j \in [T]$, at least one of statements (1)--(3) holds for $t+2$. (Note step $t+1$ is skipped.) \\
First consider $j = j^*$. $c_{j^*}^T$ is $\beta_{t+3}$-attached to $p_{j^*}^{t+2}= c_{j^*}^T$ w.r.t. $w_{t+2}$, so statement (b) holds for for $t+2$. \\
For $j \neq j^*$, $c_j^T$ is $\beta_{t+3}$-well-separated from $c_{j^*}^T$ w.r.t. $w_i$ (Fact \ref{fact: centers-well-sep}). The reasoning is now the same as in Case \ref{cs1_exchange_move} once we note that $\beta_{t+2} > \beta_{t+3}$. 
\end{proof}
Having proven Proposition \ref{prop: intermediate_attachment}, the proof of Lemma \ref{lem: well-sep-invariant} is now complete.  
\end{proof}

\section{Omitted Proofs from Section \ref{sec: proofs_bounded_cost}} \label{appendix: proofs_bounded_cost}

\begin{proof}[Proof of Lemma \ref{lem: sufficiently_weighted_centers}]
 As in Lemma \ref{lem: far_points}, let $C_j = C_j^T \setminus C_j^{T^-}$ and let $S_{ji}$ be the set of elements in $C_j$ assigned to $y_i$ in the clustering of $X(T) \setminus X(T^-)$ induced by $P_T$. Let $S_{far, j} = \bigcup_{i : p(y_i) \neq p_j^T} S_{ji}$, $S_{near, j} = \bigcup_{i : p(y_i) = p_j^T} S_{ji}$, and $S_j$ be the elements in $C_j$ that are assigned to $p_j^T$ in the clustering of $X(T) \setminus X(T^-)$ induced by $P_T$. 

The proof is by induction. We have that 
\begin{equation} \label{eq: set_szs}
|C_j^T| = |C_j^{T^-}| + |C_j| = |C_j^{T^-}| + |S_{far,j}| + |S_{near, j}| + |S_j|
\end{equation}
(Note we use that there are no points in $C_j$ that are assigned to $p_{j'}^T$, $j' \neq j$, in the offline clustering induced by $P_T$, due to the greedy labelling rule. This is true as long as in the offline clustering induced by $P_T$ we break ties consistent with how the online algorithm breaks ties.) 

 First, we bound the last three terms. Let $w^t$ denote the natural weights at the end of phase $t$. 
\begin{equation} \label{eq: size_far}
|S_{far, j}| \leq k \cdot w^T(p_j^T) \leq k \cdot w^T(c_j^T)
\end{equation}
where the first inequality follows from Lemma \ref{lem: far_points} and the second inequality follows from the definition (\ref{eq: good_center}) of estimated center. Next, 
\begin{equation} \label{eq: sz_near}
|S_{near, j}| = \sum_{i:p(y_i) = p_j^T} |S_{ji}| \leq \sum_{i:p(y_i) = p_j^T} w^T(y_i) \leq k \cdot w^T(c_j^T)
\end{equation}
where the second inequality follows from the definition of $w^T$. The third inequality follows from the definitions of attachment digraph and estimated center: If $y_i \in \delta^-(p_j^T)$, then $w^T(y_i) \leq w^T(p_j^T)$ by construction of the attachment digraph $D(T)$. Otherwise, $y_i \in \delta^+(p_j^T)$, so by (\ref{eq: good_center}), $w^T(c_j^T) \geq w^T(y_i)$. Finally, 
\begin{equation} \label{eq: size_j}
|S_j| \leq w^T(p_j^T) \leq w^T(c_j^T)
\end{equation}
where the first inequality is by the definition of $w^T$ and the second inequality from (\ref{eq: good_center}).

For simplicity, let $h(t,k) = (2k+1)t$. 
Now we need to bound $|C_j^{T-}|$ in terms of $w^T(c_j^T)$. If $j \not \in [T^-]$, then $|C_j^{T^-}| = 0$. So assume $j \in [T^-]$. Inductively, we have that 
$$|C_j^{T^-}| \leq h(T^-,k) \cdot w^{T^-}(c_j^{T^-}).$$
We will prove that 
\begin{equation} \label{eq: sz_prev}
|C_j^{T^-}| \leq h(T^-, k) \cdot w^T(c_j^T).
\end{equation}

There are two cases to consider. 

\setcounter{case}{0}

\begin{case}
(a) holds in Lemma \ref{lem: well_attached_centers}.
\end{case}
$$|C_j^{T^-}| \leq h(T^-,k) \cdot w^{T^-}(c_j^{T^-}) \leq h(T^-,k) \cdot w^T(p_j^T) \leq h(T^-,k) \cdot w^T(c_j^T).$$

\begin{case}
(b) holds in Lemma \ref{lem: well_attached_centers}.
\end{case}

This means that $c_j^{T^-}$ is $\beta_{T+1}$-attached to $p_j^T$ w.r.t. $w^T$. If $w^T(c_j^{T^-}) 
 \leq w^T(p_j^T)$, then $w^T(c_j^{T^-}) \leq w^T(c_j^T)$. Otherwise, $w^T(c_j^{T^-}) > w^T(p_j^T)$, so $c_j^{T^-} \in \delta^+(p_j^T)$. By (\ref{eq: good_center}), $w^T(c_j^{T^-}) \leq w^T(c_j^{T})$. In both cases we have $w^T(c_j^{T^-}) \leq w^T(c_j^T)$, so building from the inductive assumption,
 $$|C_j^{T^-}| \leq h(T^-,k) \cdot w^{T^-}(c_j^{T^-}) \leq h(T^-, k) \cdot w^T(c_j^{T^-}) \leq h(T^-, k) \cdot w^T(c_j^T)$$
 which concludes the case. Putting equations (\ref{eq: set_szs}), (\ref{eq: size_far}), (\ref{eq: sz_near}), (\ref{eq: size_j}), (\ref{eq: sz_prev}) together gives
 $$|C_j^T| \leq \left(h(T^-,k) + 2k+1 \right) \cdot w^T(c_j^T) \leq h(T,k) \cdot w^T(c_j^T) = (2k+1)\cdot T \cdot w^T(c_j^T)$$
as desired. 

\end{proof}

\section{Decreasing the well-separation parameter is necessary.} \label{appendix: decreasing_well_sep}
Due to the relatively static nature of our algorithm (we only reset the pivots at most $k$ times), it is perhaps not surprising that we will have to initialize the well-separation parameter $\beta_1$ to be large, and then decrease it as we create more pivots. As a result, our algorithm will accumulate $\Omega(poly(k) 2^k)$ cost. Next, we show decreasing $\beta$ is necessary to avoid accumulating infinite cost.  

In fact, we require $\beta_{t+1} \leq \beta_t/2$. See Figure \ref{fig: halving_beta_ex}. Let the phase be $t=2$ and $p_1^2$ and $p_2^2$ denote the pivots in phase 2, associated with labels 1 and 2, respectively. Let $\beta = \beta_2$ and $\eps >0$. 

By the time phase 2 starts, the only points that have arrived are one point each at $p_1^2$ and $p_2^2$. Note that since $p_1^2$ and $p_2^2$ are distance $\beta$ apart, they satisfy the invariant that the pivots at the start of phase 2 should be $\beta = \beta_2$-well-separated w.r.t. the natural weights at the start of phase 2. Suppose we do not require $\beta_3 \leq \beta_2/2$. If this is the case, we may have infinitely many points arrive at $x$ and $y$, without ever being able to make three pivots. Moreover, since we assign points to pivots greedily, the infinitely many points at $x$ and $y$ would be assigned the same label, yielding a cluster with unbounded cost. To see that three pivots could never be made, note that $p_1^2$ prevents us from adding $x$ as a pivot, since $p_1^2$ and $x$ are $(\beta/2 + \eps)$-attached at any point in time (even though there are infinitely many points at $x$). Likewise, $p_2^2$ prevents us from adding $y$ as a pivot. So as long as at least one of $p_1^2$ or $p_2^2$ must remain a pivot (which must be the case if we are to increase the number of pivots to three, as there are only four locations in this example), it is impossible to have $\beta_3 > \beta_2/2$ without accumulating infinite cost. 

\begin{figure}[H]
\centering
\includegraphics[width=0.8\textwidth]{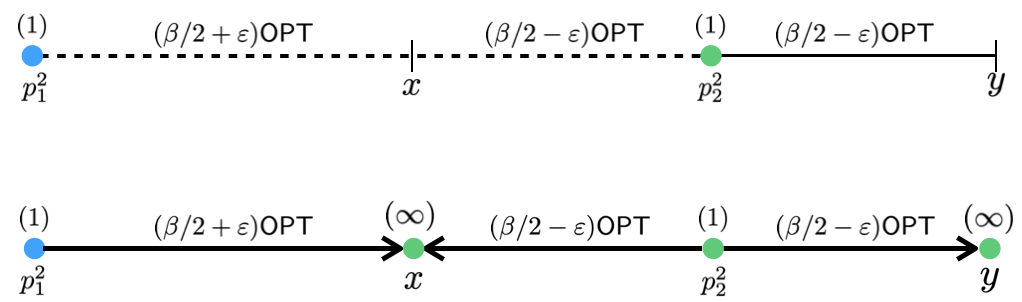}
\captionsetup{width=.9\linewidth}
\caption{The case discussed in Appendix \ref{appendix: decreasing_well_sep} showing that the well-separation parameter $\beta_t$ for phase $t$ must decrease by a factor of at least 2 with $t$.}
\label{fig: halving_beta_ex}
\end{figure}

\section{Avoiding label conflicts.} \label{appendix: label_conflicts}
Each pivot must be associated with a distinct label. The situation we have to handle delicately is when we are resetting pivots and two center estimates for two different online clusters are each close to the candidate new pivot; however, the candidate new pivot can only be associated with one label. See Figure \ref{fig: avoiding_label_conflicts_ex}. There are three present pivots, $p_1^2, p_2^3, p_3^3$, and the candidate new pivot is $x$, which is $\beta_4$-well-separated from the pivots $p_1^3$ and $p_2^3$ for labels 1 (blue) and 2 (green), respectively. However, $x$ is also ``close'' ($\beta/5$-attached, specifically) to the estimated centers $c_1^3$ and $c_2^3$ for clusters 1 and 2, respectively, so if we were to add $x$ as a pivot, it would be unclear as to whether we should give it the label 1 or 2. Instead, we will not actually add the candidate new pivot; we can show that the two center estimates $c_1^3$ and $c_2^3$ can be added as the pivots for the labels 1 and 2, respectively, while ensuring that all five pivots are $\beta/5$-well-separated. The old locations for pivots $p_1^3$ and $p_2^3$ are now the locations for the pivots of the new labels 4 (red) and 5 (purple). This situation is encoded in Case 4 of the Add Operation in Section \ref{sec: add_op}. 

 \begin{figure}[H]
\centering
\includegraphics[width=.9\textwidth]{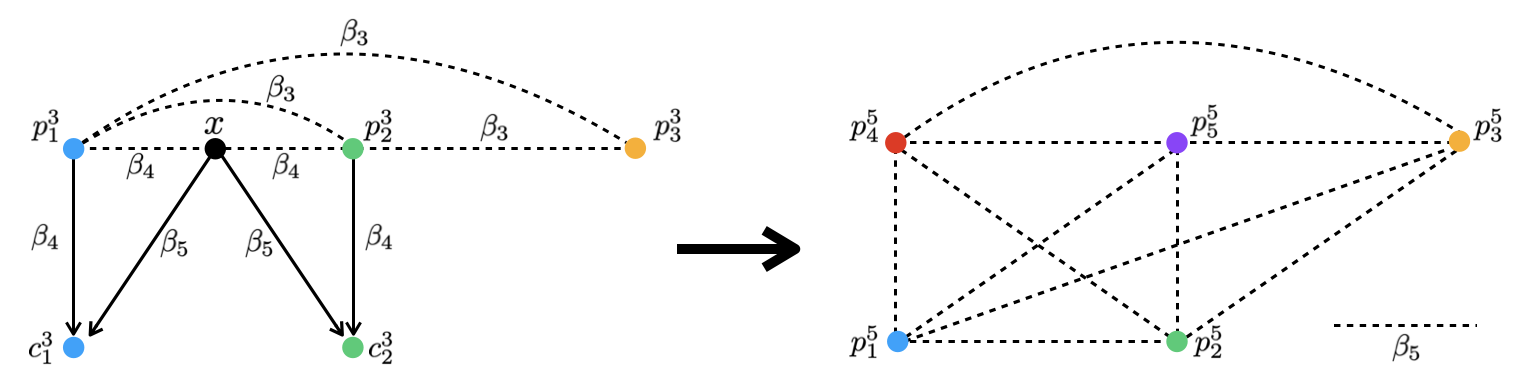}
\captionsetup{width=.9\linewidth}
\caption{The situation described in Appendix \ref{appendix: label_conflicts} in which we must add two pivots in one shot in order to avoid label conflicts. Dashed lines indicate well-separation and solid lines indicate attachment, labelled with the appropriate parameters. Arrows go from smaller to larger natural weights.}
\label{fig: avoiding_label_conflicts_ex}
\end{figure}

\end{document}